\documentclass[11pt]{article}

\usepackage{amssymb,amsmath,amsfonts}
\usepackage{graphicx,xcolor,enumitem}
\usepackage{epsfig}
\usepackage{amsthm}
\usepackage{bm}
\usepackage{subcaption}
\usepackage[round]{natbib}
\usepackage{csquotes}
\renewcommand{\mkbegdispquote}[2]{\itshape}
\usepackage[a4paper, twoside, hmarginratio=1:1, vmarginratio=1:1, left=1in,top=1in]{geometry}

\RequirePackage[breaklinks=true, hidelinks]{hyperref}
\usepackage{breakcites}

\newcommand{\rdot}{{r \wedge \cdot}}
\newcommand{\bd}{\mathbf{d}}
\newcommand{\bu}{\mathbf{u}}

\newcommand{\bA}{\mathbf{A}}
\newcommand{\bpi}{\bm{\pi}}
\newcommand{\p}{\mathbb{P}}
\newcommand{\F}{\mathbb{F}}
\newcommand{\E}{\mathbb{E}}
\newcommand{\R}{\mathbb{R}}
\newcommand{\cF}{{\mathcal F}}
\newcommand{\cU}{{\mathcal U}}
\newcommand{\id}{\mathbf{1}}
\newcommand{\as}{\mbox{{\rm a.s.}}}

\newcommand{\hatE}{ \hat{\mathbb{E}} }
\newcommand{\hE}{ \mathbb{E}^h }

\newcommand{\ang}[1]{\langle  #1 \rangle } 

\newtheorem{theorem}{Theorem}

\newtheorem{assumption}[theorem]{Assumption}

\newtheorem{corollary}[theorem]{Corollary}

\newtheorem{definition}[theorem]{Definition}

\newtheorem{lemma}[theorem]{Lemma}

\newtheorem{remark}[theorem]{Remark}

\theoremstyle{definition}

\numberwithin{equation}{section}
\numberwithin{theorem}{section}

\begin{document}

\title{Time-Inconsistency with Rough Volatility\footnote{An earlier version of this paper was circulated and cited under the title ``Time-consistent feedback strategies with Volterra processes". The authors would like to thank two anonymous referees and the editors for their careful reading and valuable comments, which have greatly improved the manuscript. Bingyan Han is supported by UIC Start-up Research Fund (Reference No: R72021109).}}
\author{Bingyan Han \thanks{Division of Science and Technology, BNU-HKBU United International College, Zhuhai, Guangdong, China, bingyanhan@uic.edu.cn}
	\and Hoi Ying Wong\thanks{Department of Statistics, The Chinese University of Hong Kong, Hong Kong SAR, China, hywong@cuhk.edu.hk}
}

\date{December 21, 2021}
\maketitle

\maketitle

\begin{abstract}
	 In this paper, we consider equilibrium strategies under Volterra processes and time-inconsistent preferences embracing mean-variance portfolio selection (MVP). Using a functional It\^o calculus approach, we overcome the non-Markovian and non-semimartingale difficulty in Volterra processes. The equilibrium strategy is then characterized by an extended path-dependent Hamilton-Jacobi-Bellman equation system under a game-theoretic framework. A verification theorem is provided. We derive explicit solutions to three problems, including MVP with constant risk aversion, MVP for log returns, and a mean-variance objective with a linear controlled Volterra process. We also thoroughly examine the effect of volatility roughness on equilibrium strategies. Numerical experiments demonstrate that trading rules with rough volatility outperform the classic counterparts.
	\\[2ex] 
	\noindent{\textbf {Keywords:} Time-inconsistency, rough volatility, functional It\^o calculus, mean-variance portfolio selection, Volterra Heston model.}
	\\[2ex]
	\noindent{\textbf {Mathematics Subject Classification:} 91G80, 91A80, 60G22, 60H20.} \\
	\noindent{\textbf {JEL Classification:} C72, C73, G11.} 
\end{abstract}

\section{Introduction}
Portfolio selection is a fundamental and leading problem in mathematical finance. Pioneered by \cite{markowitz1952portfolio}, mean-variance portfolio selection (MVP) is well recognized as a cornerstone of modern portfolio theory. Its intuitive and flexible formulation has attracted the attention of numerous researchers, who have sought to strengthen the original framework. Examples include, but are not limited to, \cite{li2000optimal,zhou2000continuous,basak2010dynamic,czichowsky2013time,bjork2014mean,he2018equilibrium,dai2020log}. In contrast to expected utility theory \citep{merton1969lifetime}, MVP suffers from time-inconsistency induced by the variance operator. When an investor with an initial value of $(t, x)$ perceives that the derived strategy is no longer optimal at a later state $(s, X_s)$ for $s > t$, time-inconsistency occurs. As noted in \cite{strotz1955myopia}, time-consistency is a fundamental requirement for any reasonable strategic decision-making and optimization. An agent should only choose strategies from which he/she will not deviate. \cite{basak2010dynamic} identifies that MVP has an adjustment term that provides ``an incentive for the investor to deviate from his optimal strategy at a later time."  However, studies such as \cite{li2000optimal,zhou2000continuous} have neglected time-inconsistency and have only provided the pre-committed strategy. 

Time-inconsistency has generated an astonishing amount of controversial opinions toward the notion of {\it optimality}. A remedy is to consider an equilibrium strategy. The intent is to formulate a game between the current agent and his/her future selves and to then derive an equilibrium of the game as the strategy. Several treatments are available based on different methodologies. \cite{ekeland2008investment,bjork2014mean,bjork2017time} follow the classic dynamic programming framework and derive an extended Hamilton-Jacobi-Bellman (HJB) equation to characterize the equilibrium. \cite{hu2012time} formulates the game in an open-loop optimization and derives the equilibrium via the stochastic maximum principle. \cite{yong2012time,czichowsky2013time} consider a partition on the whole planning time horizon and obtain the equilibrium by taking limits. With a fixed-point argument, \cite{huang2018strong} further distinguishes between strong and weak equilibria. Related discussions include \cite{he2018equilibrium} and the references therein. In this paper, we exploit the extended HJB methodology for its wider applications, including MVP. 

\cite{liu2007portfolio,basak2010dynamic,dai2020log} document that hedging demand from stochastic volatility can comprise a substantial percentage of total equilibrium stock exposure. Realistic modeling for volatility then plays an indispensable role in investment decision-making. We use rough volatility models, recently proposed by \cite{gatheral2018volatility}. In this seminal work, \cite{gatheral2018volatility} uses the fractional Brownian motion (fBm) to demonstrate the roughness of volatility. By developing a rigorous statistical estimation and inference, \cite{fukasawa2019volatility} further confirms that volatility is even rougher than reported in \cite{gatheral2018volatility}. These models are consistent with some stylized facts of financial time series and have several desired theoretical properties. They capture the term structure of the implied volatility (IV) surface, especially for the explosion of the at-the-money (ATM) skew when the maturity is close to zero \citep{alos2007short,gatheral2018volatility,el2019characteristic}, which smooth volatility models fail to do. Examples of rough volatility models include the fBm \citep{gatheral2018volatility}, the fractional Ornstein-Uhlenbeck (fOU) process \citep{fouque2019optimalMF,fouque2018optimalSIFN}, the rough Bergomi (rBergomi) model \citep{bayer2016pricing}, the fractional Heston model \citep{guennoun2018asymptotic}, and the rough Heston model \citep{el2019characteristic}. The rough Heston model has received particular attention and has been extended to the Volterra Heston model \citep{abi2017affine} and the affine forward variance (AFV) model \citep{gatheral2019affine}. The economic interpretation of rough volatility is explained by \cite{el2019characteristic} via metaorders in high-frequency trading, by \cite{jusselin2018no} via the no-arbitrage property, by \cite{glasserman2019buy} via heterogeneity in near-term downside risk, and by \cite{gatheral2018volatility} via long memory behavior. However, like the debate on the short-range or long-range dependence in volatility, the understanding of rough volatility is still in development.

We are interested in the equilibrium strategies under a more realistic stochastic financial environment delineated by rough volatility. We are the first to explore time-inconsistency with rough volatility, although related works, such as \cite{fouque2019optimalMF,han2019mean,han2019merton,fouque2018optimalSIFN,bauerle2020portfolio}, and the references therein are available for alternative portfolio problems under rough volatility. Despite empirical evidence of rough volatility, its non-Markovian and non-semimartingale nature challenges the classic methodology of equilibria. We also consider Volterra processes for generality. Previous results in the literature cannot be directly adopted to tackle equilibrium strategies under Volterra processes. \cite{hu2012time} accounts for the linear non-Markovian systems, but the application is limited to linear-quadratic control problems. The locally mean-variance efficient (LMVE) approach in \cite{czichowsky2013time} can deal with semimartingales but the Volterra process is not typically a semimartingale.

To tackle these difficulties, we adopt a general methodology called functional It\^o calculus with applications far beyond time-inconsistency and rough volatility. \cite{dupire2019functional} first develops a pathwise calculus for non-anticipative functionals, motivated by pricing and hedging path-dependent derivatives. By defining the time and spatial derivatives, the classic It\^o formula is extended to the functional It\^o formula for path-dependent functionals in \cite{cont2013functional,dupire2019functional}. The functional It\^o calculus is useful for a wide class of optimization and decision-making problems. To better incorporate financial and insurance risks, \cite{siu2016functional} considers the applications of the functional It\^o calculus in convex risk measures with non-Markovian jump-diffusion processes. \cite{cvitanic2017moral} uses Dupire's functional It\^o calculus to motivate their definition of contracts in principal-agent problems. Under the framework of \cite{cont2013functional,dupire2019functional}, \cite{schied2018model} presents two pathwise versions of the master formula in Fernholz's stochastic portfolio theory and elucidates their performance with empirical data. \cite{saporito2019stochastic} applies the functional It\^o calculus to stochastic differential games and optimal control problems with delay. 

However, the aforementioned literature relies on a semimartingale assumption, whereas the Volterra Heston model and general Volterra processes are non-semimartingales. Recently, \cite{viens2017martingale} has developed a powerful toolkit for the functional It\^o formula to analyze functionals of Volterra processes. Heuristically, their approach aims to ``recover" the flow property of Volterra processes by incorporating an auxiliary non-anticipative process $\Theta^t$ (\ref{Eq:Theta}) into the path $\omega$. Their elegant results enable us to derive the extended path-dependent HJB (PHJB) equation system in Theorem \ref{Thm:Verification}, which extends the results of \cite{zhao2014consumption,bjork2014mean,bjork2017time,dai2020log} and has potential applications in addition to portfolio selection under rough volatility. Nevertheless, we stress that the development of the PHJB system is non-trivial even given the existing results. We also offer an example of the unsolved future problem suggested in \cite{bjork2017time}:
\begin{displaycquote}{bjork2017time}
	``The present theory depends critically on the Markovian structure. It would be interesting to see what can be done without this assumption.''
\end{displaycquote}

In Section \ref{Sec:Example}, we apply the general framework for Volterra processes to time-consistent (TC) MVP under rough volatility. We refer to the agent under a rough stochastic environment as a rough investor. Should a rough investor buy more or less when volatility is rougher? When should he/she change his/her preference to rough? Furthermore, how profitable are rough strategies according to empirical data? We observe that volatility roughness dramatically increases the demand for hedging. Additionally, rough investors' attitude toward roughness depends on the payoff functional in mind. By deriving explicit solutions to classic problems in \cite{basak2010dynamic,dai2020log}, we present a thorough analysis and disentangle the complicated relationship between time-inconsistency and roughness. Our findings advocate rough volatility models as a promising alternative for the classic models adopted in \cite{basak2010dynamic,dai2020log}.
\begin{itemize}
	\item Section \ref{Sec:ConstMV} considers TC-MVP with constant risk aversion \citep{basak2010dynamic} under the Volterra Heston model. It is referred to as the const-MV case. In our sensitivity analysis, we find that when the investment horizon is long, the const-MV strategy suggests high demand on stocks with smoother volatility. For a short investment horizon, the const-MV strategy increases exposure to stocks with rougher volatility. We refer to this phenomenon as the {\it investment horizon effect}. In the const-MV case, the change point in this effect is irrelevant to the investor's risk aversion. In a simulation study, we find that a rough investor demands up to 40\% more than a smooth investor.
	\item Section \ref{Sec:LogMV} investigates TC-MVP for log returns \citep{dai2020log} under a rough stochastic environment. It is referred to as the log-MV case. The investment horizon effect in the const-MV case is retained in the log-MV case. However, the main difference is that heterogeneity in risk aversion changes when investors start to prefer rough. A more risk-averse investor prefers rough much earlier. In general, the log-MV case prohibits bankruptcy and is thus more conservative than the const-MV case. Rough investors increase their stock demand by at most 9\% compared with \cite{dai2020log}, as detailed in the simulation study. In addition, the log-MV criterion implies different roughness-related behaviors, compared with constant relative risk aversion (CRRA) utility.
	\item Section \ref{Sec:LC-MV} studies a problem with the mean-variance (MV) objective and a linear controlled Volterra process. Distinct from portfolio selection, the convolution and the control appear together in the state process \eqref{Eq:LC-MVstate}. We manage to derive an explicit strategy, which demonstrates the potential application of our framework to problems with controlled Volterra state processes.
	\item Via the empirical study in Section \ref{Sec:Empirical}, we highlight that trading strategies with rough volatility dominate all classic counterparts in \cite{basak2010dynamic,dai2020log}. They gain greater terminal wealth with a better Sharpe ratio, even under the volatile U.S. equity markets from May 2018 to April 2019. The empirical performance substantiates the claim that our proposed strategies are more effective in capturing market movements.
\end{itemize}

The reminder of this paper is organized as follows. Section \ref{Sec:Formulation} describes the general framework with Volterra processes. Section \ref{Sec:VolHeston} reviews the Volterra Heston model as a main example of rough volatility. Section \ref{Sec:EP-HJB} derives the extended PHJB equation system. Section \ref{Sec:Example} discusses the solutions to three problems using the MV objective. Section \ref{Sec:Num} presents the numerical study. Finally, Section \ref{Sec:Con} concludes the paper. The functional It\^o calculus in \cite{viens2017martingale} is summarized in Appendix \ref{Sec:FuncIto} for a self-contained article. All mathematical proofs are deferred to Appendix \ref{Sec:Proofs}.

\section{Problem formulation}\label{Sec:Formulation}
\subsection{Volterra processes}
Let $T>0$ be a deterministic finite investment horizon. We first present a general model with applications beyond rough volatility. Consider a controlled $n$-dimensional stochastic Volterra integral equation (SVIE) on $[0, T]$:
\begin{equation}\label{Eq:Volterra0}
X^\bu_t = x + \int^t_0 \mu(t; r, X^\bu_\rdot, \bu(r, X^\bu_\rdot)) dr + \int^t_0 \sigma(t; r, X^\bu_\rdot, \bu(r, X^\bu_\rdot)) dW_r,
\end{equation}
where $X^\bu_\rdot$ refers to the whole past path of the process $(X^\bu_s)_{0 \leq s \leq r}$, $W$ is a $d$-dimensional standard Brownian motion, and $\mu$, $\sigma$ are adapted with suitable dimensions. The feedback strategy $\bu$ is a $k$-dimensional deterministic measurable function. We provide a rigorous definition of admissible strategies in Definition \ref{Def:U}. The main example of (\ref{Eq:Volterra0}) in this paper is the two-dimensional process $(M^\bu, \nu)$ in (\ref{Eq:wealth}) and (\ref{Eq:VoltHeston}).  It is also worth mentioning again that SVIE (\ref{Eq:Volterra0}) is non-Markovian and non-semimartingale in general.

We consider feedback strategies $\bu(r, X^\bu_\rdot)$ that depend on the whole path $X^\bu_\rdot$ instead of solely on the current value $X^\bu_r$ of the process as in \cite{basak2010dynamic,bjork2017time,dai2020log}. This setting is more reasonable because investors can always base their decisions on the observed history of the process. 

Before formally defining the equilibrium feedback strategies, we impose the following standing assumption throughout this paper.
\begin{assumption}\label{Assum:SVIE}
	Controlled SVIE (\ref{Eq:Volterra0}) admits a unique in law continuous weak solution $(X^\bu, W)$ on some complete probability space $(\Omega, \cF, \p)$ with a filtration $\F = \{ \cF_t \}_{ 0 \leq t \leq T}$ that satisfies the usual conditions. $(X^\bu, W)$ is $\F$-adapted and
	\begin{equation}
	\E \Big[ \sup_{ 0 \leq t \leq T} |X^\bu_t |^p \Big] < \infty,
	\end{equation}
	for any $ p \geq 1$.
\end{assumption}

As noted in the Appendix of \cite{viens2017martingale}, a sufficiently large moment constant $p$ is enough. However, this $p$ is not explicit. As our primary focus is time-inconsistency, we do not pursue potentially more general conditions validating Assumption \ref{Assum:SVIE} in this paper. We verify Assumption \ref{Assum:SVIE} for some examples in Section \ref{Sec:Example} and refer interested readers to \cite{abi2017affine,viens2017martingale} for further results. Indeed, as $\bu$ is a feedback strategy, we can regard $\mu^\bu$ and $\sigma^\bu$ in \eqref{Eq:muu} as the drift and the diffusion, respectively. The results without controls in \cite{abi2017affine,viens2017martingale} then become applicable to our cases. Compared with Assumption 3.1 of \cite{viens2017martingale}, Assumption \ref{Assum:SVIE} further requires (\ref{Eq:Volterra0}) to admit a unique in law solution. For a given feedback strategy $\bu$, it is natural for our problem to attain a unique reward functional (\ref{Eq:Reward}) under $\bu$, which requires the law of SVIE (\ref{Eq:Volterra0}) to be unique. We also need a continuous solution to (\ref{Eq:Volterra0}). This condition is relatively mild when a feedback strategy is considered. The concatenated path (\ref{Eq:omega}) is later justified as continuous under this condition. We fix a weak solution $(X^\bu, W)$ to (\ref{Eq:Volterra0}) once the feedback strategy $\bu$ is given. Although the probability space and Brownian motion are also parts of the solution to SVIE \eqref{Eq:Volterra0}, the distribution of $X^\bu$ is unique. We denote $\E[\cdot]$ and $\E[\cdot|\cF_t]$ as the expectation and conditional expectation, respectively, under the probability measure $\p$, which is a part of the weak solution under a generic control $\bu$. As in \cite{viens2017martingale}, we use $\F = \F^{X^\bu, W}$.

\subsection{Example: Volterra Heston model}\label{Sec:VolHeston}
Consider a two-dimensional standard Brownian motion $W \triangleq (W_1, W_2)$. An important example of (\ref{Eq:Volterra0}) is a rough version of the classic Heston model \cite{el2019characteristic}:
\begin{align}\label{Eq:RoughHeston}
\nu_{t}= \nu_0 & + \frac{1}{\Gamma(H + \frac{1}{2})}  \int_0^t (t - r)^{H - 1/2} \kappa (\phi - \nu_r ) dr  + \frac{1}{\Gamma(H + \frac{1}{2})} \int_0^t (t - r)^{H-1/2} \sigma \sqrt{\nu_r} d B_r, 
\end{align}
where $\Gamma(\cdot)$ is the Gamma function and $H$ is the Hurst parameter. $dB_r = \rho dW_{1r} + \sqrt{1 - \rho^2} dW_{2r}$ and $\nu_0, \kappa, \phi$, and $\sigma$ are positive constants. The correlation $\rho$ between stock price and volatility is also constant. When $H = 1/2$, the model is reduced to the classic Heston model adopted in \cite{basak2010dynamic,dai2020log}. The volatility trajectories of (\ref{Eq:RoughHeston}) have almost surely H\"older regularity $ H - \varepsilon$, for all $\varepsilon > 0$, as shown in \cite{el2019characteristic}. Therefore, (\ref{Eq:RoughHeston}) is called the rough Heston model and the Hurst parameter $H$ is an index of volatility roughness. The smaller $H$ is, the rougher the volatility is. With $H$ of order $0.1$, \citet[Section 5.2]{el2019characteristic} shows that the rough Heston model provides remarkable fits for volatility skews, including cases of extreme short maturity. Thus, it captures the near-term risks implied by ATM skew explosion.

Extending the rough Heston model (\ref{Eq:RoughHeston}), the Volterra Heston model in \cite{abi2017affine} reads as follows:
\begin{equation}\label{Eq:VoltHeston}
\nu_{t}= \nu_0 + \kappa \int_0^t K(t - r)\left(\phi - \nu_r \right) dr + \int_0^t K(t - r) \sigma \sqrt{\nu_r} d B_r,
\end{equation}
where $K(\cdot)$ is the kernel function. By setting $K(t) = \frac{t^{H-1/2}}{\Gamma(H+1/2)}$, namely the fractional kernel, (\ref{Eq:VoltHeston}) recovers (\ref{Eq:RoughHeston}). In line with \cite{abi2017affine}, we impose the following assumption on the kernel function.

\begin{assumption}\label{Assum:kernel}
	The kernel $K$ is strictly positive and completely monotone. There exists $\tau \in(0,2]$ such that $\int_{0}^{h} K(t)^{2} d t=O\left(h^\tau \right)$ and $\int_{0}^{T}(K(t+h)-K(t))^{2} d t=O(h^\tau)$ for every $T<\infty$. 
\end{assumption}

Like \cite{abi2017affine,basak2010dynamic,dai2020log}, the risky asset (stock) price $S_t$ is postulated as follows:
\begin{equation}\label{Eq:stock}
dS_t = S_t (\varUpsilon_t + \theta \nu_t) dt + S_t \sqrt{\nu_t} dW_{1t}, \quad S_0 > 0,
\end{equation}
with a deterministic bounded risk-free rate $\varUpsilon_t>0$ and constant $\theta \neq 0$. The market price of risk, or risk premium, is then given by $\theta \sqrt{\nu_t}$. Such a risk premium specification is widely used in the literature, such as in \citet[Section 2.2]{basak2010dynamic} and \cite{dai2020log,bauerle2020portfolio}. The risk-free rate $\varUpsilon_t >0$ is the return of a risk-free asset available in the market. Indeed, a general Heston specification \citep{liu2007portfolio,basak2010dynamic,dai2020log} is also tractable, as indicated in Remark \ref{Rem:GenHeston}. We adopt (\ref{Eq:stock}) to simplify the presentation.

We quote the following result from \cite{abi2017affine}, which guarantees the existence and weak uniqueness of the Volterra Heston model.
\begin{theorem}[{\citet[Theorem 7.1]{abi2017affine}}]
	Under Assumption \ref{Assum:kernel}, the stochastic Volterra equation (\ref{Eq:VoltHeston})-(\ref{Eq:stock}) has a unique in law $\R_{\geq 0} \times \R_{\geq 0}$-valued continuous weak solution for any initial condition $(S_0, \nu_0) \in \R_{\geq 0} \times \R_{\geq 0}$.
\end{theorem}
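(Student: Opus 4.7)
The plan is to construct the variance $\nu$ first as a non-negative weak solution of the scalar stochastic Volterra equation (\ref{Eq:VoltHeston}), and then to obtain $S$ by direct integration of (\ref{Eq:stock}) with $\nu$ as a known input. Once $\nu$ is in hand, the stock equation is a linear SDE in $S$ driven by $W_1$ with continuous adapted coefficients, whose unique weak solution is the stochastic exponential
\begin{equation*}
S_t = S_0 \exp\!\left(\int_0^t \bigl(\varUpsilon_s + (\theta - \tfrac{1}{2}) \nu_s\bigr)\, ds + \int_0^t \sqrt{\nu_s}\, dW_{1s}\right),
\end{equation*}
provided $\int_0^T \nu_s\, ds < \infty$ almost surely; this in turn follows from the uniform-in-$t$ $L^1$ bound on $\nu$ established below. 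The non-trivial content therefore lies entirely in $\nu$.

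For existence I would use a kernel-mollification argument. Approximate $K$ by a sequence $K^n$ of smooth, strictly positive, completely monotone kernels converging to $K$ in a manner compatible with the two $O(h^\tau)$ bounds in Assumption \ref{Assum:kernel}. For each $K^n$ the approximate Volterra equation is regular enough, via the associated infinite-dimensional affine Markovian lift, to admit a non-negative continuous weak solution $\nu^n$, with non-negativity inherited from the lift staying in the positive cone (or, alternatively, from a Yamada--Watanabe-style pathwise argument tailored to the square-root coefficient $\sigma\sqrt{\cdot}$). Assumption \ref{Assum:kernel} then yields a uniform moment bound $\E[\nu^n_t] \le C$ via a Volterra-type Gronwall inequality, together with a H\"older increment estimate
\begin{equation*}
\E\bigl[(\nu^n_t - \nu^n_s)^2\bigr] \le C\, |t-s|^{\tau},
\end{equation*}
produced by It\^o's isometry on the stochastic integral and the analogous computation on the drift. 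Kolmogorov's criterion and Prokhorov's theorem give tightness on $C([0,T]; \R_{\geq 0})$; a Skorokhod coupling then lets one pass to the limit in (\ref{Eq:VoltHeston}) using the $L^2$-convergence of the kernels, and the limit $\nu$ is a non-negative continuous weak solution.

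The main obstacle is uniqueness in law, where the non-Markovian and non-semimartingale nature of $\nu$ precludes any direct CIR-type comparison argument. Here I would exploit the affine structure as in \cite{abi2017affine}: the conditional Fourier--Laplace functional
\begin{equation*}
\E\left[\exp\!\left(u\, \nu_T + \int_t^T f(s)\, \nu_s\, ds \right) \,\middle|\, \cF_t\right]
\end{equation*}
should equal a deterministic exponential-affine expression involving the solution $\psi$ of a Riccati--Volterra equation driven by $K$. Proving that this Riccati--Volterra equation admits a unique global solution --- and this is precisely where complete monotonicity of $K$ is essential, since it forces the sign and monotonicity properties that prevent finite-time blow-up --- pins down all finite-dimensional distributions of $\nu$ and thereby yields weak uniqueness. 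Weak uniqueness for the coupled pair $(S, \nu)$ follows automatically, since conditional on $\nu$ the law of $S$ is determined by the Brownian driver $W_1$ through the stochastic exponential above.
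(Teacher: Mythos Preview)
The paper does not supply a proof of this statement at all: the theorem is quoted verbatim from \cite{abi2017affine} (their Theorem~7.1) and used as a black box, so there is no in-paper argument to compare against. Your sketch is nonetheless a faithful outline of the strategy actually carried out in \cite{abi2017affine}: existence of $\nu$ via kernel approximation, uniform moment and H\"older-increment bounds from the $O(h^\tau)$ conditions in Assumption~\ref{Assum:kernel}, tightness and Skorokhod passage to the limit; weak uniqueness via the affine Fourier--Laplace transform and the associated Riccati--Volterra equation; and recovery of $S$ as the stochastic exponential once $\nu$ is fixed. One minor correction: in \cite{abi2017affine} complete monotonicity of $K$ is used primarily to guarantee non-negativity of $\nu$ (through the existence and non-negativity of the resolvent of the first kind), rather than to control blow-up of the Riccati--Volterra equation, whose global well-posedness rests on sign and growth conditions for the Riccati nonlinearity together with the local $L^2$ integrability of $K$.
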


\begin{remark}
	Pathwise uniqueness for (\ref{Eq:VoltHeston})-(\ref{Eq:stock}) remains an open problem in general. We mention \citet[Proposition B.3]{abi2019multifactor} as a related result with kernel $K \in C^1([0, T], \R)$ and \citet[Proposition 8.1]{mytnik2015uniqueness} for certain smooth kernels. However, the strong uniqueness of (\ref{Eq:VoltHeston})-(\ref{Eq:stock}) is left open for singular kernels. For weak solutions, Brownian motion is free to construct as needed. In the sequel, we fix a solution $(S, \nu, W_1, W_2)$ to (\ref{Eq:VoltHeston})-(\ref{Eq:stock}) as other solutions share the same law. Furthermore, the boundary point $0$ may be reachable for the Volterra Heston model and the property of the boundary point $0$ is left open.
\end{remark}

Multiply the dollar amount of wealth in the stock by $\sqrt{\nu_t}$ and denote it as the investment strategy $\bu$. Let $M^\bu_t$ be the wealth process. Then, $M^\bu_t$ satisfies
\begin{equation}\label{Eq:wealth}
d M^\bu_t = \big(\varUpsilon_t M^\bu_t + \theta \sqrt{\nu_t} u_t \big) dt + u_t dW_{1t}, \quad M^\bu_0 > 0.
\end{equation}
It is clear that $(M^\bu, \nu)$ in (\ref{Eq:wealth}) and (\ref{Eq:VoltHeston}) is a special case of the Volterra process (\ref{Eq:Volterra0}). We handle the general time-inconsistent problems in (\ref{Eq:Volterra0}) in a unified way, whereas the applications focus on (\ref{Eq:wealth}) and (\ref{Eq:VoltHeston}).

\subsection{Equilibrium under time-inconsistent preferences}

For time-inconsistent problems, we must consider the state process starting from time $t \in [0, T)$. For $s \geq t$, the general state process (\ref{Eq:Volterra0}) can be decomposed as follows:
\begin{align*}
X^\bu_s = x &+ \int^t_0 \mu(s; r, X^\bu_\rdot, \bu(r, X^\bu_\rdot)) dr + \int^t_0 \sigma(s; r, X^\bu_\rdot, \bu(r, X^\bu_\rdot)) dW_r \\
&+ \int^s_t \mu(s; r, X^\bu_\rdot, \bu(r, X^\bu_\rdot)) dr + \int^s_t \sigma(s; r, X^\bu_\rdot, \bu(r, X^\bu_\rdot)) dW_r.
\end{align*}
Following \cite{viens2017martingale}, we define
\begin{align}\label{Eq:Theta}
\Theta^{t, \bu}_s \triangleq x & + \int^t_0 \mu(s; r, X^\bu_\rdot, \bu(r, X^\bu_\rdot)) dr + \int^t_0 \sigma(s; r, X^\bu_\rdot, \bu(r, X^\bu_\rdot)) dW_r, \;  t \leq s \leq T.
\end{align}
$t \mapsto \Theta^{t, \bu}_s$ is a semimartingale for $0 \leq t \leq s$. Using $\Theta^{t, \bu}_s$, we concatenate a path $\omega$ as
\begin{equation}\label{Eq:omega}
\omega_s =  (X^\bu \otimes_t \Theta^{t, \bu})_s \triangleq X^\bu_s \id_{\{0 \leq s < t\}} + \Theta^{t, \bu}_s \id_{\{t \leq s \leq T\}}.
\end{equation}
Although $\omega$ is defined on $[0, T]$, it is adapted to $\cF_t$. $\omega$ is $\p$-$\as$ continuous.

An interpretation of $\Theta^{t, \bu}_s$ is that it can be written as follows:
\begin{equation}
\Theta^{t, \bu}_s = \E \Big[ X^\bu_s - \int^s_t \mu(s; r, X^\bu_\rdot, \bu(r, X^\bu_\rdot)) dr \Big| \cF_t \Big],
\end{equation}
which is related to the modified forward processes \citep{keller2018affine}. It represents the current view of process distributions in the future.

Particularly, if we consider the Volterra Heston model (\ref{Eq:VoltHeston}), then
\begin{equation}\label{Eq:VolTheta}
\Theta^t_s = \nu_0 + \kappa \int_0^t K(s - r)\left(\phi - \nu_r \right) d r + \int_0^t K(s - r) \sigma \sqrt{\nu_r} d B_r,
\end{equation}
which corresponds to the variance part of $\Theta^{t, \bu}$ in (\ref{Eq:Theta}). As $\bu$ does not appear in the variance process, we exclude it from the notation, which becomes $\Theta^t$. We further denote the concatenated path $\omega$ for the variance process as follows: 
\begin{equation}
\omega^\nu_s =  (\nu \otimes_t \Theta^t)_s \triangleq \nu_s \id_{\{0 \leq s < t\}} + \Theta^t_s \id_{\{t \leq s \leq T\}}.
\end{equation}
For $\Theta^t_s$ in (\ref{Eq:VolTheta}), \citet[Equation (5.11)]{viens2017martingale} shows that $\Theta^t_s$ can be represented by the forward variance curve $\E[\nu_s|\cF_t]$. Therefore, $\Theta^t_s$ can be roughly replicated with financial products, such as variance swaps.  

At time $t$, for a realized path $\omega$, we have the following:
\begin{align}\label{Eq:Xt}
X^{t, \omega, \bu}_s = \omega_s & + \int^s_t \mu(s; r, X^{t, \omega, \bu}_\rdot, \bu(r, X^{t, \omega, \bu}_\rdot)) dr \nonumber \\
& + \int^s_t \sigma(s; r, X^{t, \omega, \bu}_\rdot, \bu(r, X^{t, \omega, \bu}_\rdot)) dW_r, \; t \leq s \leq T, \nonumber \\
X^{t, \omega, \bu}_s = \omega_s, & \quad 0 \leq s < t,
\end{align}
where the notation $X^\bu_s$ is replaced with $X^{t, \omega, \bu}_s$ to highlight its dependence on $t$ and the path $\omega$. For $t \leq s \leq T$, $\Theta^{t, \bu}_s$ is then interpreted as follows:
\begin{equation}
\Theta^{t, \bu}_s = \omega_s = x + \int^t_0 \mu(s; r, X^{t, \omega, \bu}_\rdot, \bu(r, X^{t, \omega, \bu}_\rdot)) dr + \int^t_0 \sigma(s; r, X^{t, \omega, \bu}_\rdot, \bu(r, X^{t, \omega, \bu}_\rdot)) dW_r.
\end{equation} 

For a given feedback strategy $\bu$, let
\begin{equation}\label{Eq:muu}
\mu^\bu(t; r, \omega) \triangleq \mu(t; r, \omega_\rdot, \bu(r, \omega_\rdot)), \quad \sigma^\bu(t; r, \omega) \triangleq \sigma(t; r, \omega_\rdot, \bu(r, \omega_\rdot)).
\end{equation}
As it is enough for us to consider $\mu^\bu$ and $\sigma^\bu$ with the same singularities, we encounter two cases only. If $\lim_{r\rightarrow t} \mu^\bu(t; r, \cdot) = \infty$ and $\lim_{r\rightarrow t} \sigma^\bu(t; r, \cdot) = \infty$, it is called a {\it singular} case; otherwise, if $\lim_{r\rightarrow t} \mu^\bu(t; r, \cdot) < \infty$ and $\lim_{r\rightarrow t} \sigma^\bu(t; r, \cdot) < \infty$, it is called a {\it regular} case \cite{viens2017martingale}. 

We introduce the reward functional as follows:
\begin{align}\label{Eq:Reward}
J(t, \omega; \bu) \triangleq & \E\Big[ \int^T_t C(t, \omega_t, r,  X^{t, \omega, \bu}_\rdot, \bu(r, X^{t, \omega, \bu}_\rdot)) dr + F(t, \omega_t, X^{t, \omega, \bu}_{T \wedge \cdot}) \Big| \cF_t \Big] \cr
& + G(t, \omega_t, \E[ X^{t, \omega, \bu}_T | \cF_t]),
\end{align}
where $X^{t, \omega, \bu}$ is given by (\ref{Eq:Xt}).

Functional (\ref{Eq:Reward}) has nested MV criterion (\ref{Eq:ConstMV_Reward}) as a special case. SVIE (\ref{Eq:Volterra0}) is not time-consistent due to the absence of the flow property \citep{viens2017martingale}. However, we focus on the time-inconsistency issue from the objective function $J$, which originates from its dependence on the current time $t$, the current state $\omega_t$, and the nonlinear function $G$. The reward functional \eqref{Eq:Reward} does not satisfy the Bellman optimality principle. A strategy that maximizes \eqref{Eq:Reward} at the current time $t$ may no longer be optimal at a certain future time. We refer readers to \cite{basak2010dynamic,zhao2014consumption,bjork2014mean,bjork2017time,dai2020log} for motivations for and examples of (\ref{Eq:Reward}). 

Conceptually, the non-Markov property implies that it is not enough to record the current state $X^\bu_t$ only. More information from $\cF_t$ is needed. For Volterra processes, the concatenated path $\omega$ is sufficient. Furthermore, by writing the reward $J$ as a functional of $\omega = X^\bu \otimes_t \Theta^{t, \bu}$ rather than $X^\bu_{t\wedge \cdot}$ only, $J$ preserves some nice regularity properties, such as continuity, under mild conditions. To clarify it further, although $\Theta^{t, \bu}$ is a functional of $X^\bu_{t\wedge \cdot}$, the dependence is usually discontinuous under uniform convergence due to the stochastic integrals involved. Readers may refer to \citet[Remark 3.2]{viens2017martingale} for a specific example. \cite{viens2017martingale} discovers that the flow property can be recovered by including $\Theta^{t, \bu}$, resulting in the functional It\^o formula. Section \ref{Sec:Example} uses specific examples to clarify the rationale more concretely. 

Let $m$ be a generic positive value for the polynomial growth rate, which may vary from line to line. By the supremum norm $||\omega||_T \triangleq \sup_{ 0 \leq t \leq T} |\omega_t|$ defined in Appendix \ref{Sec:FuncIto}, we introduce continuity in $\omega$ under $||\cdot||_T$.
\begin{assumption}
	Properties for $F$ and $G$:
	\begin{enumerate}[label={(\arabic*).}]
		\item For any fixed $s$ and $y$, $F$ is of polynomial growth in $\omega$. That is,
		\begin{align}
		|F(s, y, \omega)| \leq C_0[ 1 + ||\omega||^m_T],
		\end{align}
		for some constants $C_0, m > 0$. 
		
		\item For any fixed $s$ and $y$, $G(s, y, z)$ is continuously differentiable in $z$.
	\end{enumerate}
\end{assumption}
Similarly, for a given feedback strategy $\bu$, let
\begin{equation}
C^\bu(t, \omega, s, y)  \triangleq C(s, y, t, \omega_{t \wedge \cdot},  \bu(t, \omega_{t \wedge \cdot})).
\end{equation}

\begin{definition}\label{Def:U}
	$\bu$ is said to be an admissible strategy, denoted by $\bu \in \cU$, if 
	\begin{enumerate}[label={(\arabic*).}]
		\item Assumption \ref{Assum:SVIE} holds.
		\item (a) If $\mu^\bu$ and $\sigma^\bu$ are regular, then for a fixed $t \in [0, T]$, assume that $\mu^\bu(t; r, \omega)$ and $\sigma^\bu(t; r, \omega)$ are right-continuous in $r \in [0, t]$ and continuous in $\omega \in \Omega$.  $\partial_t \mu^\bu(t; r, \cdot)$ and $\partial_t \sigma^\bu(t; r, \cdot)$ exist for $t \in [r, T]$. For $\varphi = \mu^\bu, \sigma^\bu, \partial_t \mu^\bu, \partial_t \sigma^\bu$, 
		\begin{equation}
		|\varphi(t; r, \omega)| \leq C_0[ 1 + ||\omega||^m_T],
		\end{equation}
		for some constants $C_0, m > 0$.
		
		(b) If $\mu^\bu$ and $\sigma^\bu$ are singular, then for a fixed $t \in [0, T]$, assume that $\mu^\bu(t; r, \omega)$ and $\sigma^\bu(t; r, \omega)$ are right-continuous in $r \in [0, t)$ and continuous in $\omega \in \Omega$.  For $\varphi = \mu^\bu, \sigma^\bu$, suppose that $\partial_t \varphi (t; r, \cdot)$ exists for $t \in (r, T]$, and there exists $0 < H < 1/2$ such that for any $0 \leq r < t \leq T$,
		\begin{align}
		|\varphi(t; r, \omega)| & \leq C_0[ 1 + ||\omega||^m_T](t - r)^{H-1/2}, \\
		|\partial_t \varphi(t; r, \omega)| &\leq C_0[ 1 + ||\omega||^m_T](t - r)^{H - 3/2},
		\end{align}
		for some constants $C_0, m > 0$.
		
		\item For any fixed $s$ and $y$, $C^\bu$ is continuous in $(t, \omega)$. $C^\bu$ is of polynomial growth in $\omega$, uniformly in $t$. That is,
		\begin{equation}
		|C^\bu(t, \omega, s, y)| \leq C_0[ 1 + ||\omega||^m_T],
		\end{equation}
		for some constants $C_0, m > 0$.
		\item For any fixed $s$ and $y$, 
		\begin{align}
		\E \Big[ & \sup_{ t \leq r \leq T} \big| C(s, y, r,  X^{t, \omega, \bu}_\rdot, \bu(r, X^{t, \omega, \bu}_\rdot)) \big| + \big| F(s, y, X^{t, \omega, \bu}_{T \wedge \cdot}) \big| \Big| \cF_t \Big]  \\ 
		& + \big| G(s, y, \E[ X^{t, \omega, \bu}_T | \cF_t]) \big| \leq C_0[ 1 + ||\omega||^m_T], \nonumber
		\end{align}
		for some constants $C_0, m > 0$, which are independent of $t$.
	\end{enumerate}
\end{definition}

As a strategy that is optimal at time $t$ may no longer be optimal afterward, an agent is motivated to deviate from it. \cite{strotz1955myopia,ekeland2008investment,bjork2017time} argue that any reasonable agent should only choose strategies from which she will not deviate. Informally, the agent who is aware of time-inconsistency can consider her selves at different future times as different agents. The agent at time $t$ controls the state $X^\bu$ exactly at time $t$ by choosing a control function $\bu(t, \cdot)$. A candidate admissible equilibrium strategy $\hat \bu$ should have the following property: if for each $r > t$, the agent at time $r$ chooses $\hat{\bu}(r, \cdot)$, then it is optimal for the agent at time $t$ to choose $\hat{\bu}(t, \cdot)$. Such an equilibrium strategy is formulated backwardly by induction. Therefore, the derived $\hat{\bu}$ will not be deviated from.

The informal arguments above work well in discrete time; but in continuous time, controlling on a time set of Lebesgue measure zero has no effect on the state process. Thus, the agent at time $t$ is allowed to act on an infinitesimally small interval $[t, t+h)$ and then send $h$ to zero. Formally, let $\bu(r, \omega_\rdot)$ be a deterministic map that is also admissible. Perturbing $\hat \bu$ in the same way as \cite{bjork2014mean,bjork2017time,dai2020log,he2018equilibrium} yields the following:
\begin{align}\label{Eq:uh}
\bu_h(r, \omega_\rdot) = \left\{\begin{array}{ll}
\bu(r, \omega_\rdot), \quad t \leq r < t+h, \\
\hat \bu(r, \omega_\rdot), \quad t+h \leq r \leq T.
\end{array}\right.
\end{align}
If we denote the solution to SVIE (\ref{Eq:Volterra0}) with $\bu_h$ as $X^{\bu_h}$, the feedback strategy reads as follows:
\begin{align}
\bu_h(r, X^{\bu_h}_\rdot) = \left\{\begin{array}{ll}
\bu(r, X^{\bu_h}_\rdot), \quad t \leq r < t+h, \\
\hat \bu(r, X^{\bu_h}_\rdot), \quad t+h \leq r \leq T.
\end{array}\right.
\end{align}
A crucial characteristic of the feedback (closed-loop) formulation is that perturbing $\hat \bu$ on $[t, t+h)$ implicitly affects the strategies on $[t+h, T]$ through $X^{\bu_h}$. It is different with open-loop strategies whose value on $[t+h, T]$ is unchanged \citep{hu2012time}.

Loosely speaking, the candidate equilibrium $\hat{\bu}$ should satisfy the following property: If all agents at $[t+h, T]$ agree to use $\hat{\bu}$, then it is optimal for the agent at time $t$ to adopt $\hat{\bu}$. Mathematically, we have the following definition.

\begin{definition}\label{Def:Equilibrium}
	Consider a candidate equilibrium law $\hat \bu$. For any $t \in [0, T)$ and $\bu \in \cU$, where $\cU$ is defined in Definition \ref{Def:U}, define $\bu_h$ as in (\ref{Eq:uh}), $\hat \bu$ is an (weak) equilibrium strategy if 
	\begin{equation}\label{Eq:weak}
	\liminf_{h \downarrow 0} \frac{J(t, \omega; \hat \bu) - J(t, \omega; \bu_h)}{h} \geq 0,
	\end{equation}
	for any $\omega \in \tilde \Lambda(\hat \bu, t)$.
\end{definition}

In Definition \ref{Def:Equilibrium}, we consider a path-dependent counterpart of the concept of {\it support}. $\tilde \Lambda(\hat \bu, t)$ is the support of paths for $X^{\hat \bu} \otimes_t \Theta^{t, \hat \bu}$ conditional on $\cF_0$. The support is the set of $\omega \in \Omega$ such that any neighborhood of $\omega$ has a positive measure under the distribution of $X^{\hat \bu} \otimes_t \Theta^{t, \hat \bu}$. The metric is induced by norm $||\cdot||_T$. Roughly speaking, the support contains all possible situations for the paths. We refer to \cite{he2018equilibrium} for the rationale for considering the support rather than the whole space $\Omega$.

\begin{remark}
	Similar to \cite{he2018equilibrium}, the definition of support differs from the standard definition in the literature. We refer readers to the footnote under \citet[Definition 2]{he2018equilibrium} for further details. Characterizing the support $\tilde \Lambda(\hat \bu, t)$ under SVIE (\ref{Eq:Volterra0}) remains an open problem. Related work of which we are aware includes \cite{kalinin2019support}. However, in the examples we consider, the support is clear and relatively straightforward to obtain.
\end{remark}
\begin{remark}
	As noted in \citet[Remark 3.5]{bjork2017time}, $\hat \bu$ under (\ref{Eq:weak}) may merely be a stationary point. Recent works have also considered equilibria under the following condition:
	\begin{equation}\label{Eq:strong}
	J(t, \omega; \bu_h) \leq J(t, \omega; \hat \bu),
	\end{equation}
	where $\bu_h$ is selected in certain sets. \cite{he2018equilibrium} clarifies three notions, namely strong, regular, and weak equilibria. \cite{huang2018strong} considers a stochastic control problem in which the generator of certain Markov chains can be controlled, with a definition such as that in (\ref{Eq:strong}). However, weak equilibria should be considered first, as other types of equilibrium strategies are under stronger conditions that may be too restrictive.
\end{remark}

To emphasize that probability is also part of the weak solution, denote the expectation and conditional expectation under equilibrium control $\hat{\bu}$ by $\hatE[\cdot]$ and $\hatE[\cdot | \cF_t]$, respectively. $\hE[\cdot]$ and $\hE[\cdot | \cF_t]$ are under a perturbed control $\bu_h$ in \eqref{Eq:uh}. Therefore, in Definition \ref{Def:Equilibrium}, $J(t, \omega; \hat \bu)$ is under $\hatE[\cdot | \cF_t]$ and $J(t, \omega; \bu_h)$ is under $\hE[\cdot | \cF_t]$.

\subsection{Extended path-dependent HJB equation}\label{Sec:EP-HJB}
The following notations are useful. Define
\begin{align}
f^\bu(t, \omega, s, y) & \triangleq \E\Big[ F(s, y, X^{t, \omega, \bu}_{T \wedge \cdot}) \Big| \cF_t\Big], \\
g^\bu(t, \omega) &\triangleq  \E \big[X^{t, \omega, \bu}_T \big| \cF_t \big], \\
c^{r, \bu}(t, \omega, s, y) &\triangleq  \E\Big[ C(s, y, r,  X^{t, \omega, \bu}_\rdot, \bu(r,  X^{t, \omega, \bu}_\rdot)) \Big| \cF_t \Big]. 
\end{align}
When $\bu = \hat \bu$, denote
\begin{align}
f(t, \omega, s, y) & \triangleq  \hatE\Big[ F(s, y, X^{t, \omega, \hat \bu}_{T \wedge \cdot}) \Big| \cF_t\Big], \\
g(t, \omega) &\triangleq  \hatE \big[X^{t, \omega, \hat \bu}_T \big| \cF_t \big], \\
c^r(t, \omega, s, y) &\triangleq  \hatE\Big[ C(s, y, r,  X^{t, \omega, \hat \bu}_\rdot, \hat \bu(r,  X^{t, \omega, \hat \bu}_\rdot)) \Big| \cF_t \Big]. 
\end{align}
Our convention is that the last two arguments $(s, y)$ are reserved for state-dependence. These auxiliary functions are reduced to their counterparts in \cite{bjork2017time} when there is no path dependence. 

First, we derive a recursive relationship, which extends \citet[Lemma 3.3]{bjork2014theory} to the non-Markovian case applicable to our problem. To do so, we investigate the problem at time $t+ h$. Denote the path
\begin{align}
\omega^{t+h}_s = \left\{\begin{array}{ll}
\omega_s, &\quad 0 \leq s < t, \\
X^{t, \omega, \bu}_s, &\quad t \leq s < t+h, \\
\Theta^{t+h, \bu}_s, &\quad t+h \leq s \leq T,
\end{array}\right.
\end{align}
where
\begin{align*}
\Theta^{t+h, \bu}_s = x & + \int^{t+h}_0 \mu(s; r, X^{t, \omega, \bu}_\rdot, \bu(r, X^{t, \omega, \bu}_\rdot)) dr + \int^{t+h}_0 \sigma(s; r, X^{t, \omega, \bu}_\rdot, \bu(r, X^{t, \omega, \bu}_\rdot)) dW_r.
\end{align*}
Note that $\omega^{t+h}$ is adapted to $\cF_{t+h}$ but not $\cF_t$. To make the notation compact, we write the following:
\begin{equation}
\omega^{t+h}_s = (X^{t, \omega, \bu} \otimes_{t+h} \Theta^{t + h, \bu})_s \triangleq X^{t, \omega, \bu}_s \id_{\{0 \leq s < t+h\}} + \Theta^{t + h, \bu}_s \id_{\{t+h \leq s \leq T\}}.
\end{equation}
$\Theta^{t + h, \bu}$ is only defined on $[t+h, T]$. $X^{t, \omega, \bu}_s \neq \omega_s$ for  $ s \in (t, t+h)$ and $\Theta^{t + h, \bu}_s \neq \omega_s$ for $s \in [t+h, T]$.

\begin{lemma}\label{Lem:Recursion}
	For a general admissible feedback strategy $\bu$, the reward functional $J$ satisfies the following recursion:
	\begin{align}\label{Eq:Recursive}
	J(t, \omega; \bu) = & \E \big[ J(t+h, \omega^{t+h}; \bu) \big| \cF_t \big]  \nonumber\\ 
	& - \Big\{ \int^T_{t+h} \E \big[ c^{r, \bu}(t+h, \omega^{t+h}, t+h, \omega^{t+h}_{t+h}) \big| \cF_t \big] dr  \\ 
	& \qquad - \int^T_t \E\big[ c^{r, \bu}(t+h, \omega^{t + h}, t, \omega_t)\big| \cF_t \big] dr \Big\} \nonumber \\
	& - \big\{ \E \big[ f^\bu(t+h, \omega^{t+h}, t+h, \omega^{t+h}_{t+h}) \big| \cF_t \big] - \E\big[ f^\bu(t+h, \omega^{t+h}, t, \omega_t) \big| \cF_t \big] \big\} \nonumber \\
	&  -\big\{ \E \big[ G(t+h, \omega^{t+h}_{t+h}, g^\bu(t+h, \omega^{t+h})) \big| \cF_t \big] - G\big(t, \omega_t, \E\big[ g^\bu(t+h, \omega^{t+h}) \big| \cF_t\big] \big) \big\}. \nonumber
	\end{align}
\end{lemma}
The proof of Lemma \ref{Lem:Recursion} solely applies the tower property of conditional expectation and does not rely on the functional It\^o formula. However, the verification theorem does need the functional It\^o formula in \citet[Theorem 3.10 and Theorem 3.17]{viens2017martingale}, quoted as Theorem \ref{Thm:Ito}. The derivatives and spaces $C^{1, 2}_+( \Lambda)$ and $C^{1, 2}_{+, \alpha}(\Lambda)$ in Theorem \ref{Thm:Ito} are defined in \cite{viens2017martingale}, which are also briefly reviewed in the Appendix \ref{Sec:FuncIto}.

\begin{theorem}[{\citet[Theorem 3.10 and Theorem 3.17]{viens2017martingale}}]\label{Thm:Ito}
	Suppose that (1)-(2) of Definition \ref{Def:U} hold. Let $f \in C^{1, 2}_+( \Lambda)$ for the regular case or $f \in C^{1, 2}_{+, \alpha}( \Lambda)$ for the singular case with $\beta \triangleq \alpha + H - \frac{1}{2} > 0$. The constant $H$ is defined in Definition \ref{Def:U} (2) for the singular case. Then,
	\begin{align}
	df(t, X^\bu \otimes_t \Theta^{t, \bu}) =& \partial_t f(t, X^\bu \otimes_t \Theta^{t, \bu}) dt + \ang{ \partial_\omega f(t, X^\bu \otimes_t \Theta^{t, \bu}), \mu^{t, \bu}} dt \cr
	& + \frac{1}{2} \ang{ \partial^2_{\omega\omega} f(t, X^\bu \otimes_t \Theta^{t, \bu}), (\sigma^{t, \bu}, \sigma^{t, \bu})} dt  \cr 
	& + \ang{ \partial_\omega f(t, X^\bu \otimes_t \Theta^{t, \bu}), \sigma^{t, \bu} } d W_t, \quad \p-\as,
	\end{align}
	where for $\varphi = \mu, \sigma$, the notation $\varphi^{t, \bu}_s \triangleq \varphi^\bu(s; t, \cdot)$ emphasizes the dependence on $s \in [t, T]$.
	
	For the singular case, the derivatives related to $\omega$ are defined as follows:
	\begin{align}
	\ang{ \partial_\omega f(t, \omega), \varphi^{t, \bu}}  & \triangleq \lim_{\delta \downarrow 0} \ang{ \partial_\omega f(t, \omega), \varphi^{\delta, t, \bu}},  \quad \varphi = \mu, \sigma, \label{Eq:singular1} \\
	\ang{ \partial^2_{\omega\omega} f(t, \omega), (\sigma^{t, \bu}, \sigma^{t, \bu})} & \triangleq \lim_{\delta \downarrow 0} \ang{ \partial^2_{\omega\omega} f(t, \omega), (\sigma^{\delta, t, \bu}, \sigma^{\delta, t, \bu})}, \label{Eq:singular2}
	\end{align}
	where $\varphi^{\delta, t, \bu}_s \triangleq \varphi^\bu( s \vee (t + \delta); t, \cdot)$ for $ 0 < \delta \leq T - t$, is the truncated function. It also emphasizes the dependence on $s \in [t, T]$.
\end{theorem}

Define the value function as follows:
\begin{equation}
V(t, \omega) = J(t, \omega; \hat \bu).
\end{equation}
For a general admissible control $\bu$ and a functional $f(t, \omega)$ that satisfies Assumption \ref{Assum:ValueFunc}, as specified later, denote the operator $\bA^\bu$ as follows:
\begin{equation}\label{Eq:Operator}
(\bA^\bu f)(t, \omega) \triangleq \partial_t f(t, \omega) + \ang{ \partial_\omega f(t, \omega), \mu^{t, \bu}} + \frac{1}{2} \ang{ \partial^2_{\omega\omega} f(t, \omega), (\sigma^{t, \bu}, \sigma^{t, \bu})},
\end{equation}
where we omit the arguments in $\mu^{t, \bu}$ and $\sigma^{t, \bu}$. The derivatives in (\ref{Eq:Operator}) are defined in (\ref{Eq:t_deri}), (\ref{Eq:spatial1_deri}), and (\ref{Eq:spatial2_deri}) for regular cases, while (\ref{Eq:singular1}) and (\ref{Eq:singular2}) are for singular cases. The operator $\bA^\bu$ only applies to variables within parentheses. For instance, $ (\bA^\bu f)(t, \omega, t, \omega_t)$ operates on $t, \omega, t, \omega_t$, whereas $(\bA^\bu f^{s, y})(t, \omega)$ operates on $t, \omega$ only.

\begin{definition}\label{Def:EP-HJB}
	The extended PHJB equation system is defined as follows:
	\begin{enumerate}[label={(\arabic*).}]
		\item The function $V$ satisfies
		\begin{align}\label{Eq:V}
		\sup_{\bu \in \cU} \Big\{ & (\bA^\bu V)(t, \omega) + C(t, \omega_t, t, \omega_{t \wedge \cdot},  \bu(t, \omega_{t \wedge \cdot})) - \int^T_t (\bA^\bu c^r)(t, \omega, t, \omega_t) dr \cr
		& + \int^T_t (\bA^\bu c^{t, \omega_t, r})(t, \omega) dr - (\bA^\bu f)(t, \omega, t, \omega_t) + (\bA^\bu f^{t, \omega_t})(t, \omega) \cr
		& - \bA^\bu(G \diamond g)(t, \omega) + \partial_y G(t, \omega_t, g(t, \omega)) (\bA^\bu g)(t, \omega) \Big\} = 0, \quad 0 \leq t \leq T, \nonumber \\
		& V(T, \omega) = F(T, \omega_T, \omega) + G(T, \omega_T, \omega_T).
		\end{align} 
		Let $\hat \bu$ be the strategy that attains the supremum.
		\item For each fixed $s$ and $y$, $f^{s,y}(t, \omega)$ is defined as follows: 
		\begin{equation}\label{Eq:fsy}
		(\bA^{\hat \bu} f^{s, y})(t ,\omega) = 0, \quad 0 \leq t \leq T, \quad f^{s, y}(T, \omega) = F(s, y, \omega).  
		\end{equation}
		\item The function $g$ satisfies 
		\begin{equation}\label{Eq:g}
		(\bA^{\hat \bu} g)(t, \omega) = 0, \quad 0 \leq t \leq T, \quad g(T, \omega) = \omega_T.
		\end{equation}
		\item For each fixed $s$, $r$, and $y$, $c^{s, y, r}$ is defined by
		\begin{align}
		(\bA^{\hat \bu} c^{s, y, r}) (t, \omega) & = 0, \quad 0 \leq t \leq r, \label{Eq:csyr} \\
		c^{s, y, r}(r, \omega) & = C(s, y, r, \omega_\rdot, \hat \bu(r, \omega_\rdot)). \nonumber
		\end{align}
		\item The notations have the following meanings:
		\begin{align*}
		f(t, \omega, s, y) &= f^{s,y}(t, \omega), \qquad c^r(t, \omega, s, y) = c^{s, y, r}(t, \omega), \\
		(G \diamond g)(t, \omega) &= G(t, \omega_t, g(t, \omega)), \qquad \partial_y G(t, \omega_t, y) = \frac{\partial G}{\partial y} (t, \omega_t, y). 
		\end{align*}
		\item The probabilistic interpretations are as follows:
		\begin{align*}
		f^{s, y}(t, \omega) & = \hatE \big[ F(s, y, X^{t, \omega, \hat \bu}_{T \wedge \cdot}) \big| \cF_t \big], \qquad g(t, \omega) = \hatE \big[X^{t, \omega, \hat \bu}_T \big| \cF_t \big], \\
		c^{s, y, r}(t, \omega) &= \hatE\Big[ C(s, y, r,  X^{t, \omega, \hat \bu}_\rdot, \hat \bu(r,  X^{t, \omega, \hat \bu}_\rdot)) \Big| \cF_t \Big], \quad 0 \leq t \leq r.
		\end{align*}
	\end{enumerate}
	Equations (\ref{Eq:V})-(\ref{Eq:csyr}) above hold for $\omega \in \tilde \Lambda (\hat{\bu}, t)$, $t \in [0, T]$.
\end{definition}
\begin{remark}
	The spatial region in (\ref{Eq:V})-(\ref{Eq:csyr}) is expressed as $\tilde \Lambda (\hat{\bu}, t), \; t \in [0, T]$, which is consistent with Definition \ref{Def:Equilibrium}. For example, for MVP with state-dependent risk aversion in \cite{bjork2014mean}, the assumption is that wealth stays positive implicitly. This implies that the system in \citet[Definition 2]{bjork2014mean} holds for region $ M > 0$ instead of $ M \in \R$.
\end{remark}

We must emphasize the dependence on $\omega$ and $\omega_{t \wedge \cdot}$ in (\ref{Eq:V})-(\ref{Eq:csyr}). Although the functionals $V, c^r, c^{s, y, r}, f, f^{s, y}$, and $g$ generally depend on the whole path $\omega$, the strategy $\bu$ only depends on $\omega_{t \wedge \cdot}$, paths up to time $t$. $C(t, \omega_t, t, \omega_{t \wedge \cdot},  \bu(t, \omega_{t \wedge \cdot}))$ also only depends on $\omega_{t \wedge \cdot}$. This follows from the definition of paths in (\ref{Eq:omega}) and the fact that $\bu$ and $C$ only depend on $X^\bu$ but not $\Theta^{t, \bu}$ directly. If there is no path dependence, the system in (\ref{Eq:V})-(\ref{Eq:csyr}) reduces to the one in \cite{bjork2017time}.

We impose the regularity condition, Assumption \ref{Assum:ValueFunc}, on the functionals appearing in the extended PHJB system in Definition \ref{Def:EP-HJB}. This condition validates that all of the derivatives are well defined, although it is not the mildest condition. Indeed, we require the functionals to have spatial derivatives on $\Omega$ rather than merely on the $\tilde \Lambda (\hat{\bu}, t)$. 
\begin{assumption}\label{Assum:ValueFunc}
	For the regular case, 
	\begin{enumerate}[label={(\arabic*).}]
		\item $V, f, G\diamond g, g \in C^{1, 2}_+ (\Lambda)$;
		\item For any fixed $s$ and $y$, $f^{s, y} \in C^{1, 2}_+ (\Lambda)$;
		\item For any fixed $ r \in [0, T]$, $c^r \in C^{1, 2}_+ ([0, r] \times \Omega)$; and
		\item For any fixed $s$, $y$, and fixed $ r \in [0, T] $, $c^{s, y, r} \in C^{1, 2}_+ ([0, r] \times \Omega)$.
	\end{enumerate}
	For the singular case, let $\alpha \in (0, 1)$.
	\begin{enumerate}[label={(\arabic*).}]
		\item $V, f, G\diamond g, g \in C^{1, 2}_{+,\alpha} (\Lambda)$.
		\item For any fixed $s$ and $y$, $f^{s, y} \in C^{1, 2}_{+,\alpha} (\Lambda)$;
		\item For any fixed $ r \in [0, T] $, $c^r \in C^{1, 2}_{+,\alpha} ([0, r] \times \Omega)$; and
		\item For any fixed $s$, $y$, and any fixed $ r \in [0, T] $, $c^{s, y, r} \in C^{1, 2}_{+,\alpha} ([0, r] \times \Omega)$.
	\end{enumerate}
	In addition, $\beta \triangleq \alpha + H - \frac{1}{2} > 0$, where the constant $H$ is defined in Definition \ref{Def:U} (2) for the singular case.
\end{assumption}

We frequently encounter stochastic integrals that are required to be true martingales. Lemma \ref{Lem:TrueMartingale} is useful for the related justification. For ease of notation, we denote $\ang{\partial_\omega f, \sigma^{t, \bu}} \cdot W  \triangleq \int^\cdot_t \ang{\partial_\omega f(r, X^\bu \otimes_r \Theta^{r, \bu}), \sigma^{r, \bu}} dW_r$ for later use.
\begin{lemma}\label{Lem:TrueMartingale}
	Suppose that $\bu$ is admissible. Let $f$ be a general functional and $f \in C^{1, 2}_+ (\Lambda)$ for the regular case or $f \in C^{1, 2}_{+, \alpha} (\Lambda)$ for the singular case, with $\beta \triangleq \alpha + H - 1/2 > 0$. Then,
	\begin{equation}
	\E \Big[\int^T_t \big| \ang{\partial_\omega f(r, X^\bu \otimes_r \Theta^{r, \bu}), \sigma^{r, \bu}} \big|^2 dr \Big| \cF_t \Big] < \infty,
	\end{equation}
	which implies that $\ang{\partial_\omega f, \sigma^{t, \bu}} \cdot W$ is a true martingale.
\end{lemma}

We can now provide the verification theorem, which is one of the main results of this paper. The proof is in the same spirit of \citet[Theorem 5.2]{bjork2017time} but invokes Lemmas \ref{Lem:Recursion} and \ref{Lem:TrueMartingale} and the functional It\^o formula in \cite{viens2017martingale}. The proof consists of two steps. First, we show that the interpretations in Definition \ref{Def:EP-HJB} (6) hold and that $V(t, \omega) = J(t, \omega; \hat \bu)$. The functional It\^o formula and Lemma \ref{Lem:TrueMartingale} are applied to prove the martingale property of three functions in Definition \ref{Def:EP-HJB} (6). $V(t, \omega) = J(t, \omega; \hat \bu)$ is verified in a similar way, under the conditions in Definition \ref{Def:EP-HJB}. Second, we prove that $\hat \bu$ is indeed an equilibrium strategy under Definition \ref{Def:Equilibrium}. The recursive relationship in Lemma \ref{Lem:Recursion} with $\bu_h$ gives one representation of $J(t, \omega; \bu_h)$. The PHJB equation \eqref{Eq:V} with $\bu_h$ implies an inequality related to $V(t, \omega)$, which equals $J(t, \omega; \hat \bu)$ according to the first step. A comparison of these two sides deduces \eqref{Eq:weak} as desired. In the whole proof, including $\Theta^{t, \bu}$ recovers the flow property and overcomes the non-Markovian and non-semimartingale difficulty.

\begin{theorem}[Verification theorem]\label{Thm:Verification}
	Suppose that the extended PHJB system in \eqref{Eq:V}-\eqref{Eq:csyr} in Definition \ref{Def:EP-HJB} admits a solution $(V, f, g, f^{s, y}, c^{r}, c^{s, y, r})$ that satisfies Assumption \ref{Assum:ValueFunc}. If $\hat \bu$ realizes the supremum in \eqref{Eq:V} for $V$ and $\hat \bu$ is admissible, then $\hat \bu$ is an equilibrium law in the sense of Definition \ref{Def:Equilibrium} and $V$ is the corresponding value function.
\end{theorem}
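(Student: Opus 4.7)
The plan is to follow the blueprint of \citet[Theorem 5.2]{bjork2017time}, replacing the classical It\^o formula with the functional It\^o formula for Volterra processes from \cite{viens2017martingale}, and using the auxiliary lemmas (Recursion and True Martingale) advertised in the theorem statement. The argument splits cleanly into two parts: (i) identify the candidate $V$ with $J(\cdot,\cdot;\hat\bu)$ on $\tilde\Lambda(\hat\bu,t)$, and (ii) prove the weak equilibrium inequality in Definition \ref{Def:Equilibrium} by a first-order perturbation.

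For part (i), I would apply the functional It\^o formula to each of $f^{s,y}(\cdot,\cdot)$, $g(\cdot,\cdot)$, and $c^{s,y,r}(\cdot,\cdot)$ along the process driven by $\hat\bu$. By (\ref{Eq:fsy}), (\ref{Eq:g}) and (\ref{Eq:csyr}), the drift contribution vanishes, so each functional is a local martingale; the True Martingale Lemma upgrades this to a true martingale via the moment bound in Assumption \ref{Assum:SVIE} and the polynomial growth hypothesis, and the terminal conditions then yield precisely the probabilistic interpretations in item~(6) of Definition \ref{Def:EP-HJB}. Substituting these into (\ref{Eq:Reward}) gives
\begin{equation*}
J(t,\omega;\hat\bu) \;=\; \int_t^T c^r(t,\omega,t,\omega_t)\,dr \;+\; f(t,\omega,t,\omega_t) \;+\; G(t,\omega_t,g(t,\omega)).
\end{equation*}
Denoting the right-hand side by $W(t,\omega)$, I apply $\bA^{\hat\bu}$ to $W$ and use the chain/Leibniz-type identity for freezing the last two arguments at $(t,\omega_t)$: the contributions $(\bA^{\hat\bu} c^{t,\omega_t,r})(t,\omega)$, $(\bA^{\hat\bu} f^{t,\omega_t})(t,\omega)$ and $\partial_y G\cdot(\bA^{\hat\bu} g)(t,\omega)$ appear naturally, and the extra running cost $C(t,\omega_t,t,\omega_{t\wedge\cdot},\hat\bu(\cdot))$ comes from the lower limit of the $\int_t^T$. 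Matching against (\ref{Eq:V}) evaluated at $\bu=\hat\bu$, and noting that the supremum is attained at $\hat\bu$, shows $W$ solves the same boundary problem as $V$; this algebraic identity is exactly the content of the Recursion Lemma, so $V=W=J(\cdot,\cdot;\hat\bu)$.

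For part (ii), fix admissible $\bu$, build $\bu_h$ as in (\ref{Eq:uh}), and use the tower property to rewrite
\begin{equation*}
J(t,\omega;\bu_h)=\E\!\left[\int_t^{t+h}\! C\,dr+f(t+h,\tilde\omega,t,\omega_t)+\!\int_{t+h}^T c^r(t+h,\tilde\omega,t,\omega_t)dr+G(t,\omega_t,g(t+h,\tilde\omega))\,\Big|\,\cF_t\right],
\end{equation*}
where $\tilde\omega = X^{\bu_h}\otimes_{t+h}\Theta^{t+h,\hat\bu}$ is the concatenated path under $\bu_h$. Now apply the functional It\^o formula on $[t,t+h]$, under the $\bu$-dynamics, to the three functionals of the first two arguments, $f^{t,\omega_t}$, $g$, and $c^{t,\omega_t,r}$, as well as to $(G\diamond g)$. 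The local martingale parts vanish after conditioning (again via the True Martingale Lemma), and what remains is an integral of the full $\bu$-drift expression from (\ref{Eq:V}). Subtracting $J(t,\omega;\hat\bu)$ identified in part (i), the $t$-derivatives and constant terms cancel algebraically, yielding
\begin{equation*}
J(t,\omega;\hat\bu)-J(t,\omega;\bu_h) \;=\; -\,\E\!\left[\int_t^{t+h}\Phi^{\bu}(r,\cdot)\,dr\,\Big|\,\cF_t\right] + o(h),
\end{equation*}
where $\Phi^\bu(t,\omega)$ is precisely the bracketed expression inside the supremum of (\ref{Eq:V}). Dividing by $h$ and taking $\liminf$ with continuity of the integrand, part (ii) reduces to $\Phi^{\hat\bu}(t,\omega)-\Phi^\bu(t,\omega)\ge 0$, which is exactly the supremum statement of (\ref{Eq:V}) on $\tilde\Lambda(\hat\bu,t)$.

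The genuine obstacle, as opposed to the mostly bookkeeping character of the rest, is the manipulation of the concatenated path $\tilde\omega=X^{\bu_h}\otimes_{t+h}\Theta^{t+h,\hat\bu}$ when one transitions across $t+h$: under $\bu_h$, the Volterra ``memory'' up to $t+h$ was generated by $\bu$, but the future $\Theta^{t+h,\hat\bu}$ assumes $\hat\bu$ is used from $t+h$ onward, and this mismatch is what prevents a direct Markov-style argument. The functional It\^o formula of \cite{viens2017martingale} is designed precisely to restore the flow property through $\Theta$, and the Recursion Lemma will be invoked to propagate this across the concatenation; once these two are in place, the perturbation calculation above collapses to the supremum condition in (\ref{Eq:V}) and the theorem follows.
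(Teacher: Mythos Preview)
Your overall architecture is right, and part~(ii) is close to the paper's argument. The real gap is in part~(i). You identify $W(t,\omega)\triangleq \int_t^T c^r(t,\omega,t,\omega_t)\,dr + f(t,\omega,t,\omega_t)+G(t,\omega_t,g(t,\omega))=J(t,\omega;\hat\bu)$, then compute $\bA^{\hat\bu}W$ and conclude that ``$W$ solves the same boundary problem as $V$'', hence $V=W$. That last step presupposes uniqueness for the PPDE system, which is not available here; the paper explicitly lists existence and uniqueness for the extended PHJB system as an open problem (Section~\ref{Sec:Con}). Invoking the Recursion Lemma does not close this gap either: Lemma~\ref{Lem:Recursion} is a tower-property identity for $J$ across two times, not a comparison principle between two PPDE solutions. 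The paper avoids uniqueness entirely by going in the opposite direction: it applies the functional It\^o formula to $V$ itself on $[t,T]$, uses the PHJB equation at $\hat\bu$ (which, after cancelling the four terms $(\bA^{\hat\bu} f^{t,\omega_t})$, $(\bA^{\hat\bu} c^{t,\omega_t,r})$, $\partial_y G\cdot(\bA^{\hat\bu} g)$ that vanish by (\ref{Eq:fsy})--(\ref{Eq:csyr}), collapses to (\ref{Eq:ReducedHJB1-4})) to substitute the drift, and then unwinds the resulting integrals of $(\bA^{\hat\bu} c^r)$, $(\bA^{\hat\bu} f)$, $\bA^{\hat\bu}(G\diamond g)$ via It\^o again plus the probabilistic interpretations. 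This yields $V(t,\omega)=J(t,\omega;\hat\bu)$ by a direct forward computation, with no appeal to uniqueness.

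Two smaller points. First, the Recursion Lemma is actually the workhorse of part~(ii), not part~(i): it is precisely the formula you wrote for $J(t,\omega;\bu_h)$ via the tower property, and the paper uses it to replace $J(t+h,\cdot;\bu_h)$, $c^{r,\bu_h}$, $f^{\bu_h}$, $g^{\bu_h}$ by $V$, $c^r$, $f$, $g$ on $[t+h,T]$ where $\bu_h=\hat\bu$. Second, your description of the concatenated path as $X^{\bu_h}\otimes_{t+h}\Theta^{t+h,\hat\bu}$ is mislabeled: $\Theta^{t+h,\bu_h}$ is $\cF_{t+h}$-measurable and is built from the controlled path on $[0,t+h]$ (hence from $\bu$ on $[t,t+h)$), not from the strategy used after $t+h$. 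There is no ``mismatch'' across $t+h$ of the kind you describe; the only subtlety is bookkeeping the frozen arguments $(s,y)=(t,\omega_t)$ versus $(t+h,X^{t,\omega,\bu_h}_{t+h})$, which the paper handles via Fubini and the Lebesgue differentiation theorem to produce the $o(h)$ remainder.
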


\section{Examples}\label{Sec:Example}
In this section, we investigate the impact of volatility roughness. We apply the general framework to some specific decision-making situations with explicit or semi-closed form solutions. They include TC-MVP with constant risk aversion \citep{basak2010dynamic}, TC-MVP for log returns \citep{dai2020log}, and the MV objective with a linear controlled Volterra process. We refer to TC-MVP with constant risk aversion as the {\it const-MV} case and to TC-MVP for log returns as the {\it log-MV} case. We focus on the Volterra Heston model, which is a specific form of SVIE (\ref{Eq:Volterra0}). 

The concept of resolvent is used frequently. Kernel $R$ on $[0, \infty)$ is referred to as the {\em resolvent}, or the {\em resolvent of the second kind}, of $K$ if
\begin{equation}
K*R(t) = R*K(t) = K(t) - R(t), \quad \forall \; t \geq 0,
\end{equation}
where $*$ denotes the convolution operation:
\begin{equation}
K * R(t) = \int^t_0 K(t - s) R(s) ds, \quad \forall \; t > 0.
\end{equation}
The integral is extended to $t = 0$ by right-continuity if possible. Further properties of these definitions can be found in \cite{gripenberg1990volterra,abi2017affine}. Examples of kernels are available in Table \ref{Tab:Kernel}.

Let $R_\lambda$ be the resolvent of $\lambda K$ such that 
\begin{equation}
\lambda K * R_\lambda = R_\lambda * ( \lambda K) = \lambda K - R_\lambda.
\end{equation}
If $\lambda = 0$, $R_\lambda/\lambda = K$ and $R_\lambda = 0$.

\begin{table}[h!]
	\centering
	\begin{tabular}{c c c c }
		\hline
		& Constant	 &  Fractional (Power-law)  & Exponential\\ 
		\hline \\[0.5ex]
		$K(t)$ 	& $c$ & $c \frac{t^{\alpha-1}}{\Gamma(\alpha)}$  & $ce^{-\beta t}$ \\ \\
		$R(t)$ 		& $ce^{-ct}$  & $ct^{\alpha-1}  E_{\alpha, \alpha} (-ct^{\alpha})$ & $ce^{-\beta t}e^{-ct}$ \\ \\
		\hline
	\end{tabular}
	\caption{Examples of kernels $K$ and their resolvents $R$. $E_{\alpha,\beta}(z)=\sum_{n=0}^\infty \frac{z^n}{\Gamma(\alpha n+\beta)}$ is the Mittag-Leffler function. See \citet[Appendix A.1]{el2019characteristic} for its properties. The constant $c \neq 0$.}
	\label{Tab:Kernel}
\end{table}

Unlike variance, the wealth process (\ref{Eq:wealth}) does not have a convolution feature. Roughly speaking, certain Markov property is thus maintained. The dependence on wealth does not involve the entire trajectories. The following examples demonstrate this point.

%

\subsection{Const-MV: TC-MVP under constant risk aversion}\label{Sec:ConstMV}

Consider the TC-MVP in \cite{basak2010dynamic} under the Volterra Heston model (\ref{Eq:VoltHeston}) and wealth (\ref{Eq:wealth}):
\begin{equation}\label{Eq:ConstMV_Reward}
\E_t \left[ M^\bu_T \right] - \frac{\gamma}{2} \text{Var}_t \left[M^\bu_T\right] = \E_t \left[ M^\bu_T - \frac{\gamma}{2} (M^\bu_T)^2\right] + \frac{\gamma}{2} \left( \E_t[M^\bu_T] \right)^2,
\end{equation}
where the constant $\gamma > 0$ reflects the risk aversion level. The general reward functional in (\ref{Eq:Reward}) then becomes
\begin{equation}
F(t, \omega_t, M^{t, \omega, \bu}_{T \wedge \cdot}) = M^\bu_T - \frac{\gamma}{2} (M^\bu_T)^2, \quad G(t, \omega_t, y) = \frac{\gamma}{2} y^2.
\end{equation}

To solve the PHJB equation system in Definition \ref{Def:EP-HJB}, we highlight that $c^r = 0$ and that $f$ is not state-dependent. Consider the following Ansatz for $V$ in (\ref{Eq:V}) and $g$ in (\ref{Eq:g}). Denote the current wealth at time $t$ by $M$ and recall $\Theta^t_s$ as defined in (\ref{Eq:VolTheta}):
\begin{eqnarray}
V(t, \omega) &=& V\big(t, M, \Theta^t_{[t, T]}\big)  = V_1(t) M + \int^T_t V_2(s) \Theta^t_s ds + V_0(t), \label{Eq:ConstMV_Ansatz_V} \\
g(t, \omega) &=& g \big(t, M, \Theta^t_{[t, T]} \big) = g_1(t) M + \int^T_t g_2(s) \Theta^t_s ds + g_0(t), \label{Eq:ConstMV_Ansatz_g}
\end{eqnarray}
where $V_1$, $V_0$, $g_1$, and $g_0$ are deterministic continuously differentiable functions and $V_2$ and $g_2$ satisfy suitable integrability conditions. This Ansatz implies that the functions $V$ and $g$ depend on current wealth $M$ and $\Theta^t_{[t, T]}$ only. 

As $V$ and $g$ are linear functionals of $\Theta^t_{[t, T]}$, the direct calculation proceeds as follows:
\begin{align}
(\bA^\bu V)(t, \omega) =& \dot{V}_1(t) M - V_2(t) \nu + \dot{V}_0(t) \\
& + (\varUpsilon M + \theta \sqrt{\nu} u) V_1(t) + (\kappa\phi - \kappa \nu) \int^T_t V_2(s)K(s-t)ds, \nonumber \\
(\bA^\bu g)(t, \omega) =& \dot{g}_1(t) M - g_2(t) \nu + \dot{g}_0(t) \\
&  + (\varUpsilon M + \theta \sqrt{\nu} u) g_1(t) + (\kappa\phi - \kappa \nu) \int^T_t g_2(s)K(s-t)ds, \nonumber \\
\bA^\bu(G \diamond g)(t, \omega) =& \gamma g \big(t, M, \Theta^t_{[t, T]} \big) (\bA^\bu g)(t, \omega) + \frac{\gamma}{2} g^2_1(t) u^2 \nonumber  \\
& + \sigma \rho u \sqrt{\nu} \gamma g_1(t) \int^T_t g_2(s) K(s-t) ds + \frac{\gamma}{2} \Big( \int^T_t g_2(s) K(s-t) ds \Big)^2 \sigma^2 \nu, \\
\partial_y G(t, \omega_t, g(t, &\omega)) = \gamma g \big(t, M, \Theta^t_{[t, T]} \big).
\end{align}
We use the fact that $\omega^\nu$ is continuous at time $t$ and $\Theta^t_t = \nu_t \triangleq \nu$.

Equation (\ref{Eq:V}) in Definition \ref{Def:EP-HJB} becomes
\begin{align}\label{Eq:HJB-V}
\sup_{\bu \in \cU} \Big\{ & \dot{V}_1(t) M - V_2(t) \nu + \dot{V}_0(t) + (\varUpsilon M + \theta \sqrt{\nu} u) V_1(t) + (\kappa\phi - \kappa \nu) \int^T_t V_2(s)K(s-t)ds \cr
& - \frac{\gamma}{2} g^2_1(t) u^2 -  \sigma \rho u \sqrt{\nu} \gamma g_1(t) \int^T_t g_2(s) K(s-t) ds \cr
& - \frac{\gamma}{2} \Big( \int^T_t g_2(s) K(s-t) ds \Big)^2 \sigma^2 \nu \Big\} = 0, \quad V_1(T) = 1, \; V_0(T) = 0.
\end{align}
Therefore,
\begin{equation}
\hat \bu(t, \nu_t) = \frac{\theta V_1(t) - \gamma \sigma \rho g_1(t) \int^T_t g_2(s)K(s-t)ds}{\gamma g^2_1(t)} \sqrt{\nu_t}. 
\end{equation}
Furthermore, we have $g_1(T) = 1$, $g_0(T) = 0$, and
\begin{align}\label{Eq:Ag}
(\bA^{\hat \bu} g)(t, \omega) =& \dot{g}_1(t) M - g_2(t) \nu + \dot{g}_0(t) + \varUpsilon M g_1(t) + \frac{\theta^2 \nu V_1(t)}{\gamma g_1(t)}  - \rho \sigma \theta \nu \int^T_t g_2(s)K(s-t)ds \nonumber \\
& + (\kappa\phi - \kappa \nu ) \int^T_t g_2(s)K(s-t)ds = 0.  
\end{align}

By separation of variables and recognizing that $g_1(t) = V_1(t)$ from (\ref{Eq:HJB-V}) and (\ref{Eq:Ag}), we obtain the following:
\begin{align}
& \dot{g}_1(t) + \varUpsilon_t g_1(t) = 0, \quad g_1(T) = 1, \label{Eq:g1}\\
& g_2(t) + (\kappa + \rho \sigma \theta) \int^T_t g_2(s)K(s-t)ds - \frac{\theta^2}{\gamma} = 0, \label{Eq:g2} \\ 
& \dot{g}_0(t) + \kappa \phi \int^T_t g_2(s)K(s-t)ds = 0, \quad g_0(T) = 0 \label{Eq:g0}
\end{align}
and
\begin{align}
& \dot{V}_1(t) + \varUpsilon_t V_1(t) = 0, \quad V_1(T) = 1, \label{Eq:ConstMV_V1}\\
& V_2(t) + \kappa \int^T_t V_2(s)K(s-t)ds + \frac{\gamma \sigma^2}{2} \Big( \int^T_t g_2(s) K(s-t) ds \Big)^2 \cr
&\qquad - \frac{\left(\theta - \gamma \sigma \rho \int^T_t g_2(s)K(s-t)ds\right)^2}{2 \gamma}= 0, \label{Eq:ConstMV_V2} \\ 
& \dot{V}_0(t) + \kappa \phi \int^T_t V_2(s)K(s-t)ds = 0, \quad V_0(T) = 0. \label{Eq:ConstMV_V0}
\end{align}

The system in (\ref{Eq:g1})--(\ref{Eq:ConstMV_V0}) can be solved explicitly. First, $g_1(t) = V_1(t) = e^{ \int^T_t \varUpsilon_sds}$. (\ref{Eq:g2}) is a linear Volterra integral equation (VIE). Existence and uniqueness can be determined using \citet[Equation (1.3), p.77]{gripenberg1990volterra}. Let $\lambda = \kappa + \rho \sigma \theta$ and recall that $R_\lambda$ is the resolvent of $\lambda K$:
\begin{equation}
g_2(t) = \frac{\theta^2}{\gamma} - \frac{\theta^2}{\gamma} \int^T_t R_\lambda(s-t)ds.
\end{equation}
Additionally, a useful result is
\begin{equation}
\int^T_t g_2(s)K(s-t)ds = \frac{\theta^2}{\gamma} \int^T_t \frac{R_\lambda(s-t)}{\lambda} ds.
\end{equation}
$g_0$ can then be solved directly. $V_2$ in (\ref{Eq:ConstMV_V2}) is also a linear VIE that can be solved in the same way as $g_2$. $V_0$ is solved after $V_2$. Although the calculation is straightforward, it is lengthy. As such, we omit it here. Ultimately,
\begin{equation}\label{Eq:ConstMV_u}
\hat \bu(t, \nu_t) = \frac{\theta}{\gamma} e^{-\int^T_t \varUpsilon_sds} \sqrt{\nu_t} -  \frac{\rho \sigma \theta^2}{\gamma} e^{-\int^T_t \varUpsilon_sds} \int^T_t \frac{R_\lambda(s-t)}{\lambda} ds \sqrt{\nu_t}. 
\end{equation}

Then the support for the wealth process is $\R$. The first term in  (\ref{Eq:ConstMV_u}) is the same as the constant volatility case. The second term can be interpreted as a hedge for the randomness from stochastic volatility. Roughness alters the hedge through the resolvent $R_\lambda$.

Verifying that $\hat \bu$ in (\ref{Eq:ConstMV_u}) is admissible in the sense of Definition \ref{Def:U} is straightforward. Indeed, Assumption \ref{Assum:SVIE} holds in view of the moment estimates in \citet[Lemma 3.1]{abi2017affine} for $\nu$. Other requirements in Definition \ref{Def:U} are direct. We summarize the analysis above in the following lemma.
\begin{lemma} \label{Lem:ConstMV}
	Problem (\ref{Eq:ConstMV_Reward}) under the Volterra Heston model (\ref{Eq:VoltHeston}) has an equilibrium strategy $\hat \bu$ given by (\ref{Eq:ConstMV_u}), which is admissible in the sense of Definition \ref{Def:U}. The value function is given by (\ref{Eq:ConstMV_Ansatz_V}), with $V_1, V_2$, and $V_0$ given by (\ref{Eq:ConstMV_V1}), (\ref{Eq:ConstMV_V2}), and (\ref{Eq:ConstMV_V0}), respectively. $g$ in (\ref{Eq:g}) is given by (\ref{Eq:ConstMV_Ansatz_g}), with $g_1, g_2$, and $g_0$ given by (\ref{Eq:g1}), (\ref{Eq:g2}), and (\ref{Eq:g0}), respectively.
\end{lemma}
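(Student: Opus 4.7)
The plan is to apply the verification theorem (Theorem \ref{Thm:Verification}) to the affine Ansatz (\ref{Eq:ConstMV_Ansatz_V})--(\ref{Eq:ConstMV_Ansatz_g}), and then reduce the extended PHJB system to a collection of ordinary differential equations and linear Volterra integral equations of the second kind that can be solved in closed form via resolvent theory. The Ansatz is motivated by (i) the terminal data being quadratic in $\omega_T$ through $G(t,y,z)=\gamma z^2/2$ only, so that nonlinearity enters the PHJB equation solely through the term $\partial_y G\cdot \bA^{\hat\bu}g - \bA^{\hat\bu}(G\diamond g)$, which is quadratic in $g$; and (ii) the Volterra Heston variance is affine in the forward curve $\Theta^t_{[t,T]}$, so linear functionals of $\Theta^t_{[t,T]}$ remain in the same class under $\bA^{\bu}$.

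First, I would plug the Ansatz into the operator $\bA^{\bu}$ using the functional It\^o derivatives of the linear functional $\omega\mapsto\int_t^T V_2(s)\Theta^t_s\,ds$, as laid out in the display computations preceding the lemma. This reduces (\ref{Eq:V}) to a scalar quadratic in $u$; the first-order condition yields $\hat\bu$ in terms of $V_1,g_1,g_2$ and $K$, and after identifying $g_1\equiv V_1$ by matching the coefficient of $x$, separation of variables in $x$, $\nu$ and the constant term decouples the PHJB system into the two triples (\ref{Eq:g1})--(\ref{Eq:g0}) and (\ref{Eq:ConstMV_V1})--(\ref{Eq:ConstMV_V0}).

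Next, I would solve this system explicitly. The ODEs for $g_1$ and $V_1$ integrate immediately to $g_1(t)=V_1(t)=\exp(\int_t^T\varUpsilon_s\,ds)$. Equation (\ref{Eq:g2}) is a linear VIE of the second kind in $g_2$ with kernel $\lambda K$, where $\lambda=\kappa+\rho\sigma\theta$; existence and uniqueness follow from \citet[Theorem 1.2.3]{brunner2017volterra}, and the solution admits the resolvent representation $g_2(t)=(\theta^2/\gamma)\bigl(1-\int_t^T R_\lambda(s-t)\,ds\bigr)$. The key simplification is the convolution identity $\lambda K*\mathbf{1} - R_\lambda*\mathbf{1}= (K*R_\lambda)*\mathbf{1}$, which collapses $\int_t^T g_2(s)K(s-t)\,ds$ to $(\theta^2/\gamma)\int_t^T R_\lambda(s-t)/\lambda\,ds$; substituting this into the first-order condition produces (\ref{Eq:ConstMV_u}). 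The functions $g_0$, $V_2$, $V_0$ are then obtained by the same VIE or ODE machinery in sequence.

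Finally, I would check the hypotheses of Theorem \ref{Thm:Verification}. Since $\hat\bu(t,\nu_t)=C(t)\sqrt{\nu_t}$ for a bounded deterministic $C(\cdot)$, the wealth equation (\ref{Eq:wealth}) becomes a linear SDE in $X^{\hat\bu}$ with coefficients controlled by $\sqrt{\nu}$, so the $L^p$ moment bound in Assumption \ref{Assum:SVIE} follows from \citet[Lemma 3.1]{abi2017affine} together with a Gronwall argument, and the remaining measurability/integrability requirements in Definition \ref{Def:U} are immediate. The affine structure of the Ansatz makes the functional It\^o derivatives of $V$, $g$, $f^{s,y}$, $c^r$, $c^{s,y,r}$ explicit linear maps, so the regularity of Assumption \ref{Assum:ValueFunc} reduces to polynomial growth of these derivatives in $\omega$, which is evident. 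The step I expect to be the most delicate is justifying the application of the \emph{singular} version of the functional It\^o formula of \cite{viens2017martingale} in the rough regime $H<1/2$, which requires verifying that $s\mapsto V_2(s)K(s-t)$ and $s\mapsto g_2(s)K(s-t)$ are integrable near $s=t$ and interact correctly with the singular derivatives (\ref{Eq:singular1})--(\ref{Eq:singular2}); this reduces to the kernel regularity built into Assumption \ref{Assum:kernel}.
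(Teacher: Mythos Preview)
Your proposal is correct and follows essentially the same route as the paper: the derivation in Section~\ref{Sec:ConstMV} preceding the lemma already carries out exactly the steps you describe---plugging the affine Ansatz into $\bA^\bu$, extracting the quadratic in $u$, separating variables to obtain (\ref{Eq:g1})--(\ref{Eq:ConstMV_V0}), solving the linear VIE (\ref{Eq:g2}) via the resolvent $R_\lambda$, and then invoking \citet[Lemma~3.1]{abi2017affine} for the moment bound in Assumption~\ref{Assum:SVIE}. Your additional remarks on verifying Assumption~\ref{Assum:ValueFunc} and the singular functional It\^o formula in the rough regime are more explicit than what the paper records, but they do not depart from the intended argument.
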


\subsection{Log-MV: TC-MVP for log returns}\label{Sec:LogMV}
Instead of considering preferences for terminal wealth, \cite{dai2020log} argues that an analysis based on log returns is more plausible. The derived equilibrium strategy is wealth-dependent and will not short sell risky assets with positive excess returns over the long-term horizon. Suppose that the proportional amount of wealth in the stock is $\bpi$. $M^{\bpi}$ is the wealth process corresponding to $\bpi$. Denote $L_t^{\bpi} \triangleq \ln M^{\bpi}_t$. For ease of notation, we write $L_t = L_t^{\bpi}$. It follows that 
\begin{equation}\label{Eq:LogRet}
d L_t = [\varUpsilon_t + \theta \nu_t \pi_t - \frac{1}{2} \pi^2_t \nu_t ] dt + \sqrt{\nu_t} \pi_t dW_{1t}.
\end{equation}

Consider the TC-MVP in \cite{dai2020log} under the Volterra Heston model (\ref{Eq:VoltHeston}) and log return (\ref{Eq:LogRet}):
\begin{equation}\label{Eq:LogMV_Reward}
\E_t \left[ L_T \right] - \frac{\gamma}{2} \text{Var}_t \left[L_T\right] = \E_t \left[ L_T - \frac{\gamma}{2} (L_T)^2\right] + \frac{\gamma}{2} \left( \E_t[L_T] \right)^2.
\end{equation}

With a slight abuse of notation, we try the following Ansatz for $V$ in (\ref{Eq:V}) and for $g$ in (\ref{Eq:g}):
\begin{eqnarray}
V(t, \omega) &=& V\big(t, L, \Theta^t_{[t, T]}\big)  = L + \int^T_t V_2(s) \Theta^t_s ds + V_0(t), \label{Eq:LogMV_Ansatz_V} \\
g(t, \omega) &=& g \big(t, L, \Theta^t_{[t, T]} \big) = L + \int^T_t g_2(s) \Theta^t_s ds + g_0(t), \label{Eq:LogMV_Ansatz_g}
\end{eqnarray}
where $V_0$ and $g_0$ are deterministic continuously differentiable functions and $V_2$ and $g_2$ satisfy the suitable integrability conditions.

Equation (\ref{Eq:V}) in Definition \ref{Def:EP-HJB} becomes
\begin{align}\label{Eq:LogHJB-V}
\sup_{\bpi \in \cU} \Big\{ & - V_2(t) \nu + \dot{V}_0(t) + \varUpsilon + \theta \nu \pi - \frac{1}{2} \nu \pi^2 + (\kappa\phi - \kappa \nu) \int^T_t V_2(s)K(s-t)ds \cr
& - \frac{\gamma}{2} \nu \pi^2 -   \gamma \rho \sigma \nu \pi \int^T_t g_2(s) K(s-t) ds \cr
& - \frac{\gamma}{2} \Big( \int^T_t g_2(s) K(s-t) ds \Big)^2 \sigma^2 \nu \Big\} = 0, \quad V_0(T) = 0.
\end{align}
Therefore,
\begin{equation}
\hat \pi_t = \frac{\theta - \gamma \rho \sigma \int^T_t g_2(s)K(s-t)ds}{1 + \gamma}. 
\end{equation}
Furthermore, $(\bA^{\hat \bpi} g)(t, \omega)=0$ gives rise to
\begin{align}\label{Eq:LogAg}
& - g_2(t) \nu + \dot{g}_0(t) + \varUpsilon +  \frac{\theta \nu}{1 + \gamma} \big[\theta - \gamma \rho \sigma \int^T_t g_2(s)K(s-t)ds \big]   \nonumber \\
&  -\frac{\nu}{2(1+\gamma)^2} \big[\theta - \gamma \rho \sigma \int^T_t g_2(s)K(s-t)ds\big]^2 + (\kappa\phi - \kappa \nu ) \int^T_t g_2(s)K(s-t)ds = 0, \nonumber \\
&\quad g_0(T) = 0. 
\end{align}

By separation of variables, we obtain
\begin{align}
g_2(t) = & \frac{(1+2\gamma)\theta^2}{2(1+\gamma)^2} - \big[\kappa + \frac{\gamma^2 \rho \sigma \theta}{(1+\gamma)^2} \big] \int^T_t g_2(s)K(s-t)ds \nonumber \\
& - \frac{\gamma^2\rho^2\sigma^2}{2(1+\gamma)^2}  \big(\int^T_t g_2(s)K(s-t)ds \big)^2. \label{Eq:LogMVg2}
\end{align}
Let $\psi(T-t) = \int^T_t g_2(s)K(s-t)ds$, then convolve both sides of (\ref{Eq:LogMVg2}) with kernel $K(\cdot)$ and change $T-t$ to $t$. This yields the following Riccati-Volterra equation:
\begin{equation}\label{Eq:LogMVpsi}
\psi(t) = \int^t_0 K(t - s) \Big[ - \frac{\gamma^2\rho^2\sigma^2}{2(1+\gamma)^2}  \psi^2(s) - \big(\kappa + \frac{\gamma^2 \rho \sigma \theta}{(1+\gamma)^2} \big) \psi(s) + \frac{(1+2\gamma)\theta^2}{2(1+\gamma)^2} \Big] ds.
\end{equation}

Furthermore,
\begin{align}
& V_2(t) + \kappa \int^T_t V_2(s)K(s-t)ds + \frac{\gamma \sigma^2}{2}  \psi^2(T-t) - \frac{1}{2(1+\gamma)} \left[\theta - \gamma \rho \sigma \psi(T-t)\right]^2 = 0, \label{Eq:LogMV_V2} \\ 
& \dot{V}_0(t) + \varUpsilon_t + \kappa \phi \int^T_t V_2(s)K(s-t)ds = 0, \quad V_0(T) = 0, \label{Eq:LogMV_V0} \\
& \dot{g}_0(t) + \varUpsilon_t + \kappa \phi \psi(T-t) = 0, \quad g_0(T) = 0. \label{Eq:LogMVg0}
\end{align}

\begin{corollary}\label{Cor:Logpsi}
	Suppose that $\kappa + \frac{\gamma^2 \rho \sigma \theta}{(1+\gamma)^2} > 0$. Then, (\ref{Eq:LogMVpsi}) has a unique global continuous solution on $[0, T]$. Define 
	\begin{equation}
	H(w) \triangleq \frac{\gamma^2\rho^2\sigma^2}{2(1+\gamma)^2}  w^2 - \big(\kappa + \frac{\gamma^2 \rho \sigma \theta}{(1+\gamma)^2} \big) w - \frac{(1+2\gamma)\theta^2}{2(1+\gamma)^2} \triangleq H_2 w^2 +  H_1 w + H_0.
	\end{equation}
	Then,
	\begin{equation}
	0 < \psi(t) \leq -r_1(t) < - w_* , \quad \forall \; t > 0,
	\end{equation}
	with $ w_* = \frac{-H_1 - \sqrt{H_1^2 - 4H_2H_0}}{2H_2} < 0$ and $r_1(t) \triangleq Q^{-1}_1 \big( \int^t_0 K(s) ds \big)$, where $Q_1(w) = -\int^0_w \frac{du}{H(u)}$.
	
		In addition, system (\ref{Eq:LogMV_V2})-(\ref{Eq:LogMVg0}) has a unique continuous solution $(V_2, V_0, g_0)$ on $[0, T]$.
	
\end{corollary}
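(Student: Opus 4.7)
The plan is to handle the Riccati--Volterra equation (\ref{Eq:LogMVpsi}) first, and then treat the linear system (\ref{Eq:LogMV_V2})--(\ref{Eq:LogMVg0}) for $(V_2, V_0, g_0)$, which, once $\psi$ is known to be continuous and bounded, depends on it only through continuous coefficients. A preliminary substitution $\widetilde\psi := -\psi$ converts (\ref{Eq:LogMVpsi}) into the standard form $\widetilde\psi = K * H(\widetilde\psi)$, since $\psi^2 = \widetilde\psi^{\,2}$. Because $H$ is a convex quadratic with $H(0) = H_0 < 0$ and smaller real root $w_* < 0$, the target bound $0 < \psi(t) \le -r_1(t) < -w_*$ is equivalent to $w_* < r_1(t) \le \widetilde\psi(t) \le 0$.

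Existence and uniqueness of $\psi$ I would get in three standard steps. First, local well-posedness: viewing $T[\psi](t) = \int_0^t K(t-s)[-H_2 \psi^2(s) + H_1 \psi(s) - H_0]\, ds$ as a map on the ball of radius $R$ in $C([0, T_0])$, and using $\int_0^h K(s)\,ds \le h^{1/2}(\int_0^h K^2(s)\,ds)^{1/2} = O(h^{(1+\tau)/2})$ from Assumption \ref{Assum:kernel}, one checks that $T$ is a strict contraction on this ball for $T_0$ small, yielding a unique local continuous solution. Second, the a priori bound: the comparison function $r_1$ obeys the pointwise ODE $r_1'(t) = K(t)\,H(r_1(t))$ with $r_1(0) = 0$, equivalently $Q_1(r_1(t)) = \int_0^t K(s)\,ds$, and stays strictly between $w_*$ and $0$ because $H(r_1) < 0$ on $(w_*, 0)$ and $H(w_*) = 0$ is absorbing. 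To transfer this bound to $\widetilde\psi$, I would exploit the complete monotonicity of $K$ from Assumption \ref{Assum:kernel}: via the Bernstein representation $K(t) = \int_0^\infty e^{-xt}\,\mu(dx)$, $\widetilde\psi$ admits a Markovian lift through a family of OU-type functionals indexed by $x$, and a first-exit argument at $\widetilde\psi(t_0) = w_*$ is ruled out by the sign pattern $H(\widetilde\psi) < 0$ on $(w_*, 0)$. An alternative I would fall back on is to approximate $K$ by finite exponential sums (where the ODE comparison is elementary) and pass to the limit using the continuity in Assumption \ref{Assum:kernel}. Third, since the a priori bound confines $\psi$ to the closed set $[0, -w_*]$ on which the quadratic nonlinearity is uniformly Lipschitz, the local solution extends globally with uniqueness preserved on $[0, T]$.

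For the linear system, given the global continuous $\psi$, (\ref{Eq:LogMV_V2}) reads $V_2(t) + \kappa \int_t^T K(s-t) V_2(s)\,ds = \Phi_\psi(t)$ for a continuous forcing $\Phi_\psi$; after the time reversal $\bar V_2(t) := V_2(T-t)$ this is a second-kind linear Volterra equation with kernel $\kappa K$, which has a unique continuous solution via the resolvent theory in \cite{gripenberg1990volterra}. With $V_2$ in hand, (\ref{Eq:LogMV_V0}) and (\ref{Eq:LogMVg0}) are ODEs in $V_0$ and $g_0$ with continuous right-hand sides, solved by direct integration. The main obstacle is Step 2, the a priori bound on $\psi$: in a Markovian/ODE setting it would be an immediate phase-plane argument, but the convolution kernel $K(t-s)$ (rather than $K(t)$) couples the solution to its whole past, so a naive first-crossing contradiction does not close; the complete monotonicity assumption on $K$ is precisely the feature that makes comparison with the ODE envelope $r_1$ tractable.
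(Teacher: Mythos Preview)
Your approach is correct and mirrors the paper's: the substitution $\widetilde\psi = -\psi$ yielding $\widetilde\psi = K * H(\widetilde\psi)$, the comparison with the ODE envelope $r_1$ via complete monotonicity of $K$, and the resolvent/ODE treatment of $(V_2, V_0, g_0)$ are exactly what the paper does. The paper's proof is essentially one line because it cites \citet[Theorem~A.5(a)]{gatheral2019affine} directly for the Riccati--Volterra part; your three-step outline (local contraction, a~priori bound, global continuation) is an unpacking of that theorem's argument, and your identification of the comparison step as the main obstacle is accurate.
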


Ultimately, an equilibrium strategy is expressed as follows:
\begin{equation}\label{Eq:LogMV_pi}
\hat \pi_t = \frac{\theta}{1 + \gamma} - \frac{\gamma \rho \sigma }{1 + \gamma} \psi(T-t).
\end{equation}

For the admissibility of $\hat{\bpi}$, we have the following result about Assumption \ref{Assum:SVIE}.
\begin{corollary}\label{Cor:LogMoment}
	Assume that (\ref{Eq:LogMVpsi}) has a unique continuous solution on $[0, T]$. Suppose that
	\begin{equation}\label{Assum:LogExpMon}
	\hatE \Big[ e^{c \int^T_0 \nu_s ds} \Big] < \infty,
	\end{equation}
	with constant $c$ expressed as follows:
	\begin{equation}\label{Eq:Const_a}
	c = \max \Big\{ 2 p |\theta| \sup_{t \in [0, T]} | \hat{\pi}_t |, (8p^2 - 2p) \sup_{t \in [0, T]}  \hat{\pi}^2_t \Big\}, \quad \text{for certain } p > 1. 
	\end{equation}
	Then, wealth $M^*$ under $\hat{\bpi}$ satisfies 
	\begin{equation}\label{Eq:X*Integral}
	\hatE \Big[ \sup_{ t \in [0, T]} | M^*_t |^p \Big] < \infty.
	\end{equation}
\end{corollary}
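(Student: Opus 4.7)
The plan is to control $\E[\sup_{t\in[0,T]}(X^*_t)^p]$ by combining a log-wealth representation of $(X^*)^p$ with Doob's maximal inequality and iterated Cauchy--Schwarz, using Novikov's criterion to identify the relevant stochastic exponentials as true martingales. Since (\ref{Eq:LogMVpsi}) is assumed to admit a continuous solution on $[0,T]$, the equilibrium $\hat\pi_t$ given by (\ref{Eq:LogMV_pi}) is continuous and hence bounded on $[0,T]$; in particular $\sup_{t\in[0,T]}|\hat\pi_t|$ and $\sup_{t\in[0,T]}\hat\pi_t^2$ appearing in (\ref{Eq:Const_a}) are finite. Applying It\^o's formula to $x\mapsto x^p$ in the log-return SDE (\ref{Eq:LogRet}) with $\pi=\hat\pi$ gives
$$(X^*_t)^p = x_0^p\,e^{p\int_0^t\Upsilon_s\,ds}\cdot\mathcal{E}(pM)_t\cdot\exp\!\left(p\theta\int_0^t\nu_s\hat\pi_s\,ds+\tfrac{p^2-p}{2}\int_0^t\nu_s\hat\pi_s^2\,ds\right),$$
where $M_t=\int_0^t\sqrt{\nu_s}\hat\pi_s\,dW_{1s}$, $\langle M\rangle_t=\int_0^t\nu_s\hat\pi_s^2\,ds\le\sup\hat\pi^2\int_0^T\nu_s\,ds$, and $\mathcal{E}(pM)_t=\exp(pM_t-\tfrac{p^2}{2}\langle M\rangle_t)$ is the stochastic exponential.

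Next I would bound the two drift integrals pathwise by $\sup|\hat\pi|\int_0^T\nu_s\,ds$ and $\sup\hat\pi^2\int_0^T\nu_s\,ds$, then apply Cauchy--Schwarz to separate the supremum of $\mathcal{E}(pM)$ from the drift factor, and a second H\"older splitting of the drift into a first-order piece and a second-order piece. The splitting is orchestrated so that the exponent on $\int_0^T\nu_s\,ds$ in the first factor reads $2p|\theta|\sup|\hat\pi|$ (matching the first component of $c$ in (\ref{Eq:Const_a})) and in the second factor is dominated by $(8p^2-2p)\sup\hat\pi^2$. Doob's $L^2$ maximal inequality then reduces the stochastic part to $\E[\mathcal{E}(pM)_T^2]$, and Novikov's criterion for $\mathcal{E}(2pM)$ (valid since $2p^2\sup\hat\pi^2\le(8p^2-2p)\sup\hat\pi^2\le c$ for $p>1$) guarantees $\mathcal{E}(pM)$ and $\mathcal{E}(2pM)$ are true martingales on $[0,T]$.

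The last step bounds $\E[\mathcal{E}(pM)_T^2]$ using the algebraic identities $\mathcal{E}(\lambda M)_T^2=\mathcal{E}(2\lambda M)_T\cdot e^{\lambda^2\langle M\rangle_T}$ and $\mathcal{E}(\lambda M)_T\cdot e^{\lambda^2\langle M\rangle_T/2}=e^{\lambda M_T}$, together with Cauchy--Schwarz and the supermartingale bound $\E[\mathcal{E}(\lambda M)_T]\le 1$. A first application gives $\E[\mathcal{E}(pM)_T^2]\le\E[e^{2pM_T}]^{1/2}$, and a further application rewriting $\E[e^{2pM_T}]=\E[\mathcal{E}(2pM)_T\,e^{2p^2\langle M\rangle_T}]$ reduces everything to a sharp exponential moment of $\langle M\rangle_T$ controlled by the hypothesis. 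Assembling all factors yields $\E[\sup_t(X^*_t)^p]\le C\,\E[e^{c\int_0^T\nu_s\,ds}]^{\alpha}<\infty$ for some $\alpha>0$ and constant $C$ depending on $p,\theta,\Upsilon,\hat\pi,x_0$.

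The main obstacle is the careful bookkeeping of coefficients in this last step: a naive iteration of Cauchy--Schwarz would require $\E[e^{8p^2\sup\hat\pi^2\int_0^T\nu_s\,ds}]$ rather than the sharper $\E[e^{(8p^2-2p)\sup\hat\pi^2\int_0^T\nu_s\,ds}]$ afforded by the hypothesis, so one must orchestrate the applications of the closure identity $\mathcal{E}(\lambda M)_T\cdot e^{\lambda^2\langle M\rangle_T/2}=e^{\lambda M_T}$ together with the supermartingale bound $\E[\mathcal{E}(\lambda M)_T]\le 1$ so that the surplus $2p\sup\hat\pi^2$ is absorbed into a factor that is either identically $1$ or already paid for in the drift-splitting step.
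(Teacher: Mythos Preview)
Your overall strategy---log-wealth representation, Doob, and Cauchy--Schwarz against a stochastic exponential with expectation at most $1$---is exactly the paper's. But the specific decomposition you choose creates a gap you do not close. Note the identity
\[
\mathcal{E}(pM)_t\cdot\exp\!\Big(\tfrac{p^2-p}{2}\langle M\rangle_t\Big)=\mathcal{E}(M)_t^{\,p},
\]
so your representation is just $(X^*_t)^p=(\text{bounded})\cdot e^{p\theta\int_0^t\nu_s\hat\pi_s\,ds}\cdot\mathcal{E}(M)_t^{\,p}$. The paper keeps $\mathcal{E}(M)^p$ intact, separates \emph{only} the first-order drift (via $ab\le a^2/2+b^2/2$, equivalently Cauchy--Schwarz), applies Doob's $L^{2p}$-inequality to the martingale $\mathcal{E}(M)$ to reduce to $\E[\mathcal{E}(M)_T^{2p}]=\E[e^{2pM_T-p\langle M\rangle_T}]$, and then one Cauchy--Schwarz step---writing $e^{2pM_T-p\langle M\rangle_T}=(e^{(8p^2-2p)\langle M\rangle_T})^{1/2}(\mathcal{E}(4pM)_T)^{1/2}$ and using $\E[\mathcal{E}(4pM)_T]\le 1$---delivers exactly the constant $(8p^2-2p)\sup\hat\pi^2$.

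By contrast, once you separate $\mathcal{E}(pM)$ from the quadratic drift $\tfrac{p^2-p}{2}\langle M\rangle$ via Cauchy--Schwarz, the drift factor carries exponent $2p|\theta|\sup|\hat\pi|+(p^2-p)\sup\hat\pi^2$ on $\int_0^T\nu_s\,ds$. This can exceed $c=\max\{2p|\theta|\sup|\hat\pi|,(8p^2-2p)\sup\hat\pi^2\}$: whenever the first term realises the maximum, any positive addition to it already overshoots $c$. No subsequent H\"older splitting can repair this, since H\"older only enlarges exponents. Your last paragraph correctly diagnoses that naive iteration loses a $2p\sup\hat\pi^2$, but the remedy is not a cleverer iteration---it is to avoid separating the quadratic drift from the stochastic exponential in the first place, i.e.\ to work with $\mathcal{E}(M)$ rather than $\mathcal{E}(pM)$.
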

Thus, we have the following lemma.
\begin{lemma} \label{Lem:LogMV}
	Problem (\ref{Eq:LogMV_Reward}) under the Volterra Heston model (\ref{Eq:VoltHeston}) has an equilibrium strategy $\hat \bpi$ given by (\ref{Eq:LogMV_pi}). If Assumption (\ref{Assum:LogExpMon}) holds for a large enough constant $c>0$, then the equilibrium strategy (\ref{Eq:LogMV_pi}) is admissible in the sense of Definition \ref{Def:U}. The value function is given by (\ref{Eq:LogMV_Ansatz_V}), with $V_2$ and $V_0$ given by (\ref{Eq:LogMV_V2}) and (\ref{Eq:LogMV_V0}), respectively. $g$ in (\ref{Eq:g}) is given by (\ref{Eq:LogMV_Ansatz_g}), with $g_2$ and $g_0$ given by (\ref{Eq:LogMVg2}) and (\ref{Eq:LogMVg0}), respectively.
\end{lemma}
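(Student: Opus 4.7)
The plan is to invoke the verification theorem (Theorem~\ref{Thm:Verification}) with the candidate $(V, g)$ given by the Ansatz (\ref{Eq:LogMV_Ansatz_V})--(\ref{Eq:LogMV_Ansatz_g}), the candidate equilibrium strategy $\hat{\bpi}$ from (\ref{Eq:LogMV_pi}), and the auxiliary functions $V_2, V_0, g_2, g_0$ determined by the coupled system (\ref{Eq:LogMVg2})--(\ref{Eq:LogMVg0}). The argument proceeds in three stages: (i) show that the Ansatz yields a genuine solution to the extended PHJB system of Definition~\ref{Def:EP-HJB}; (ii) check that $\hat{\bpi}$ is admissible in the sense of Definition~\ref{Def:U}; (iii) verify the regularity assumption on the value function so that Theorem~\ref{Thm:Verification} applies.

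For stage (i), the bulk of the algebra is already carried out in the lead-up to the lemma: substituting the linear-in-$L$ Ansatz into the operator $\bA^{\bpi}$ and using continuity of $\omega^\nu$ at $t$ (so that $\Theta^t_t = \nu_t$) produces equation (\ref{Eq:LogHJB-V}), whose strictly concave quadratic in $\pi$ is maximised by (\ref{Eq:LogMV_pi}). Feeding $\hat{\bpi}$ back into $(\bA^{\hat{\bpi}} g)(t,\omega)=0$ and separating the $\nu$-linear term from the deterministic part gives the Riccati--Volterra equation (\ref{Eq:LogMVg2}) and the integral form (\ref{Eq:LogMVpsi}) for $\psi(T-t) := \int_t^T g_2(s)K(s-t)ds$; the same procedure applied to $(\bA^{\hat{\bpi}} V)(t,\omega)=0$ yields (\ref{Eq:LogMV_V2})--(\ref{Eq:LogMV_V0}). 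Existence, uniqueness, and boundedness of $\psi, V_2, V_0, g_0$ on $[0,T]$ are precisely what Corollary~\ref{Cor:Logpsi} delivers, and the boundary conditions follow from $V(T,\omega)=L_T - (\gamma/2) L_T^2 + (\gamma/2) L_T^2 = L_T$ and $g(T,\omega)=L_T$. Because $F$ has no explicit state dependence and $c^r \equiv 0$, the equations for $f^{s,y}$ and $c^{s,y,r}$ are trivial and impose no extra constraints.

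For stage (ii), admissibility requires the moment bound in Assumption~\ref{Assum:SVIE}, which Corollary~\ref{Cor:LogMoment} supplies directly once the exponential moment condition (\ref{Assum:LogExpMon}) is assumed; measurability and boundedness of $\hat{\pi}_t$ are immediate from (\ref{Eq:LogMV_pi}) since $\psi$ is continuous on $[0,T]$ by Corollary~\ref{Cor:Logpsi}. The remaining requirements of Definition~\ref{Def:U} reduce to straightforward checks on the deterministic feedback law $\hat{\pi}_t$ depending only on $t$.

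The main obstacle I anticipate is stage (iii): verifying Assumption~\ref{Assum:ValueFunc} for $V$ and $g$ over the whole space of admissible paths. Because $V$ and $g$ are \emph{affine} in $L$ (with time-dependent coefficients) and \emph{linear} in the forward variance curve $\Theta^t_{[t,T]}$, the functional derivatives $\partial_t$, $\partial_\omega$, and $\partial^2_{\omega\omega}$ defined via (\ref{Eq:t_deri})--(\ref{Eq:singular2}) can be computed explicitly and reduce to the kernels $K(s-t)$ paired against the bounded coefficients $V_2, g_2$; the delicate point is controlling the singular boundary $r\to t$ in $\mu^{\hat{\bpi}}$ and $\sigma^{\hat{\bpi}}$, which is handled exactly as in the functional It\^o framework of \cite{viens2017martingale} invoked in the proof of Theorem~\ref{Thm:Verification}. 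Polynomial growth of $F(\cdot,\cdot,L_T) = L_T - (\gamma/2) L_T^2$ in the path norm $\|\cdot\|_T$ is clear, and the exponential moment bound on $\nu$ combined with (\ref{Eq:X*Integral}) ensures that the stochastic integrals arising in the proof of the verification theorem are true martingales rather than mere local martingales, so that Lemmas~\ref{Lem:Recursion} and~\ref{Lem:TrueMartingale} apply and the conclusion of Theorem~\ref{Thm:Verification} follows.
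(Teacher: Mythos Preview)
Your proposal is correct and follows essentially the same route as the paper: the text preceding the lemma carries out exactly the Ansatz--substitution--separation you describe, Corollary~\ref{Cor:Logpsi} supplies existence and boundedness of $\psi$, $V_2$, $V_0$, $g_0$, and Corollary~\ref{Cor:LogMoment} provides the moment bound for admissibility, after which the paper simply declares the lemma ``direct''. One minor wording issue: when you write ``the same procedure applied to $(\bA^{\hat{\bpi}} V)(t,\omega)=0$'', you mean the full PHJB equation (\ref{Eq:LogHJB-V}) with $\hat{\pi}$ substituted (which retains the $-\tfrac{\gamma}{2}\nu\pi^2$ and cross terms coming from $\bA^\bu(G\diamond g)$), not the bare $\bA^{\hat{\bpi}} V=0$; the resulting equations are indeed (\ref{Eq:LogMV_V2})--(\ref{Eq:LogMV_V0}).
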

\begin{remark}\label{Rem:FracConst}
	For the specific fractional kernel $K(t) = \frac{t^{H-1/2}}{\Gamma(H+1/2)}$, Assumption (\ref{Assum:LogExpMon}) holds for a large enough constant $c>0$ when the time horizon $T$ is sufficiently small. See \cite{gerhold2019moment}.
\end{remark}
\begin{remark}\label{Rem:GenHeston}
	For the general Heston specification considered in \cite{basak2010dynamic,dai2020log}, 
	\begin{equation}\label{Eq:Genstock}
	dS_t = S_t (\varUpsilon_t + \theta \nu^{\frac{1+\delta}{2\delta}}_t) dt + S_t \nu^{\frac{1}{2\delta}}_t dW_{1t}.
	\end{equation}
	An equilibrium strategy is
	\begin{equation}
	\hat \pi_t = \Big[ \frac{\theta}{1 + \gamma} - \frac{\gamma \rho \sigma }{1 + \gamma} \psi(T-t) \Big] \nu^{\frac{\delta - 1}{2\delta}}_t.
	\end{equation}
	The related proof and verification of admissibility are identical.
\end{remark}

\subsection{TC-MV under linear controlled Volterra processes}\label{Sec:LC-MV}
	In this section, we consider the same objective \eqref{Eq:LC-MVreward} as in Section \ref{Sec:ConstMV}: 
	\begin{equation}\label{Eq:LC-MVreward}
	\E_t \left[ X^\bu_T - \frac{\gamma}{2} (X^\bu_T)^2\right] + \frac{\gamma}{2} \left( \E_t[X^\bu_T] \right)^2.
	\end{equation}
	However, the state process $X^\bu$ is a one-dimensional linear controlled Volterra process given by \eqref{Eq:LC-MVstate}. $B, b, D$, and $\sigma$ are one-dimensional essentially bounded and deterministic measurable functions:
	\begin{equation}\label{Eq:LC-MVstate}
	X^\bu_t = x + \int^t_0 K(t-r)(B_r u_r + b_r)dr + \int^t_0 K(t-r)(D_r u_r + \sigma_r) dW_r.
	\end{equation}
	
	Consider the following Ansatz for $V$ and $g$. $\Theta^{t, \hat{\bu}}$ is defined as in \eqref{Eq:Theta} but with $\bu = \hat{\bu}$.
	\begin{eqnarray}
	V(t, \omega) &=& V_1(t) \Theta^{t, \hat{\bu}}_T + V_0(t), \label{Eq:LC-MV_Ansatz_V} \\
	g(t, \omega) &=& g_1(t) \Theta^{t, \hat{\bu}}_T + g_0(t), \label{Eq:LC-MV_Ansatz_g}
	\end{eqnarray}
	where $V_1, V_0, g_1$, and $g_0$ are deterministic continuously differentiable functions.
	
	Similarly, we have
	\begin{align*}
	(\bA^\bu V)(t, \omega) =& \dot{V}_1(t) \Theta^{t, \hat{\bu}}_T + \dot{V}_0(t) + V_1(t) K(T-t) (B_t u_t + b_t), \\
	(\bA^\bu g)(t, \omega) =& \dot{g}_1(t) \Theta^{t, \hat{\bu}}_T + \dot{g}_0(t) + g_1(t) K(T-t) (B_t u_t + b_t), \\
	\bA^\bu(G \diamond g)(t, \omega) =& \gamma g(t, \omega) (\bA^\bu g)(t, \omega) + \frac{\gamma}{2} g^2_1(t) K^2(T - t) (D_t u_t + \sigma_t)^2.
	\end{align*}
	
	Equation (\ref{Eq:V}) in Definition \ref{Def:EP-HJB} is
	\begin{align}\label{Eq:LC-HJB}
	\sup_{\bu \in \cU} \Big\{ & \dot{V}_1(t) \Theta^{t, \hat{\bu}}_T + \dot{V}_0(t) + V_1(t) K(T-t) (B_t u_t + b_t) \cr
	& - \frac{\gamma}{2} g^2_1(t) K^2(T - t) (D_t u_t + \sigma_t)^2  \Big\} = 0, \quad V_1(T) = 1, \; V_0(T) = 0.
	\end{align}
	Along with $(\bA^{\hat \bu} g)(t, \omega) =0$, we obtain the following:
	\begin{align}
	V_1(t) & = g_1(t) = 1, \\
	\dot{V}_0(t) & + \frac{(B_t - \gamma K(T-t) D_t \sigma_t)^2}{2 \gamma D^2_t} + K(T-t)b_t - \frac{\gamma}{2} K^2(T-t)\sigma^2_t = 0, \quad V_0(T) = 0, \label{Eq:LC-V0} \\
	\dot{g}_0(t) & + \frac{B^2_t - \gamma K(T-t) D_t B_t \sigma_t}{\gamma D^2_t} + K(T-t) b_t = 0, \quad g_0(T) = 0. \label{Eq:LC-g0}
	\end{align}
	An equilibrium control is
	\begin{equation}\label{Eq:LC-u}
	\hat \bu_t = \frac{B_t - \gamma K(T- t) D_t \sigma_t}{ \gamma K(T- t) D^2_t}. 
	\end{equation}
	When $K= \text{id}$, the solution reduces to that found in \citet[Section 4]{hu2012time}. Verifying the admissibility of \eqref{Eq:LC-u} is straightforward, as follows:
	\begin{lemma}
		Suppose that Assumption \ref{Assum:kernel} holds and $D^2_t \geq \delta >0$. Then, \eqref{Eq:LC-V0} and \eqref{Eq:LC-g0} admit unique solutions on $[0, T]$. \eqref{Eq:LC-u} is an admissible equilibrium strategy in the sense of Definition \ref{Def:U} for the MV objective \eqref{Eq:LC-MVreward} under the state \eqref{Eq:LC-MVstate}. 
	\end{lemma}

\section{Numerical analysis}\label{Sec:Num}
In this section, we numerically analyze the effects of roughness on stock demand and improvements on profitability. We concentrate on the rough Heston model \citep{el2019characteristic}, namely kernel $K(t) = \frac{t^{H-1/2}}{\Gamma(H+1/2)}$, $H \in (0, 1/2]$. We conduct this numerical study from three perspectives. First, we perform a sensitivity analysis by varying the roughness only and fixing the other parameters. However, roughness can have sophisticated interactions with other variables. Thus, second, we further consider a calibration situation. Third, we conduct an empirical study based on recent market data. Strategies with rough volatility outperform their classic counterparts in terms of profit and the Sharpe ratio. We mainly compare four strategies: the const-MV case (\ref{Eq:ConstMV_u}), the log-MV case (\ref{Eq:LogMV_pi}), the pre-committed MVP \citep{han2019mean}, and Merton's portfolio problem with power utility (CRRA) \citep{han2019merton}.

The interaction between time-inconsistency and volatility roughness can be complicated. We answer three questions. First, when volatility is rougher, should investors increase or decrease their stock demand? Second, do investors with different levels of risk aversion behave differently with respect to volatility roughness? Third, is it more profitable to incorporate rough volatility?

\subsection{Sensitivity analysis}
\subsubsection{Const-MV}
The equilibrium strategy in Lemma \ref{Lem:ConstMV} can be further simplified for the fractional kernel $K(t) = \frac{t^{\alpha-1}}{\Gamma(\alpha)}$ with $\alpha = H + 1/2$. For $\lambda  = 0$,
\begin{align}\label{Eq:R_lambda_0}
\int^T_t \frac{R_\lambda(s-t)}{\lambda} ds = \int^T_t K(s - t) ds = \frac{(T - t)^\alpha}{\Gamma(\alpha + 1)}. 
\end{align}
For $\lambda \neq 0$, the property of Mittag-Leffler functions in \citet[Appendix A.1]{el2019characteristic} shows that
\begin{align}\label{Eq:R_lambda}
\int^T_t \frac{R_\lambda(s-t)}{\lambda} ds = \frac{1 - E_{\alpha, 1} ( - \lambda (T - t)^\alpha)}{\lambda} \triangleq \frac{F^{\alpha, \lambda}(T - t)}{\lambda}. 
\end{align}
This enables us to compute the equilibrium strategy using an explicit formula. 

Figure (\ref{Fig:ConstMVHedge}) illustrates the plot for the hedging term,
\begin{equation}
-  \frac{\rho \sigma \theta^2}{\gamma} e^{-\int^T_t \varUpsilon_sds} \int^T_t \frac{R_\lambda(s-t)}{\lambda} ds.
\end{equation}
Note that we set a negative correlation between stock price and volatility in the plot to reflect the leverage effect. Figure \ref{Fig:ConstMV} numerically shows the following phenomenon. When the investment horizon is long (i.e., $t$ is small), the const-MV strategy advocates investing more if the stock is smoother. However, near the end of the investment horizon, the const-MV strategy prefers to invest more if the stock is rougher. We refer to this phenomenon as the {\it investment horizon effect}. It is distinct from the pre-committed MVP strategy obtained in \cite{han2019mean}. When stock volatility is smooth, the equilibrium strategy reduces the stock position gradually until the end of the investment period. In contrast, when stock volatility is rough, the equilibrium strategy suggests a relatively steady holding of the stock for a sufficiently long investment horizon but rapidly slashes the holding near the end of the investment horizon. The latter phenomenon also occurs in optimal investment problems with position limits, but our problem has no constraints on the position. In fact, the investment horizon effect can be shown mathematically by deriving asymptotic estimates of (\ref{Eq:R_lambda_0}) and (\ref{Eq:R_lambda}). We refer to the following corollary.

\begin{corollary}\label{Cor:F}
	Suppose that $\alpha \in (\frac{1}{2}, 1)$. Then, for sufficiently large values of $T - t$, integral $\int^T_t \frac{R_\lambda(s-t)}{\lambda} ds$ is increasing on $\alpha$. For sufficiently small values of $T - t$, $\int^T_t \frac{R_\lambda(s-t)}{\lambda} ds$ is decreasing on $\alpha$.
\end{corollary}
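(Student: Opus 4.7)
The plan is to work directly with the explicit representations given in \eqref{Eq:R_lambda_0}--\eqref{Eq:R_lambda}. Writing $\tau = T-t$ and
\[
G(\alpha,\tau) \;:=\; \int_t^T \frac{R_\lambda(s-t)}{\lambda}\,ds \;=\; \frac{1-E_{\alpha,1}(-\lambda\tau^\alpha)}{\lambda},
\]
with the natural limit $G(\alpha,\tau)=\tau^\alpha/\Gamma(\alpha+1)$ when $\lambda=0$, the task reduces to showing that $\partial_\alpha G(\alpha,\tau)<0$ uniformly in $\alpha\in(1/2,1)$ for all sufficiently small $\tau$, and $\partial_\alpha G(\alpha,\tau)>0$ uniformly in $\alpha\in(1/2,1)$ for all sufficiently large $\tau$. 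Throughout I would assume $\lambda\geq 0$, which is the financially relevant regime where $E_{\alpha,1}(-\lambda\tau^\alpha)$ decays (rather than explodes) as $\tau\to\infty$; the $\lambda=0$ case is subsumed by the leading-order analysis below.

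For the small-$\tau$ regime I would differentiate the Mittag--Leffler power series termwise,
\[
G(\alpha,\tau)=\sum_{n=1}^{\infty}\frac{(-1)^{n+1}\lambda^{n-1}\tau^{n\alpha}}{\Gamma(n\alpha+1)},\qquad
\partial_\alpha G(\alpha,\tau)=\sum_{n=1}^{\infty}\frac{(-1)^{n+1}n\lambda^{n-1}\tau^{n\alpha}\bigl[\ln\tau-\psi(n\alpha+1)\bigr]}{\Gamma(n\alpha+1)},
\]
with the interchange justified by the local-uniform convergence of the entire function $E_{\alpha,1}$ and its derivatives. The leading ($n=1$) contribution is $\tau^\alpha[\ln\tau-\psi(\alpha+1)]/\Gamma(\alpha+1)$; since $\psi(\alpha+1)$ is continuous and bounded on $[1/2,1]$, this term becomes strictly negative and uniformly dominant once $\ln\tau$ falls below $\inf_{\alpha\in(1/2,1)}\psi(\alpha+1)$. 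The remaining tail is $O(\lambda\tau^{2\alpha}|\ln\tau|)$ uniformly in $\alpha$, hence negligible compared to the $n=1$ term as $\tau\downarrow 0$, delivering the decreasing claim.

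For the large-$\tau$ regime I would invoke the standard asymptotic expansion of the Mittag--Leffler function (as surveyed in \cite{mainardi2014some} and \cite[Appendix A.1]{el2019characteristic}) valid for $\lambda\tau^\alpha\to+\infty$ with $\alpha\in(1/2,1)$,
\[
E_{\alpha,1}(-\lambda\tau^\alpha)=\frac{1}{\Gamma(1-\alpha)\,\lambda\tau^\alpha}+O(\tau^{-2\alpha}),
\]
so that $G(\alpha,\tau)=1/\lambda-[\lambda^{2}\Gamma(1-\alpha)\tau^\alpha]^{-1}+O(\tau^{-2\alpha})$. Since $\Gamma(1-\alpha)>0$ on $(1/2,1)$, differentiating the $O(\tau^{-\alpha})$ correction yields
\[
\partial_\alpha G(\alpha,\tau)=\frac{\tau^{-\alpha}\bigl[\ln\tau-\psi(1-\alpha)\bigr]}{\lambda^{2}\Gamma(1-\alpha)}+O(\tau^{-2\alpha}|\ln\tau|),
\]
which is strictly positive once $\ln\tau>\sup_{\alpha\in(1/2,1)}\psi(1-\alpha)$ and $\tau$ is large enough that the remainder does not swamp the leading term. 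The main technical obstacle will be exactly this last point: establishing that both the remainder in the Mittag--Leffler asymptotic expansion and its $\alpha$-derivative are of the claimed order uniformly in $\alpha\in(1/2,1)$. I would address it either via the Hankel contour representation of $E_{\alpha,1}$, where differentiation under the integral sign gives term-by-term control of the expansion, or by quoting the uniform-in-parameter version of the Mittag--Leffler asymptotics from the cited references.
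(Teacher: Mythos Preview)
Your proposal is correct and follows essentially the same route as the paper: both treat the $\lambda=0$ case directly via $\tau^\alpha/\Gamma(\alpha+1)$, and for $\lambda\neq 0$ both invoke the small- and large-argument asymptotics of $E_{\alpha,1}(-\lambda\tau^\alpha)$ from \cite{mainardi2014some,el2019characteristic}, then differentiate the leading term in $\alpha$ using the digamma function. You are in fact more careful than the paper, which simply differentiates the leading asymptotic without discussing the remainder; your explicit identification of the uniform-in-$\alpha$ control of the $O(\tau^{-2\alpha})$ error (and its $\alpha$-derivative) as the only nontrivial technical point is exactly right, and your suggested Hankel-contour justification is a standard and workable way to close that gap.
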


\begin{figure}
	\centering
	\begin{minipage}{0.5\textwidth}
		\centering
		\includegraphics[width=0.9\textwidth]{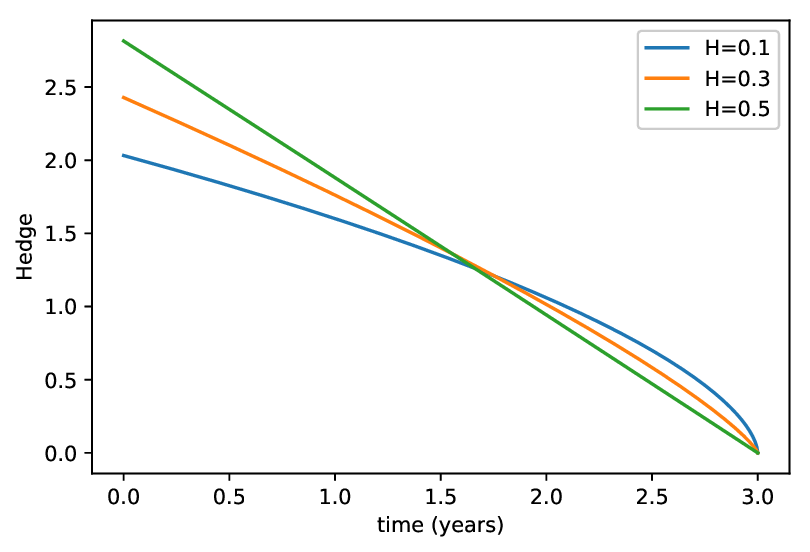}
		\subcaption{Hedge}\label{Fig:ConstMVHedge}
	\end{minipage}%
	\begin{minipage}{0.5\textwidth}
		\centering
		\includegraphics[width=0.9\textwidth]{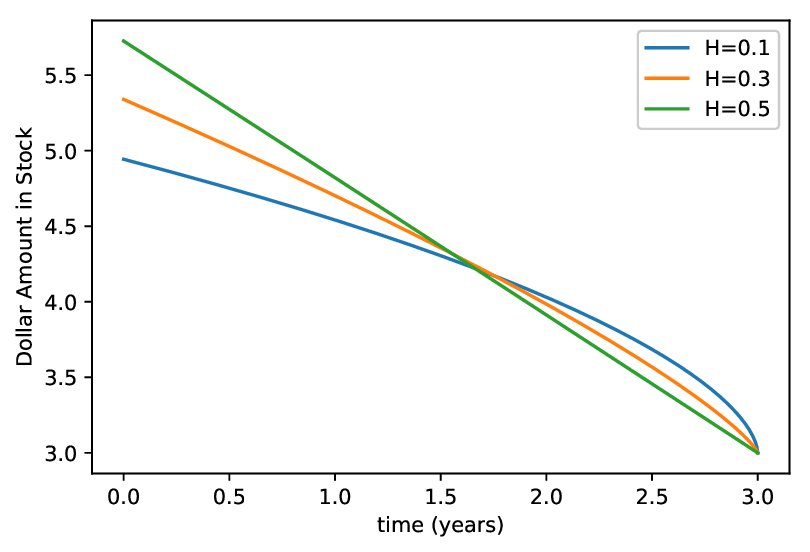}
		\subcaption{Dollar amount in stock}\label{Fig:ConstMVDollar}
	\end{minipage}
	\caption{Hedge term and dollar amount of the const-MV strategy. We set volatility-of-volatility $\sigma = 0.3$, mean reversion speed $\kappa = 0.3$, risk premium parameter $\theta = 1.5$, correlation $\rho = -0.7$, investment horizon $T = 3$, risk-free rate $\varUpsilon = 0.01$, and risk aversion $\gamma = 0.5$. $H = 0.5$ corresponds to the classic Heston model case.}\label{Fig:ConstMV}
\end{figure}

The behavior in Figure \ref{Fig:ConstMV} can be further enhanced using different investment horizons. In Figure \ref{Fig:ConstMVTime}, an investor with a relatively short horizon (e.g., 1 year) would simply have more stock demands if volatility is rougher. If the investment horizon is long enough (e.g., 10 years) and volatility is rougher, the investor purchases less for the first 8 years and purchases more for the remaining time. The diminished amount of wealth invested in the stock can be up to 40\% of the classic Heston counterpart. If other variables do not change, increments in volatility roughness reduce the willingness to retain more stock exposure in the long term. Roughness does have a material impact on the investment decisions of the investors with long planning horizons.
\begin{figure}[!h]
	\centering
	\begin{minipage}{0.5\textwidth}
		\centering
		\includegraphics[width=0.9\textwidth]{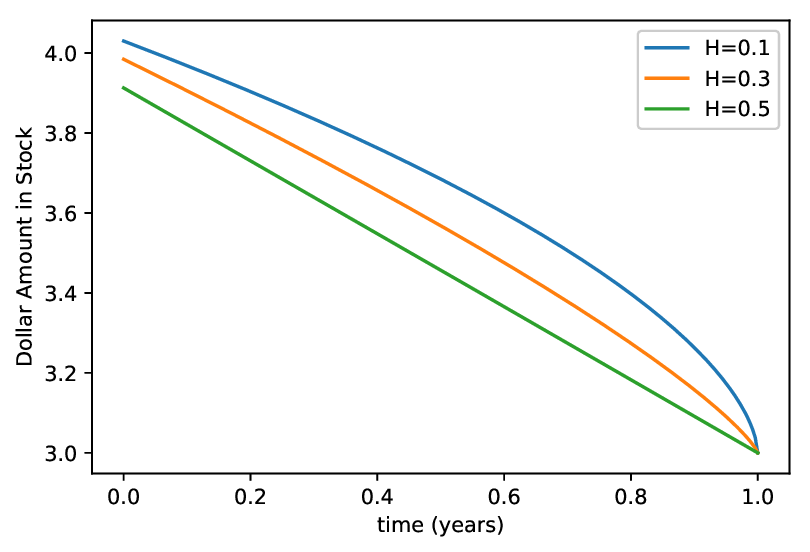}
		\subcaption{Short investment horizon}\label{Fig:ConstMVShort}
	\end{minipage}%
	\begin{minipage}{0.5\textwidth}
		\centering
		\includegraphics[width=0.9\textwidth]{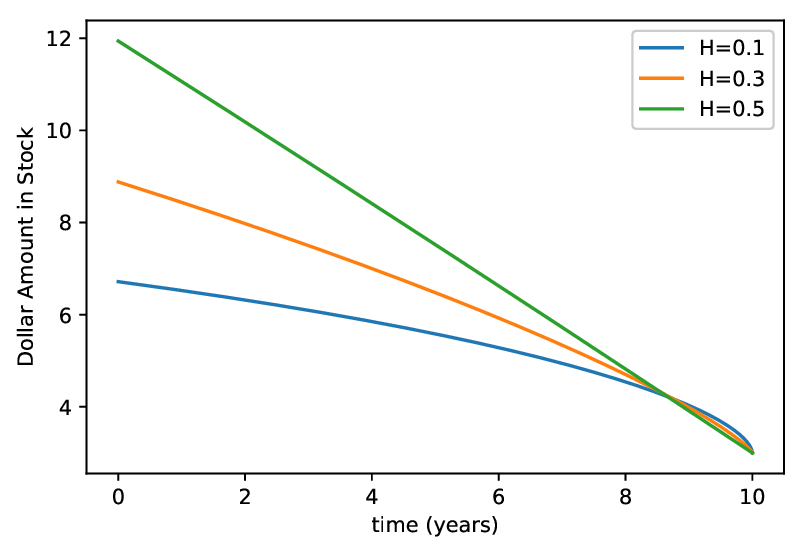}
		\subcaption{Long investment horizon}\label{Fig:ConstMVLong}
	\end{minipage}
	\caption{Dollar amount of the const-MV strategy with different investment horizons. We set all of the other parameters as in Figure \ref{Fig:ConstMV}.}\label{Fig:ConstMVTime}
\end{figure}

\subsubsection{Log-MV}
For the fractional kernel, the Riccati-Volterra equation (\ref{Eq:LogMVpsi}) can be solved numerically using the fractional Adams method, which is detailed in \citet[Section 5.1]{el2019characteristic}. The assumption in Corollary \ref{Cor:Logpsi} is validated for settings in Figures \ref{Fig:LogMV} and \ref{Fig:PreRough}. Assumption (\ref{Assum:LogExpMon}) is satisfied by Remark \ref{Rem:FracConst}.

The investment horizon effect is also observed in log-MV equilibrium strategies. When volatility is rougher, investors purchase less at first and more later on, as shown in Figure \ref{Fig:LogMV}. This is consistent with the const-MV counterpart. However, there is an important difference between the two criteria. For the const-MV case, the moment to prefer rough, shown as the interactions between the curves in Figure \ref{Fig:ConstMV}, is not affected by heterogeneity in risk aversion $\gamma$. This is not the case for the log-MV case. As shown in Figure \ref{Fig:PreRough}, when log-MV investors are more risk averse (i.e., the risk aversion parameter $\gamma$ is larger), they prefer rough earlier. This indicates that if only roughness varies and increases, then preferring rough earlier can minimize risk.
\begin{figure}
	\centering
	\begin{minipage}{0.5\textwidth}
		\centering
		\includegraphics[width=0.9\textwidth]{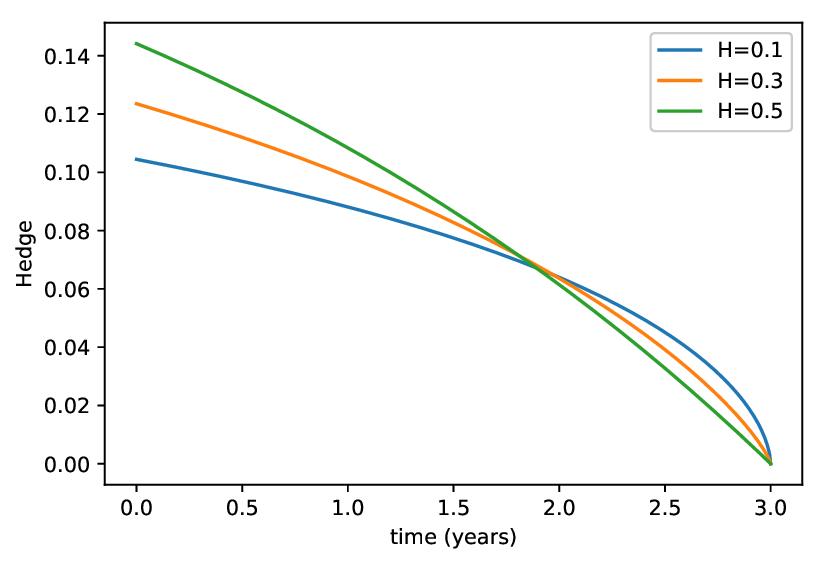}
		\subcaption{Hedge}\label{Fig:LogMVHedge}
	\end{minipage}%
	\begin{minipage}{0.5\textwidth}
		\centering
		\includegraphics[width=0.9\textwidth]{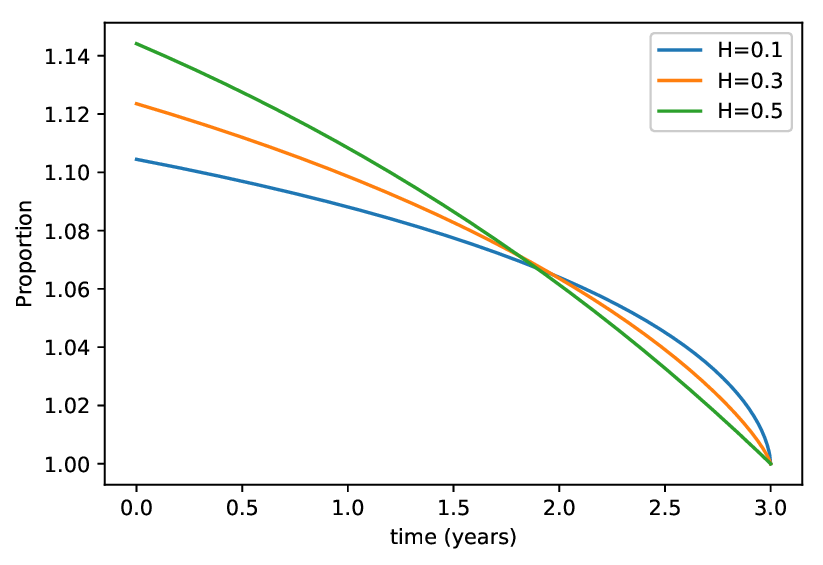}
		\subcaption{Proportion}\label{Fig:LogMVPo}
	\end{minipage}
	\caption{Hedge term and proportional amount of wealth in stock for the log-MV strategy. All of the parameters are the same as in Figure \ref{Fig:ConstMV}.}\label{Fig:LogMV}
\end{figure}

\begin{figure}
	\centering
	\includegraphics[width=0.5\textwidth]{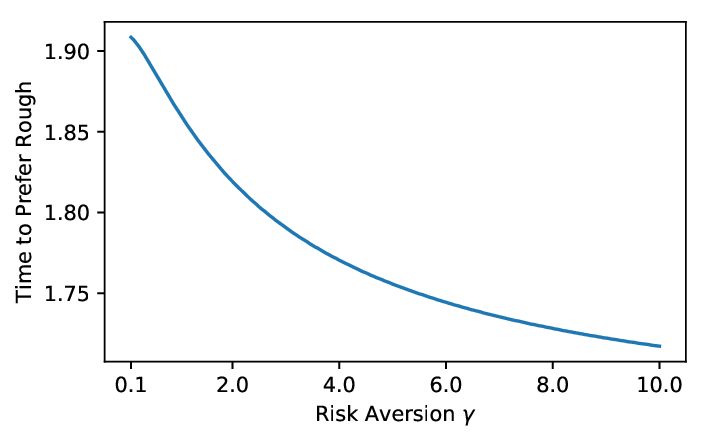}
	\caption{When to prefer rough. The interactions between any two curves in Figure \ref{Fig:ConstMV} are not identical. Thus, we consider $H=0.1$ and $H=0.5$ only. $\gamma \in [0.1, 10]$. The other parameters are the same as in Figure \ref{Fig:ConstMV}.}\label{Fig:PreRough}
\end{figure}
\subsubsection{Comparison with pre-committed MVP and CRRA utility}
We first summarize the comparison of parameter dependence for the const-MV, log-MV, pre-committed MVP, and CRRA utility cases. All four strategies depend on the risk premium $\theta$, risk aversion\footnote{Pre-committed MVP is replaced by target terminal wealth.} $\gamma$, the correlation $\rho$ between stock and volatility, volatility-of-volatility $\sigma$, the mean reversion speed $\kappa$, the investment horizon $T$, and the Hurst parameter $H$. For simplicity, we refer to these seven parameters as the {\it primary parameters}. The CRRA and log-MV strategies have the simplest parameter dependence. Only the primary parameters affect stock demand. The const-MV strategy relies on an additional parameter, namely the risk-free rate. The pre-committed MVP strategy depends on all parameters. Interestingly, the pre-committed MVP strategy is the only strategy that depends on the long-term mean level of volatility. Table \ref{Tab:Dep} summarizes the comparison along with wealth dependence.

\cite{han2019mean} observes that volatility-of-volatility has a material impact on the pre-committed MVP in the sensitivity analysis. However, it is not observed for const-MV and log-MV investors. Alternately, the investment horizon becomes vital for the time-consistent alternatives. 

\cite{dai2020log} argues that log-MV and CRRA criteria are analogous. Their structure of strategies is almost the same. This similarity is maintained when volatility is rough, but the effect of roughness is disparate. In short investment horizons (e.g., 1 year), CRRA investors allocate less to the stock when volatility is rougher \citep{han2019merton}. Log-MV investors do the opposite. In long investment horizons (e.g., 10 years), CRRA and log-MV investors still have distinct preferences regarding roughness. 

\begin{table}
	\small
	\centering
	\begin{tabular}{c | c | c }
		\hline
		Type & Parameter Dependence	 &   Wealth Dependence \\ 
		\hline 
		CRRA	& Primary parameters & Proportional  \\ 
		\hline
		Log-MV & Primary parameters &  Proportional \\ 
		\hline
		Const-MV  &  Primary parameters, risk-free rate $\varUpsilon$  & No\\
		\hline
		Pre-committed & All parameters: primary parameters, initial wealth $M_0$, &  Linear \\
		MVP & initial volatility $\nu_0$, risk-free rate $\varUpsilon$, volatility mean level $\phi$ &  \\
		\hline
	\end{tabular}
	\caption{Summary of the parameter and wealth dependence of the strategies.}
	\label{Tab:Dep}
\end{table}

\subsubsection{Linear controlled Volterra processes} 

\begin{figure}
	\centering
	\includegraphics[width=0.4\textwidth]{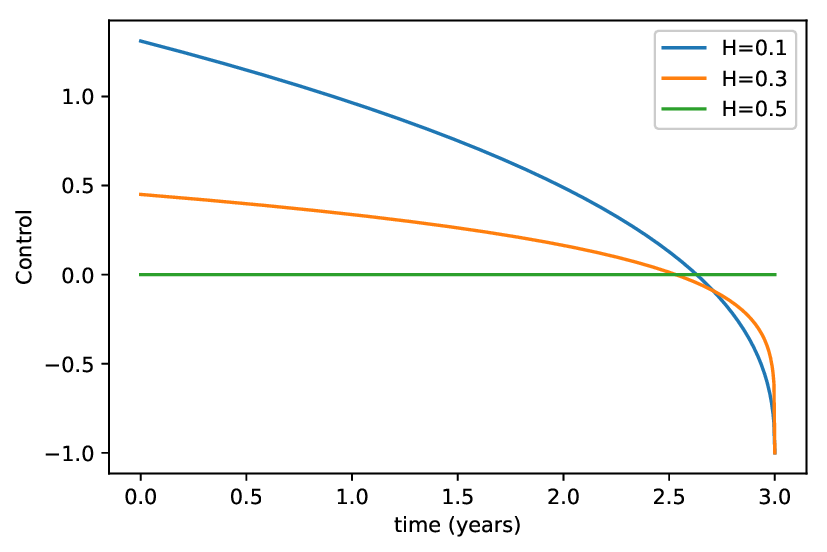}
	\caption{Equilibrium control under the state process \eqref{Eq:LC-MVstate}. We set $B, D, \sigma = 1$, risk aversion $\gamma=1$, and time horizon $T=3$. }\label{Fig:LC}
\end{figure}

To further enrich the analysis, we include a numerical study on the objective \eqref{Eq:LC-MVreward} under the state process \eqref{Eq:LC-MVstate}. For simplicity, consider constant parameters $B, D, \sigma$ and the fractional kernel. The equilibrium control \eqref{Eq:LC-u} then reduces to
\begin{equation*}
\hat \bu_t = \frac{B}{ \gamma D^2} (T - t)^{\frac{1}{2} - H} \Gamma(H+\frac{1}{2}) - \frac{\sigma}{D}.
\end{equation*}
When $H = 1/2$,  $\hat \bu_t$ is a constant. If the state process is rough with $H<1/2$, then $\hat \bu_t$ is larger at the beginning and smaller at the end, as shown in Figure \ref{Fig:LC}. Moreover, this phenomenon is the same for different time horizons. As the fractional kernel appears together with the control, it exhibits distinct effects compared with the rough Heston model. When the agent is more risk-averse (a larger $\gamma$), $\hat \bu_t$ is smaller. In the extreme case with $\gamma \rightarrow \infty$, the effect of roughness is eliminated and $\hat \bu_t \to - \sigma/D$.

\subsection{Simulation study}
In the sensitivity analysis, we only vary the roughness level. In general, this is unrealistic as other parameters are also likely to change. For example, \cite{abi2019lifting,abi2019multifactor} document the connection between volatility roughness and components with fast mean reversion. It is also observed that to capture the deep near-term volatility skew, calibration with the classic Heston model usually results in a greater mean reversion speed $\kappa$ and in a greater volatility-of-volatility $\sigma$.

In this section, we leverage the information from implied volatility (IV). Theoretically, the roughness estimates based on realized and implied volatility should coincide. However, this does not hold in reality. The IV surface represents the current view of the future. In particular, ATM skew explosion indicates near-term downside risks, such as earnings announcements \citep{glasserman2019buy}. As the real parameters underlying the IV surface are unknown, we first adopt the simulated IV surface in \cite{abi2019lifting}. Given the simulated IV data, investors calibrate two sets of parameters for the Heston and rough Heston models. We contrast the two strategies induced from the calibrated parameters. The investor under the Heston model (hereafter, Heston investor) uses the calibrated parameters in \citet[Table 6]{abi2019lifting}. The investor under the rough Heston model (hereafter, rough investor) uses \citet[Table 4]{abi2019lifting}. The variance process is simulated using the lifted Heston method in \cite{abi2019lifting}. The parameters for simulation are given in \citet[Equations (23) and (26)]{abi2019lifting}. We set $M_0 = 1$, $\varUpsilon = 0.01$, $\theta = 1.5$, $T = 10$, and $\gamma = 0.5$. Additionally, we implement the Euler scheme for the stock process. The simulation is run with $250$ time steps for 1 year, corresponding to the $250$ trading days per year.  

\begin{figure}
	\centering
	\begin{minipage}{0.5\textwidth}
		\centering
		\includegraphics[width=0.9\textwidth]{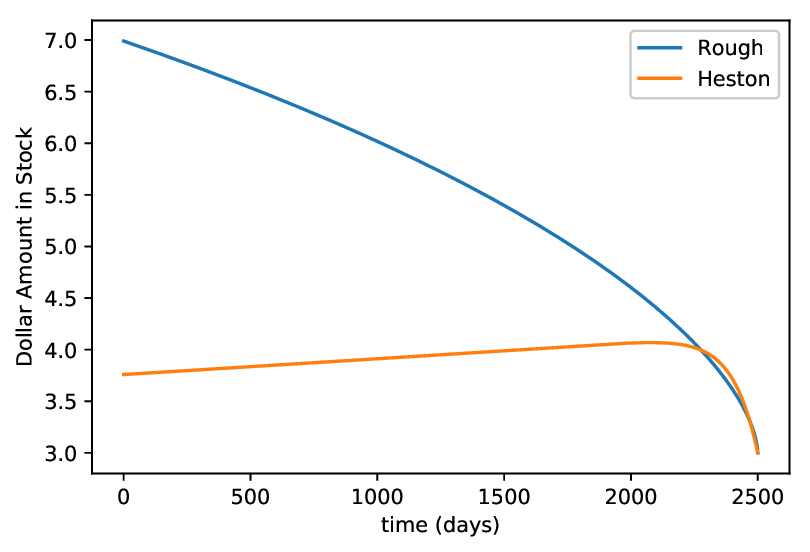}
		\subcaption{Const-MV}\label{Fig:ConstMVSimDollar}
	\end{minipage}%
	\begin{minipage}{0.5\textwidth}
		\centering
		\includegraphics[width=0.9\textwidth]{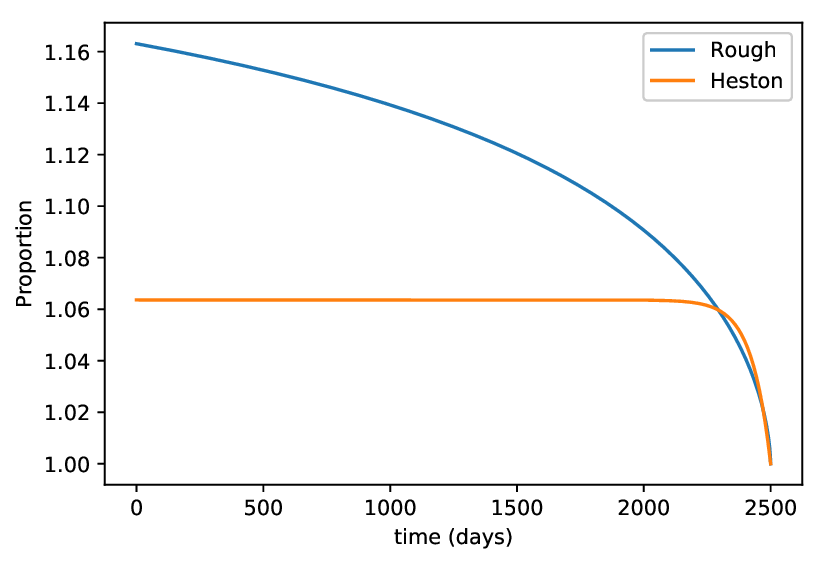}
		\subcaption{Log-MV}\label{Fig:LogMVSimPo}
	\end{minipage}
	\caption{Dollar amount of wealth for the const-MV strategy and proportional amount of wealth for the log-MV strategy. In both cases, roughness has a significant impact.}\label{Fig:ImInvest}
\end{figure}

Figure (\ref{Fig:ConstMVSimDollar}) plots the dollar amount in the stock for the const-MV investor, and Figure (\ref{Fig:LogMVSimPo}) depicts the proportional amount of wealth in the stock for the log-MV investor. In both cases, the investors demand more if the rough Heston model is adopted. Volatility roughness dramatically changes the investment decisions. The rough const-MV investor almost doubles the Heston counterpart. The rough log-MV investor increases stock demand by nearly 10\%. These results are interesting. The advantage of the rough Heston model is that it better captures the volatility smiles with short maturities. In other words, the rough Heston model captures the near-term downside risk, whereas the classic Heston model fails to do so. Surprisingly, the downside risk alters the investment over almost the entire time horizon. We interpret these adjustments as hedging against risk.

\begin{figure}
	\centering
	\begin{minipage}{0.5\textwidth}
		\centering
		\includegraphics[width=0.9\textwidth]{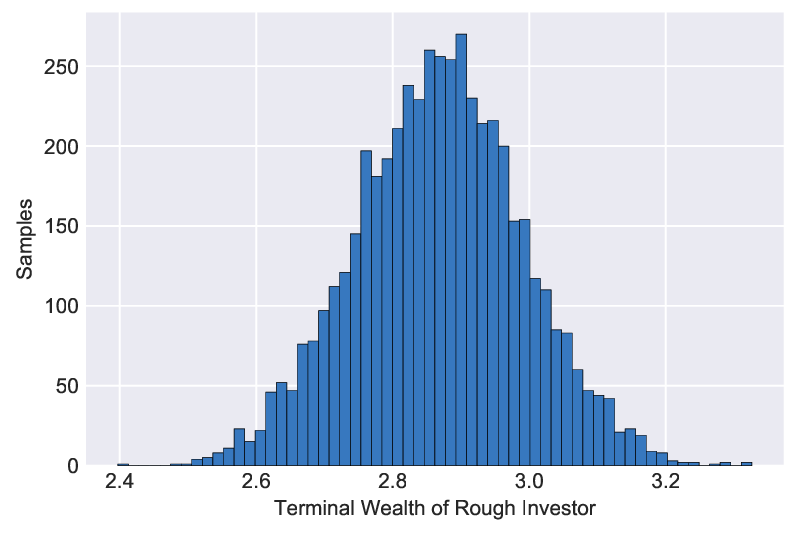}
		\subcaption{}\label{Fig:ConstMVX}
	\end{minipage}%
	\begin{minipage}{0.5\textwidth}
		\centering
		\includegraphics[width=0.9\textwidth]{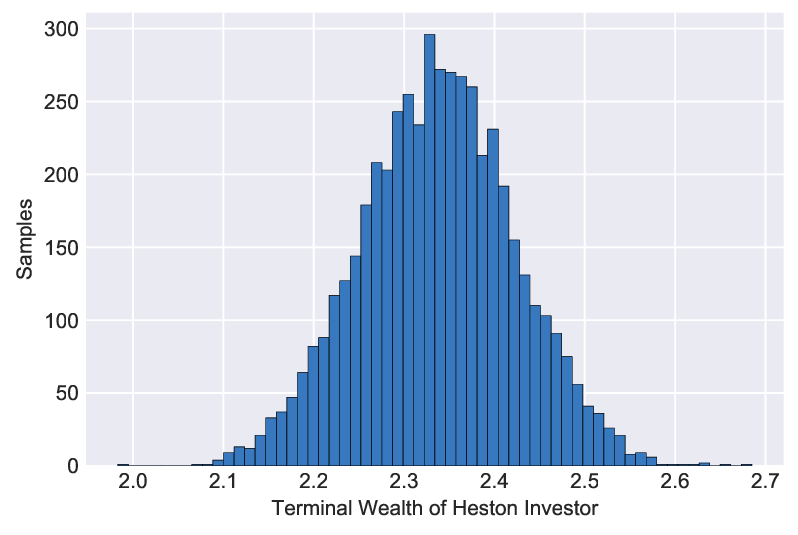}
		\subcaption{}\label{Fig:ConstMVXHn}
	\end{minipage}
	\caption{Distribution of terminal wealth for the const-MV case. The rough investor achieves terminal wealth with a sample mean of 2.868 and a sample variance of 0.015. The Heston investor has a sample mean of 2.337 and a variance of 0.007.}\label{Fig:ConstMVTer}
\end{figure}

\begin{figure}
	\centering
	\begin{minipage}{0.5\textwidth}
		\centering
		\includegraphics[width=0.9\textwidth]{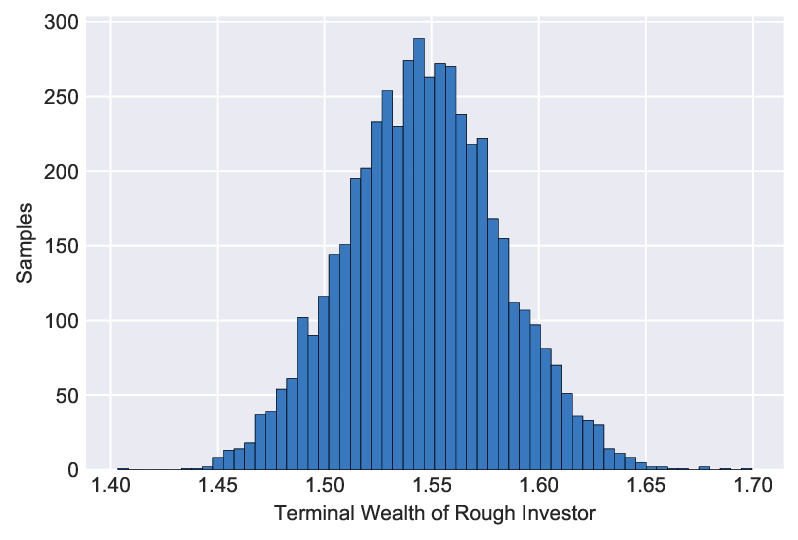}
		\subcaption{}\label{Fig:LogMVX}
	\end{minipage}%
	\begin{minipage}{0.5\textwidth}
		\centering
		\includegraphics[width=0.9\textwidth]{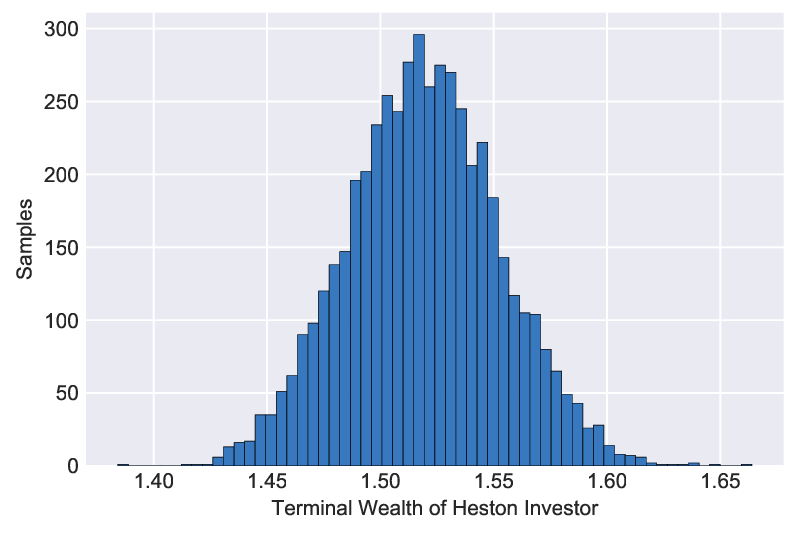}
		\subcaption{}\label{Fig:LogMVXHn}
	\end{minipage}
	\caption{Distribution of terminal wealth for the log-MV case. The rough investor achieves terminal wealth with a sample mean of 1.546 and a sample variance of 0.0013. The Heston investor has a sample mean of 1.519 and a sample variance of 0.0011.}\label{Fig:LogMVTer}
\end{figure}

Figures \ref{Fig:ConstMVTer} and \ref{Fig:LogMVTer} demonstrate the distribution of terminal wealth for the const-MV and log-MV investors with the rough Heston model or the classic Heston model. Figures \ref{Fig:ConstMVTer} and \ref{Fig:LogMVTer} have 5,000 simulation paths. Rough investors tend to obtain higher terminal wealth but with higher variance. The reason for this may be that they bear the near-term risk represented by roughness. The log-MV case in Figure \ref{Fig:LogMVTer} looks conservative compared with the const-MV case. However, the const-MV case in Figure \ref{Fig:ConstMVTer} and the log-MV case in Figure \ref{Fig:LogMVTer} are not fairly comparable. Risk aversion $\gamma$ is set to $0.5$, but one is for wealth and another is for log return. The scale makes the actual risk aversion level different.

\subsection{Empirical study}\label{Sec:Empirical}
To further elucidate the need to account for rough volatility, we evaluate the performance of the strategies on real financial data. We download the CBOE S\&P 500 options data from OptionMetrics via Wharton Research Data Services. The most recent 1-year time period we can obtain is from 2018/05/01 to 2019/05/01. The U.S. equity market was relatively volatile during this period, plunging in October 2018 and December 2018. Due to the computational burden involved in calibrating the rough Heston model, we adopt the following procedure to implement the trading strategies. Consider the investment horizon as 1 month. The models are calibrated on European call options data observed on the first trading day of that month. We then use the calibrated parameters for trading in the whole month. The portfolios are rebalanced at a daily frequency (i.e., the step size equals $1/250$). For simplicity, we ignore transaction costs and price impact. After this month, we redo the calibration and implement the trading rules for the next 1-month investment horizon. Overall, for each model, we have 12 sets of calibrated parameters in a year. Tables \ref{Tab:CaliRough} and \ref{Tab:CaliHeston} report the sample means and sample standard deviations (std) of the calibration results. In Table \ref{Tab:CaliRough}, the calibrated Hurst parameter is around 0.15 with a slight variance. Therefore, volatility roughness is a persistent fact in the equity market. In Table \ref{Tab:CaliHeston}, it is further confirmed that the classic Heston model tends to yield a considerably greater mean reversion speed $\kappa$ and volatility-of-volatility $\sigma$.

\begin{table}[!htbp]
	\centering
	\begin{tabular}{c | c | c | c | c }
		\hline
		Parameter & $H$   &   $\kappa$    &      $\rho$     &        $\sigma$           \\
		\hline
		Mean(std) & 0.1576 (0.0170)  &  0.2378 (0.0817) & -0.7611 (0.0648) & 0.4259 (0.1173) \\
		\hline
	\end{tabular}
	\caption{Calibrated values for the rough Heston model. We clean the options data using a procedure similar to that used in \cite{glasserman2019buy}. Only contracts with reasonable open interest and implied volatility are included. We restrict the maturity to a maximum of 1 year. For simplicity, we only report the values for the parameters that appear in the strategies.}
	\label{Tab:CaliRough}
\end{table}

\begin{table}[!htbp]
	\centering
	\begin{tabular}{c | c | c | c }
		\hline
		Parameter &   $\kappa$    &      $\rho$     &        $\sigma$           \\
		\hline
		Mean(std) &  3.6887 (1.2201) &  -0.6095  (0.0768) & 1.9928  (1.9930) \\
		\hline
	\end{tabular}
	\caption{Calibrated values for the classic Heston model. The same options dataset from Table \ref{Tab:CaliRough} is used. Larger $\kappa$ and $\sigma$ values are confirmed.}
	\label{Tab:CaliHeston}
\end{table}

In Tables \ref{Tab:SharpeConst} and \ref{Tab:SharpeLog}, we present the terminal wealth and Sharpe ratio at several risk aversion levels. In the const-MV and log-MV cases, trading strategies under rough volatility dominate the classic Heston counterparts \citep{basak2010dynamic,dai2020log} with both higher terminal wealth and a better Sharpe ratio. This confirms that the rough Heston model captures profit opportunities missed by the classic Heston model. Rough volatility models are therefore an attractive alternative for volatility modeling. Furthermore, trading strategies under the log-MV criterion are less sensitive to the choice of risk aversion. Figure \ref{Fig:SP} illustrates the special case in which risk aversion equals $2$. Const-MV strategies yield higher profits, whereas log-MV strategies are more stable, even during the market downturn in October and December 2018. Log-MV strategies yield lower Sharpe ratios mainly due to lower excess returns. In addition, we emphasize that directly comparing the const-MV and log-MV cases is not fair as the actual risk aversion preferences are distinct.

\begin{table}[!htbp]
	\small
	\centering
	\begin{tabular}{c | c | c | c | c | c  }
		\hline
		$\gamma$   &   0.5    &    1      &       2    &    5   &  10      \\
		\hline
		Rough  & 1.7965 (1.0503)   &   1.4084 (1.0488)  &  1.2143 (1.0463)   & 1.0978 (1.0442)   & 1.0590 (1.0441) \\
		\hline
		Heston & 1.5733 (0.8682)  &  1.2968 (0.8071)    &  1.1585 (0.7872)   & 1.0755 (0.7755)    & 1.0479 (0.7722) \\
		\hline
	\end{tabular}
	\caption{Performance of the const-MV strategy, reported with terminal wealth (annualized Sharpe ratio in brackets). We set initial wealth $M_0 = 1$ and risk-free rate $\varUpsilon = 0.02$. The risk premium parameter $\theta$ is estimated using a rolling window method, with a window size of 3 years. ``Rough" indicates the strategy under the rough Heston model and ``Heston" indicates that under the classic Heston model.}
	\label{Tab:SharpeConst}
\end{table}

\begin{table}[!htbp]
	\small
	\centering
	\begin{tabular}{c | c | c | c | c | c  }
		\hline
		$\gamma$   &   0.5    &    1      &       2    &    5    & 10        \\
		\hline
		Rough & 1.2120 (0.8945) &  1.1739 (0.9199)  & 1.1303 (0.9517)  &  1.0798 (0.9920) & 1.0540 (1.0145) \\
		\hline
		Heston & 1.1971 (0.8415) & 1.1550 (0.8291) & 1.1113 (0.8139)   &  1.0664 (0.7954)   & 1.0455 (0.7864) \\
		\hline
	\end{tabular}
	\caption{Performance of the log-MV strategies. The same parameters from Table \ref{Tab:SharpeConst} are adopted.}
	\label{Tab:SharpeLog}
\end{table}

\begin{figure}[!htbp]
	\centering
	\begin{minipage}{0.5\textwidth}
		\centering
		\includegraphics[width=0.99\textwidth]{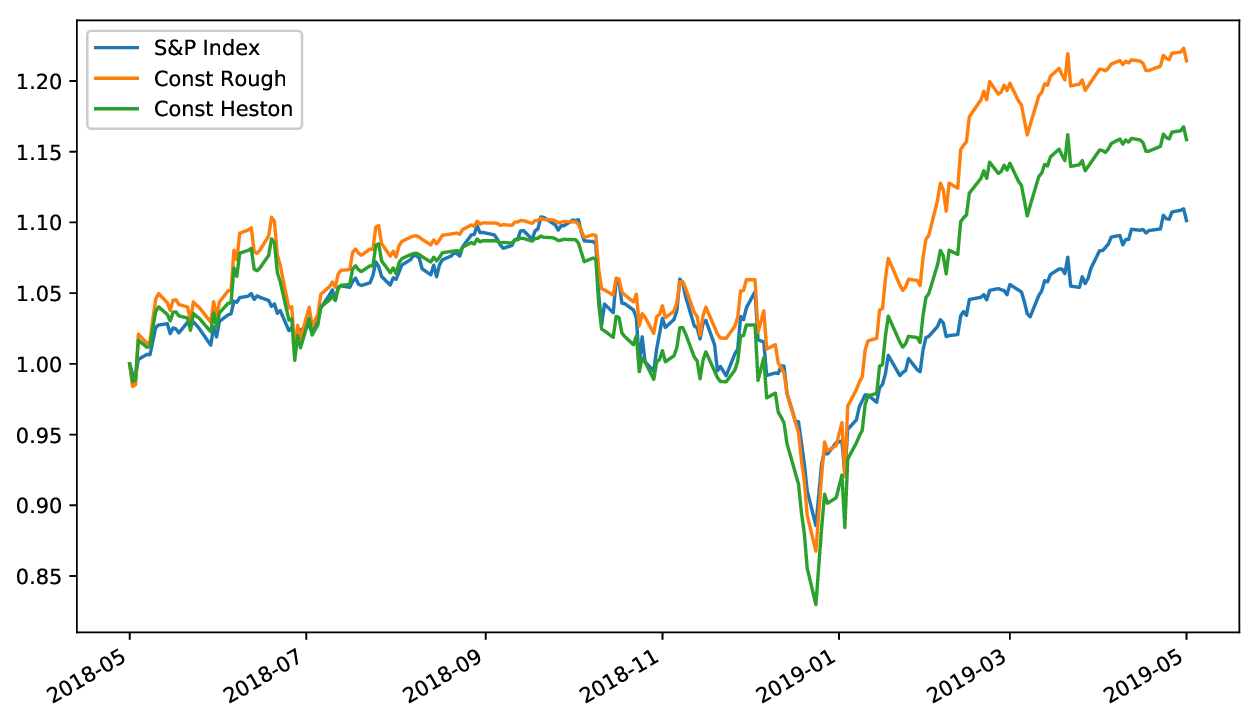}
		\subcaption{Const-MV}\label{Fig:ConstSP}
	\end{minipage}%
	\begin{minipage}{0.5\textwidth}
		\centering
		\includegraphics[width=0.99\textwidth]{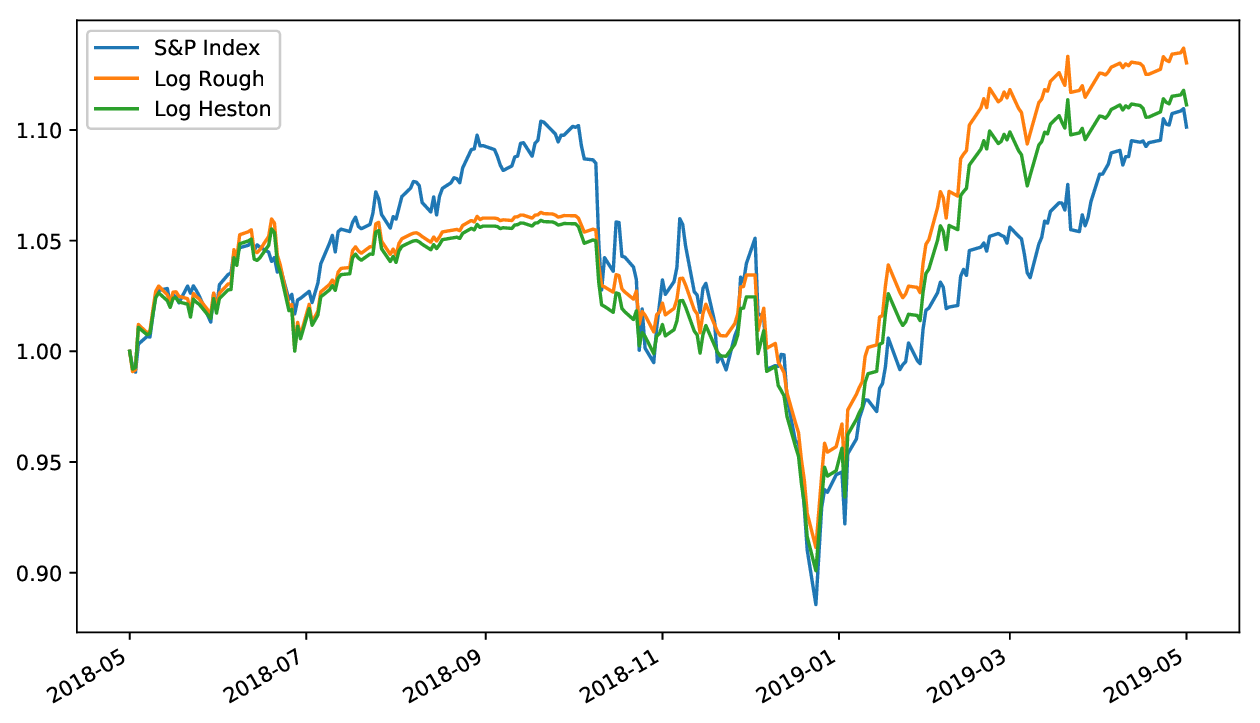}
		\subcaption{Log-MV}	\label{Fig:LogSP}
	\end{minipage}
	\caption{Wealth paths under the const-MV and log-MV criteria, with risk aversion $\gamma = 2$. The other parameters are set as in Table \ref{Tab:SharpeConst}. The S\&P 500 Index is normalized such that the initial value is 1. The trading strategy using the rough Heston model achieves higher terminal wealth with a better Sharpe ratio, as detailed in Tables \ref{Tab:SharpeConst} and \ref{Tab:SharpeLog}. Strategies under the log-MV criterion are more stable than the const-MV case.}\label{Fig:SP}
\end{figure}

\section{Concluding remarks}\label{Sec:Con}
In this paper, we use the functional It\^o calculus to examine in depth the equilibrium strategies under time-inconsistent preferences and in a rough stochastic environment. Volatility roughness significantly modifies investment demands. Our general framework also embraces Volterra processes with potential applications beyond rough volatility. Several interesting problems are left for future research. The first is the existence and uniqueness of solutions to the extended PHJB equation system in Definition \ref{Def:EP-HJB}. The second is the time-inconsistent open-loop control problem under Volterra processes.

\appendix
\section{Brief summary of the functional It\^o calculus (Viens and Zhang, 2019)}\label{Sec:FuncIto}

Let $\Omega \triangleq C^0([0, T],\; \R^n)$ be the sample space with continuous paths, $\bar \Omega  \triangleq D^0([0, T], \; \R^n)$ be the sample space with c\`adl\`ag (right-continuous with left limits) paths, and
\begin{align*}
& \Omega_t  \triangleq C^0([t, T],\; \R^n), \quad \Lambda \triangleq [0, T] \times \Omega, \quad \bar \Lambda \triangleq \left\{ (t, \omega) \in [0, T] \times \bar \Omega : \omega\big|_{[t, T]} \in \Omega_t \right\}, \\
& ||\omega||_T \triangleq \sup_{ 0 \leq t \leq T} |\omega_t|, \quad \bd((t, \omega), (t', \omega')) \triangleq |t - t'| + ||\omega - \omega'||_T.
\end{align*}

Let $C^0(\bar \Lambda)$ be the space of functions $f: \bar \Lambda \rightarrow \R$, which are continuous under $\bd$. For $f \in C^0(\bar \Lambda)$, define the time derivative as follows:
\begin{equation}\label{Eq:t_deri}
\partial_t f(t, \omega) \triangleq \lim_{\delta \downarrow 0} \frac{f(t + \delta, \omega) - f(t, \omega)}{\delta}, \quad \text{ for all } (t, \omega) \in \bar \Lambda,
\end{equation}
whenever the limit exists.

Given $(t, \omega) \in \bar \Lambda$, the spatial derivative with respect to $\omega$, denoted by $\partial_\omega f(t, \omega)$, is a linear operator on $\Omega_t$ and defined as the Fr\'echet derivative with respect to $\omega \id_{[t, T]}$. That is,
\begin{equation}\label{Eq:spatial1_deri}
f(t, \omega + \eta \id_{[t, T]}) - f(t, \omega) = \ang{\partial_\omega f(t, \omega), \eta} + o(||\eta\id_{[t, T]}||_T), \quad \text{ for any } \eta \in \Omega_t.
\end{equation}
Furthermore, $\partial_\omega f(t, \omega)$ satisfies the definition of the Gateaux derivative:
\begin{equation}
\ang{\partial_\omega f(t, \omega), \eta} = \lim_{\varepsilon \rightarrow 0} \frac{f(t, \omega + \varepsilon \eta \id_{[t, T]}) - f(t, \omega)}{\varepsilon}, \quad \text{ for any } \eta \in \Omega_t.
\end{equation}
The perturbation is on $[t, T]$, not on $[0, t)$. If $\eta \in \Omega_s$ for certain $s < t$, the derivative is understood as follows:
\begin{equation}\label{Eq:Convention}
\ang{\partial_\omega f(t, \omega), \eta} \triangleq \ang{\partial_\omega f(t, \omega), \eta\id_{[t, T]}}. 
\end{equation}

The second-order derivative $\partial^2_{\omega \omega} f(t, \omega)$ is defined as a bilinear operator on $\Omega_t \times \Omega_t$:
\begin{equation}\label{Eq:spatial2_deri}
\ang{\partial_\omega f(t, \omega + \eta_1 \id_{[t, T]}), \eta_2}  - \ang{\partial_\omega f(t, \omega), \eta_2} = \ang{\partial^2_{\omega\omega} f(t, \omega), (\eta_1, \eta_2)} + o(||\eta_1 \id_{[t, T]}||_T),
\end{equation} 
for any $\eta_1, \eta_2 \in \Omega_t$. If $\eta_1, \eta_2 \in \Omega_s$ for certain $s < t$, the derivative is understood in the same way as in (\ref{Eq:Convention}).

For the well-posedness of these derivatives, we refer readers to \citet[Proposition 3.7]{viens2017martingale}.

We introduce two spaces, namely $C^{1, 2}_+ (\Lambda)$ and $C^{1, 2}_{+, \alpha} (\Lambda)$, from \cite{viens2017martingale}, under which the functional It\^o formula in \citet[Theorem 3.10 and 3.17]{viens2017martingale} holds.
\begin{definition}[{\citet[Definition 3.3]{viens2017martingale}}]
	Suppose that $f \in C^0(\bar \Lambda)$ and $\partial_\omega f$ exists for all $(t, \omega) \in \bar \Lambda$. 
	\begin{enumerate}[label={(\arabic*).}]
		\item $\partial_\omega f$ is said to have polynomial growth if there exist constants $C_0, m > 0$ such that
		\begin{equation}
		\big|\ang{\partial_\omega f(t, \omega), \eta} \big| \leq C_0 [1 + ||\omega||^m_T] ||\eta\id_{[t, T]}||_T, \quad \forall \; (t, \omega) \in \bar \Lambda, \; \eta \in \Omega.
		\end{equation} 
		\item $\partial_\omega f$ is said to be continuous if, for all $\eta \in \Omega$, the mapping $(t, \omega) \in \bar \Lambda \mapsto \ang{\partial_\omega f(t, \omega), \eta}$ is continuous under $\bd$.
		\item $\partial^2_{\omega\omega} f$ is said to have polynomial growth if there exist constants $C_0, m > 0$ such that
		\begin{align}
		\big|\ang{\partial^2_{\omega\omega} & f(t, \omega), (\eta_1, \eta_2)} \big| \nonumber \\
		& \leq C_0 [1 + ||\omega||^m_T] ||\eta_1\id_{[t, T]}||_T ||\eta_2 \id_{[t, T]}||_T, \; \forall \; (t, \omega) \in \bar \Lambda, \; \eta_1, \eta_2 \in \Omega.
		\end{align} 
		\item $\partial^2_{\omega\omega} f$ is said to be continuous if, for all $\eta_1, \eta_2 \in \Omega$, the mapping $(t, \eta_1, \eta_2) \in \bar \Lambda_2 \mapsto \ang{\partial^2_{\omega\omega} f(t, \omega),$ $ (\eta_1, \eta_2)}$ is continuous under $\bd'((t, \omega_1, \omega_2), (t', \omega'_1, \omega'_2)) \triangleq |t - t'| + ||\omega_1 - \omega'_1||_T + ||\omega_2 - \omega'_2||_T$, where $\bar \Lambda_2 \triangleq \big\{ (t, \omega_1, \omega_2) \in [0, T] \times \bar \Omega \times \bar \Omega : \omega_1\big|_{[t, T]}, \; \omega_2 \big|_{[t, T]}  \in \Omega_t \big\}$.
	\end{enumerate}
\end{definition}
\begin{definition}[{\citet[Definition 3.4]{viens2017martingale}}]
	Let $C^{1, 2}(\bar \Lambda) \subset C^0(\bar \Lambda)$ be the set of all $f$ with continuous derivatives $\partial_t f$, $\partial_\omega f$, $\partial^2_{\omega \omega} f$ on $\bar \Lambda$. Let $C^{1, 2}_+(\bar \Lambda)$ be the set of all $f \in C^{1, 2}(\bar \Lambda)$ such that all derivatives have polynomial growth and $\ang{\partial^2_{\omega\omega} f(t, \omega), (\eta, \eta)}$ is locally uniformly continuous in $\omega$ with polynomial growth, namely there exists a constant $m > 0$ and a bounded modulus of continuity function $\varrho$. For all $(t, \omega), (t, \omega') \in \bar \Lambda$ and $\eta \in \Omega_t$, we have
	\begin{align}
	\big|\ang{\partial^2_{\omega\omega} f(t, \omega) - \partial^2_{\omega\omega} f(t, \omega'), (\eta, \eta)} \big| \leq \big[1 + ||\omega||^m_T + ||\omega'||^m_T \big] || \eta \id_{[t, T]} ||^2_T \varrho(||\omega - \omega'||_T).
	\end{align}
	$C^{1, 2}_+(\Lambda)$ is defined in the same spirit of $C^{1, 2}_+(\bar \Lambda)$, with $\bar \Lambda$ replaced by $\Lambda$.
\end{definition}
\begin{definition}[{\citet[Definition 3.16]{viens2017martingale}}]
	$f \in C^{1, 2}_+ (\Lambda)$ is said to {\bf vanish diagonally at a rate of $\alpha \in (0, 1)$}, denoted by $f \in C^{1, 2}_{+, \alpha} (\Lambda)$, if there exists an extension of $f$ in $C^{1, 2}_+ (\bar \Lambda)$, still denoted by $f$, such that for every $ 0 \leq t < T$, $0 < \delta \leq T -t$, and $\eta, \eta_1, \eta_2 \in \Omega_t$ with supports contained in $[t, t + \delta]$:
	\begin{enumerate}[label={(\arabic*).}]
		\item $\forall \; \omega \in \bar \Omega$ satisfying $\omega\id_{[t, T]} \in \Omega_t$,
		\begin{align}
		\big|\ang{\partial_\omega f(t, \omega), \eta} \big| & \leq C_0 [1 + ||\omega||^m_T] ||\eta||_T \delta^\alpha, \\
		\big|\ang{\partial^2_{\omega\omega} f(t, \omega), (\eta_1, \eta_2)} \big| & \leq C_0 [1 + ||\omega||^m_T] \big|\big||\eta_1||\eta_2|\big|\big|_T \delta^{2\alpha};
		\end{align}
		\item For any other $\omega' \in \bar \Omega$ satisfying $\omega'\id_{[t, T]} \in \Omega_t$,
		\begin{align}
		\big|\ang{\partial_\omega f(t, \omega) & - \partial_\omega f(t, \omega'), \eta} \big| \nonumber \\ 
		&\leq \big[1 + ||\omega||^m_T + ||\omega'||^m_T \big] || \eta||_T \varrho(||\omega - \omega'||_T) \delta^\alpha, \\
		\big|\ang{\partial^2_{\omega\omega} f(t, \omega) & - \partial^2_{\omega\omega} f(t, \omega'), (\eta_1, \eta_2)} \big| \nonumber \\
		&\leq \big[1 + ||\omega||^m_T + ||\omega'||^m_T \big] \big|\big||\eta_1||\eta_2|\big|\big|_T \varrho(||\omega - \omega'||_T) \delta^{2\alpha}.
		\end{align}
	\end{enumerate}
	Constant $m > 0$ denotes the polynomial growth rate and $\varrho$ is a bounded modulus of the continuity function.
\end{definition}
$\alpha$ characterizes the level of singularity in the diagonal of time. Finally, the functional It\^o formula is quoted in Theorem \ref{Thm:Ito}.

\section{Proofs of results}\label{Sec:Proofs}
\subsection{Proof of Lemma \ref{Lem:Recursion}}
\begin{proof}
	By the tower property of the conditional expectation and the definitions of $c^{r, \bu}$, $f^\bu$, $g^\bu$, and $\omega^{t+h}$,  
	\begin{align}\label{Eq:J(t)}
	J(t, \omega; \bu) = &  \int^T_t \E\Big[ \E\Big[C(t, \omega_t, r,  X^{t, \omega, \bu}_\rdot, \bu(r, X^{t, \omega, \bu}_\rdot)) \Big| \cF_{t+h} \Big] \Big| \cF_t \Big] dr  \cr
	& + \E\Big[ \E\Big[ F(t, \omega_t, X^{t, \omega, \bu}_{T \wedge \cdot}) \Big| \cF_{t+h} \Big] \Big| \cF_t \Big] + G\big(t, \omega_t, \E\big[ \E\big[ X^{t, \omega, \bu}_T \big| \cF_{t+h} \big] \big| \cF_t\big] \big) \cr
	= & \int^T_t \E\Big[ c^{r, \bu}(t+h, \omega^{t + h}, t, \omega_t)\Big| \cF_t \Big] dr  \\
	& + \E\Big[ f^\bu(t+h, \omega^{t+h}, t, \omega_t) \Big| \cF_t \Big] + G\big(t, \omega_t, \E\big[ g^\bu(t+h, \omega^{t+h}) \big| \cF_t\big] \big). \nonumber
	\end{align}
	Meanwhile, the definition of the reward functional in \eqref{Eq:Reward} indicates the following:
	\begin{align}\label{Eq:J(t+h)}
	J(t+h, \omega^{t+h}; \bu) =& \E\Big[ \int^T_{t+h} C(t+h, \omega^{t+h}_{t+h}, r,  X^{t+h, \omega^{t+h}, \bu}_\rdot, \bu(r, X^{t+h, \omega^{t+h}, \bu}_\rdot)) dr  \Big| \cF_{t+h} \Big] \cr
	&+ \E\big[ F(t+h, \omega^{t+h}_{t+h}, X^{t+h, \omega^{t+h}, \bu}_{T \wedge \cdot}) \big| \cF_{t+h} \big] \cr
	& + G(t+h, \omega^{t+h}_{t+h}, \E[ X^{t+h, \omega^{t+h}, \bu}_T | \cF_{t+h}]) \cr
	=& \int^T_{t+h} c^{r, \bu}(t+h, \omega^{t+h}, t+h, \omega^{t+h}_{t+h}) dr + f^\bu(t+h, \omega^{t+h}, t+h, \omega^{t+h}_{t+h}) \cr
	& + G(t+h, \omega^{t+h}_{t+h}, g^\bu(t+h, \omega^{t+h})), 
	\end{align}
	where
	\begin{eqnarray}
	X^{t+h, \omega^{t+h}, \bu}_s &=& \omega^{t+h}_s + \int^s_{t+h} \mu(s; r, X^{t+h, \omega^{t+h}, \bu}_\rdot, \bu(r, X^{t+h, \omega^{t+h}, \bu}_\rdot)) dr \cr
	& & + \int^s_{t+h} \sigma(s; r, X^{t+h, \omega^{t+h}, \bu}_\rdot, \bu(r, X^{t+h, \omega^{t+h}, \bu}_\rdot)) dW_r, \; t+h \leq s \leq T, \cr
	X^{t+h, \omega^{t+h}, \bu}_s &=& \omega^{t+h}_s, \; 0 \leq s < t+h.
	\end{eqnarray}
	
	Taking the conditional expectation at $\cF_t$ on both sides of (\ref{Eq:J(t+h)}) yields the following:
	\begin{align}\label{Eq:E[J(t+h)]}
	\E \Big[ J(t+h, \omega^{t+h}; \bu) \Big| \cF_t \Big] =& \int^T_{t+h} \E \Big[ c^{r, \bu}(t+h, \omega^{t+h}, t+h, \omega^{t+h}_{t+h}) \Big| \cF_t \Big] dr \nonumber \\
	& + \E \Big[ f^\bu(t+h, \omega^{t+h}, t+h, \omega^{t+h}_{t+h}) \Big| \cF_t \Big] \nonumber \\
	& + \E \Big[ G(t+h, \omega^{t+h}_{t+h}, g^\bu(t+h, \omega^{t+h})) \Big| \cF_t \Big].
	\end{align}
	The result follows by combining (\ref{Eq:J(t)}) and (\ref{Eq:E[J(t+h)]}).
\end{proof}
\subsection{Proof of Lemma \ref{Lem:TrueMartingale}}
\begin{proof}
	Let $m$ be a generic positive value that may vary from line to line. We first present the proof for the regular case. By (1)-(2) in Definition \ref{Def:U} and the assumption that $\partial_\omega f$ has polynomial growth,
	\begin{align}
	& \E \Big[\int^T_t \big| \ang{\partial_\omega f(r, X^\bu \otimes_r \Theta^{r, \bu}), \sigma^{r, \bu}} \big|^2 dr \Big| \cF_t \Big] \\
	& \leq C_0 \E \Big[ \sup_{ t \leq r \leq T} \big| \ang{\partial_\omega f(r, X^\bu \otimes_r \Theta^{r, \bu}), \sigma^{r, \bu}} \big|^2 \Big| \cF_t \Big] \cr
	& \leq C_0 \E \Big[\Big( 1 + \sup_{ t \leq r \leq T} \sup_{0 \leq s \leq T} |(X^\bu \otimes_r \Theta^{r, \bu})_s|^m \Big)^2  \sup_{ t \leq r \leq T} \sup_{ r \leq s \leq T} | \sigma^\bu(s; r, X^\bu_\rdot)|^2 \Big| \cF_t \Big] \cr
	& \leq C_0 \E \Big[ \Big(1 + \sup_{ t \leq r \leq T} \sup_{0 \leq s \leq T} |(X^\bu \otimes_r \Theta^{r, \bu})_s|^m \Big) \Big| \cF_t \Big] \cr
	& \leq C_0 \E \Big[ \Big(1 + \sup_{ 0 \leq s \leq T} |X^\bu_s|^p \Big)\Big| \cF_t \Big] < \infty. \nonumber 
	\end{align}
	For the singular case, for $[r, T]$, consider the partition $r = r_\infty < ... < r_k < ... < r_0 = T$, where $r_k = r + \frac{T-r}{2^k}$. Then,
	\begin{align}
	& \E \Big[\int^T_t \big| \ang{\partial_\omega f(r, X^\bu \otimes_r \Theta^{r, \bu}), \sigma^{r, \bu}} \big|^2 dr \Big| \cF_t \Big] \\
	& = \E \Big[\int^T_t \big| \lim_{\delta \downarrow 0} \ang{\partial_\omega f(r, X^\bu \otimes_r \Theta^{r, \bu}), \sigma^{\delta, r, \bu}} \big|^2 dr \Big| \cF_t \Big] \cr
	& = \E \Big[\int^T_t \big| \lim_{\delta \downarrow 0} \sum^\infty_{k = 0} \ang{\partial_\omega f(r, X^\bu \otimes_r \Theta^{r, \bu}), \sigma^{\delta, r, \bu}_s \id_{s \in [r_{k+1}, r_k)}} \big|^2 dr \Big| \cF_t \Big] \cr
	& \leq C_0 \E \Big[\int^T_t \Big(1 + \sup_{0 \leq s \leq T} |(X^\bu \otimes_r \Theta^{r, \bu})_s|^m \Big)^2 \big| \lim_{\delta \downarrow 0} \sum^\infty_{k = 0} ||\sigma^{\delta, r, \bu}_s \id_{s \in [r_{k+1}, r_k)}||_T (r_k - r_{k+1})^\alpha \big|^2 dr \Big| \cF_t \Big] \nonumber,
	\end{align}
	where we use the assumption that $f$ vanishes diagonally at a rate of $\alpha \in (0, 1)$ and the fact that $\partial_\omega f$ is a linear operator in the last inequality. By (2) in Definition \ref{Def:U} for the singular case,
	\begin{align}
	& \sum^\infty_{k = 0} ||\sigma^{\delta, r, \bu}_s \id_{s \in [r_{k+1}, r_k)}||_T (r_k - r_{k+1})^\alpha  \\
	&\leq  C_0 \Big(1 + \sup_{0 \leq s \leq T} |(X^\bu \otimes_r \Theta^{r, \bu})_s|^m \Big) \sum^\infty_{k = 0} \big( r_{k+1} \vee (r+\delta) - r \big)^{H - 1/2} \Big( \frac{T-r}{2^{k+1}} \Big)^\alpha. \nonumber
	\end{align}
	For any $0 < \delta \leq T - r$, there exists an integer $z$ such that $\frac{T-r}{2^{z+1}} < \delta \leq \frac{T-r}{2^z}$. Then, 
	\begin{align}
	& \sum^\infty_{k = 0} \big(r_{k+1} \vee (r+\delta) - r \big)^{H - 1/2} \Big( \frac{T-r}{2^{k+1}} \Big)^\alpha \cr 
	& = \sum^{z-1}_{k=0} \Big(\frac{T-r}{2^{k+1}}\Big)^{H - 1/2} \Big( \frac{T-r}{2^{k+1}} \Big)^\alpha + \sum^\infty_{k = z} \delta^{H - 1/2} \Big( \frac{T-r}{2^{k+1}} \Big)^\alpha \cr 
	& = \sum^{z-1}_{k=0} \Big(\frac{T-r}{2^{k+1}}\Big)^\beta + \delta^{H - 1/2} \Big(\frac{T-r}{2^{z+1}} \Big)^\alpha \sum^\infty_{k = 0}  \Big( \frac{1}{2^\alpha} \Big)^k.
	\end{align}
	Note that $\frac{T-r}{2^{z+1}} < \delta $ implies that $\Big(\frac{T-r}{2^{z+1}} \Big)^\alpha < \delta^\alpha$. We obtain the following:
	\begin{align}
	& \sum^\infty_{k = 0} ||\sigma^{\delta, r, \bu}_s \id_{s \in [r_{k+1}, r_k)}||_T (r_k - r_{k+1})^\alpha \cr 
	& \leq C_0 \Big(1 + \sup_{0 \leq s \leq T} |(X^\bu \otimes_r \Theta^{r, \bu})_s|^m \Big) \big[ (T-r)^\beta + \delta^\beta \big].
	\end{align}
	Finally,
	\begin{align}
	& \E \Big[\int^T_t \big| \ang{\partial_\omega f(r, X^\bu \otimes_r \Theta^{r, \bu}), \sigma^{r, \bu}} \big|^2 dr \Big| \cF_t \Big] \\
	& \leq C_0 \E \Big[\int^T_t \Big(1 + \sup_{0 \leq s \leq T} |(X^\bu \otimes_r \Theta^{r, \bu})_s|^m \Big)^4 (T - r)^{2 \beta} dr \Big| \cF_t \Big] \cr
	& \leq C_0 \E \Big[ \Big(1 + \sup_{ t \leq r \leq T} \sup_{0 \leq s \leq T} |(X^\bu \otimes_r \Theta^{r, \bu})_s|^m \Big) \Big| \cF_t \Big] \cr
	& \leq C_0 \E \Big[ \Big(1 + \sup_{ 0 \leq s \leq T} |X^\bu_s|^p \Big)\Big| \cF_t \Big] < \infty, \nonumber 
	\end{align}
	as desired.
\end{proof}

\subsection{Proof of Theorem \ref{Thm:Verification}}
\begin{proof}
	First, we show that the interpretations in Definition \ref{Def:EP-HJB} (6) hold and $V(t, \omega) = J(t, \omega; \hat \bu)$.
	
	By (\ref{Eq:fsy}), (\ref{Eq:g}), (\ref{Eq:csyr}) and Lemma $\ref{Lem:TrueMartingale}$, $f^{s,y}(t, X^{\hat \bu} \otimes_t \Theta^{t, \hat \bu})$, $g(t, X^{\hat \bu} \otimes_t \Theta^{t, \hat \bu})$, and $c^{s, y, r}(t, X^{\hat \bu} \otimes_t \Theta^{t, \hat \bu})$ are martingales. By the boundary conditions in (\ref{Eq:fsy}), (\ref{Eq:g}), and (\ref{Eq:csyr}) and note that $\omega = X^{t, \omega, \hat \bu} \otimes_t \Theta^{t, \hat \bu}$, we derive the following:
	\begin{align*}
	f^{s, y}(t, \omega) & = \hatE \big[ F(s, y, X^{t, \omega, \hat \bu}_{T \wedge \cdot}) \big| \cF_t \big], \qquad g(t, \omega) = \hatE \big[X^{t, \omega, \hat \bu}_T \big| \cF_t \big], \\
	c^{s, y, r}(t, \omega) &= \hatE\Big[ C(s, y, r,  X^{t, \omega, \hat \bu}_\rdot, \hat \bu(r,  X^{t, \omega, \hat \bu}_\rdot)) \Big| \cF_t \Big], \quad 0 \leq t \leq r.
	\end{align*}
	
	By (1)-(4) of Definition \ref{Def:EP-HJB}, 
	\begin{align}\label{Eq:ReducedHJB1-4}
	& (\bA^{\hat \bu} V)(t, \omega) + C(t, \omega_t, t, \omega_{t \wedge \cdot},  \hat \bu(t, \omega_{t \wedge \cdot})) - \int^T_t (\bA^{\hat \bu} c^r)(t, \omega, t, \omega_t) dr \cr
	& - (\bA^{\hat \bu} f)(t, \omega, t, \omega_t)  - \bA^{\hat \bu}(G \diamond g)(t, \omega) = 0.
	\end{align} 
	
	As $\hat \bu$ is admissible and $V$ satisfies Assumption \ref{Assum:ValueFunc}, we apply the functional It\^o formula in Theorem \ref{Thm:Ito} to $V$ and then claim that $ \ang{\partial_\omega V, \sigma^{t, \hat \bu}} \cdot \hat{W}$ is a true martingale, where the Brownian motion $\hat{W}$ is a part of the weak solution under $\hat{\bu}$. Along with (\ref{Eq:ReducedHJB1-4}), we obtain the following:
	\begin{align}
	& \hatE \Big[ V(T, X^{t, \omega, \hat \bu}_{T \wedge \cdot})\Big| \cF_t \Big] = \hatE \Big[ V(T, X^{t, \omega, \hat \bu} \otimes_T \Theta^{T, \hat \bu})\Big| \cF_t \Big] \cr
	& = V(t, X^{t, \omega, \hat \bu} \otimes_t \Theta^{t, \hat \bu}) + \hatE \Big[ \int^T_t (\bA^{\hat \bu} V)(s, X^{t, \omega, \hat \bu} \otimes_s \Theta^{s, \hat \bu}) ds \Big| \cF_t \Big] \cr
	& =  V(t, \omega)  - \hatE \Big[ \int^T_t C(s, X^{t, \omega, \hat \bu} \otimes_s \Theta^{s, \hat \bu}, s, X^{t, \omega, \hat \bu}_{s \wedge \cdot}, \hat{\bu}( s, X^{t, \omega, \hat \bu}_{s \wedge \cdot})) ds \Big| \cF_t \Big] \\
	&\quad +  \hatE \Big[ \int^T_t \int^T_s (\bA^{\hat \bu} c^r)(s, X^{t, \omega, \hat \bu} \otimes_s \Theta^{s, \hat \bu}, s, X^{t, \omega, \hat \bu}_s ) dr ds \Big| \cF_t \Big] \cr
	&\quad +  \hatE \Big[ \int^T_t (\bA^{\hat \bu} f)(s, X^{t, \omega, \hat \bu} \otimes_s \Theta^{s, \hat \bu}, s, X^{t, \omega, \hat \bu}_s ) ds \Big| \cF_t \Big] \cr
	&\quad +  \hatE \Big[ \int^T_t \bA^{\hat \bu}(G \diamond g)(s, X^{t, \omega, \hat \bu} \otimes_s \Theta^{s, \hat \bu}) ds \Big| \cF_t \Big]. \nonumber
	\end{align}
	
	For the third term, Fubini's theorem holds under the polynomial growth rate condition on the derivatives of $c^r$ by Assumption \ref{Assum:ValueFunc} and the conditions in (1)-(2) of Definition \ref{Def:U}. Lemma \ref{Lem:TrueMartingale} shows that $ \ang{\partial_\omega c^r, \sigma^{t, \hat \bu}} \cdot \hat{W}$ is a true martingale. Hence, the definition of $c^r$ leads to
	\begin{align*}
	&\hatE \Big[ \int^T_t \int^T_s (\bA^{\hat \bu} c^r)(s, X^{t, \omega, \hat \bu} \otimes_s \Theta^{s, \hat \bu}, s, X^{t, \omega, \hat \bu}_s ) dr ds \Big| \cF_t \Big] \\
	& = \hatE \Big[ \int^T_t \int^r_t (\bA^{\hat \bu} c^r)(s, X^{t, \omega, \hat \bu} \otimes_s \Theta^{s, \hat \bu}, s, X^{t, \omega, \hat \bu}_s ) ds dr \Big| \cF_t \Big] \\
	& = \int^T_t \hatE \Big[ \int^r_t (\bA^{\hat \bu} c^r)(s, X^{t, \omega, \hat \bu} \otimes_s \Theta^{s, \hat \bu}, s, X^{t, \omega, \hat \bu}_s ) ds  \Big| \cF_t \Big] dr\\
	& = \int^T_t \Big\{ \hatE \Big[ c^r(r, X^{t, \omega, \hat \bu} \otimes_r \Theta^{r, \hat \bu}, r, X^{t, \omega, \hat \bu}_r )\Big| \cF_t \Big] - c^r(t, X^{t, \omega, \hat \bu} \otimes_t \Theta^{t, \hat \bu}, t, X^{t, \omega, \hat \bu}_t ) \Big\}dr\\
	&= \int^T_t \Big\{ \hatE\Big[ C(r, X^{t, \omega, \hat \bu}_r, r,  X^{t, \omega, \hat \bu}_\rdot, \hat \bu(r,  X^{t, \omega, \hat \bu}_\rdot)) \Big| \cF_t \Big] \\
	&\qquad \qquad -  \hatE\Big[ C(t, X^{t, \omega, \hat \bu}_t, r,  X^{t, \omega, \hat \bu}_\rdot, \hat \bu(r,  X^{t, \omega, \hat \bu}_\rdot)) \Big| \cF_t \Big] \Big\} dr.
	\end{align*}
	
	For the fourth term, we use the same arguments:
	\begin{align*}
	& \hatE \Big[ \int^T_t (\bA^{\hat \bu} f)(s, X^{t, \omega, \hat \bu} \otimes_s \Theta^{s, \hat \bu}, s, X^{t, \omega, \hat \bu}_s ) ds \Big| \cF_t \Big] \\
	& = \hatE \Big[ f (T, X^{t, \omega, \hat \bu} \otimes_T \Theta^{T, \hat \bu}, T, X^{t, \omega, \hat \bu}_T )  \Big| \cF_t \Big] -  f (t, X^{t, \omega, \hat \bu} \otimes_t \Theta^{t, \hat \bu}, t, X^{t, \omega, \hat \bu}_t) \\
	& = \hatE \Big[ F (T, X^{t, \omega, \hat \bu}_T, X^{t, \omega, \hat \bu}_{T \wedge \cdot})  \Big| \cF_t \Big] - \hatE\Big[ F(t, \omega_t, X^{t, \omega, \hat \bu}_{T \wedge \cdot}) \Big| \cF_t\Big].
	\end{align*}
	
	Similarly, for the fifth term,
	\begin{align*}
	& \hatE \Big[ \int^T_t \bA^{\hat \bu}(G \diamond g)(s, X^{t, \omega, \hat \bu} \otimes_s \Theta^{s, \hat \bu}) ds \Big| \cF_t \Big] \\
	& = \hatE \Big[ (G\diamond g) (T, X^{t, \omega, \hat \bu} \otimes_T \Theta^{T, \hat \bu})  \Big| \cF_t \Big] -  (G\diamond g) (t, X^{t, \omega, \hat \bu} \otimes_t \Theta^{t, \hat \bu}) \\
	& = \hatE \Big[ G(T, X^{t, \omega, \hat \bu}_T, X^{t, \omega, \hat \bu}_T)  \Big| \cF_t \Big] - G(t, \omega_t, \hatE[ X^{t, \omega, \hat \bu}_T | \cF_t]).
	\end{align*}
	
	By the boundary condition in \eqref{Eq:V}, we obtain the following:
	\begin{align*}
	V(t, \omega) = &  \hatE \Big[ V(T, X^{t, \omega, \hat \bu}_{T \wedge \cdot})\Big| \cF_t \Big] + \int^T_t \hatE\Big[ C(t, X^{t, \omega, \hat \bu}_t, r,  X^{t, \omega, \hat \bu}_\rdot, \hat \bu(r,  X^{t, \omega, \hat \bu}_\rdot)) \Big| \cF_t \Big] dr \\
	& - \hatE \Big[ F (T, X^{t, \omega, \hat \bu}_T, X^{t, \omega, \hat \bu}_{T \wedge \cdot})  \Big| \cF_t \Big] + \hatE\Big[ F(t, \omega_t, X^{t, \omega, \hat \bu}_{T \wedge \cdot}) \Big| \cF_t\Big] \\
	&- \hatE \Big[ G (T, X^{t, \omega, \hat \bu}_T, X^{t, \omega, \hat \bu}_T)  \Big| \cF_t \Big] + G(t, \omega_t, \hatE[ X^{t, \omega, \hat \bu}_T | \cF_t]) \\
	=&  \int^T_t \hatE\Big[ C(t, X^{t, \omega, \hat \bu}_t, r,  X^{t, \omega, \hat \bu}_\rdot, \hat \bu(r,  X^{t, \omega, \hat \bu}_\rdot)) \Big| \cF_t \Big] dr \\
	& + \hatE\Big[ F(t, \omega_t, X^{t, \omega, \hat \bu}_{T \wedge \cdot}) \Big| \cF_t\Big] + G(t, \omega_t, \hatE[ X^{t, \omega, \hat \bu}_T | \cF_t]) \\
	=& J(t, \omega; \hat \bu).
	\end{align*}
	In other words, we verify that $V$ is the value function with $\hat \bu$.
	
	Next, we show that $\hat \bu$ is indeed an equilibrium strategy under Definition \ref{Def:Equilibrium}. We apply the recursive relationship in Lemma \ref{Lem:Recursion} with $\bu_h$. Note that $\omega^{t+h} =X^{t, \omega, \bu_h} \otimes_{t+h} \Theta^{t + h, \bu_h}$, then
	\begin{align}\label{Eq:ReCost1}
	J(t, \omega; \bu_h) & =  \hE \Big[ J(t+h, X^{t, \omega, \bu_h} \otimes_{t+h} \Theta^{t + h, \bu_h}; \bu_h) \Big| \cF_t \Big]\\
	&\quad - \Big\{ \int^T_{t+h} \hE \Big[ c^{r, \bu_h}(t+h, X^{t, \omega, \bu_h} \otimes_{t+h} \Theta^{t + h, \bu_h}, t+h, X^{t, \omega, \bu_h}_{t+h}) \Big| \cF_t \Big] dr \cr
	&\qquad \;\;- \int^T_t \hE\Big[ c^{r, \bu_h}(t+h, X^{t, \omega, \bu_h} \otimes_{t+h} \Theta^{t + h, \bu_h}, t, \omega_t)\Big| \cF_t \Big] dr \Big\} \cr
	&\quad - \Big\{ \hE \Big[ f^{\bu_h}(t+h, X^{t, \omega, \bu_h} \otimes_{t+h} \Theta^{t + h, \bu_h}, t+h, X^{t, \omega, \bu_h}_{t+h}) \Big| \cF_t \Big] \cr
	&\qquad \;\; - \hE\Big[ f^{\bu_h}(t+h, X^{t, \omega, \bu_h} \otimes_{t+h} \Theta^{t + h, \bu_h}, t, \omega_t) \Big| \cF_t \Big] \Big\}\cr
	&\quad -\Big\{ \hE \Big[ G(t+h, X^{t, \omega, \bu_h}_{t+h}, g^{\bu_h}(t+h, X^{t, \omega, \bu_h} \otimes_{t+h} \Theta^{t + h, \bu_h})) \Big| \cF_t \Big]\cr
	&\qquad \;\; - G\big(t, \omega_t, \hE\big[ g^{\bu_h}(t+h, X^{t, \omega, \bu_h} \otimes_{t+h} \Theta^{t + h, \bu_h}) \big| \cF_t\big] \big) \Big\}. \nonumber
	\end{align}
	As $\bu_h = \hat \bu $ on $[t+h, T]$, conditional on $\cF_{t+h}$, $(X^{t, \omega, \bu_h}_s)_{s \in [t+h, T]}$ has the same distribution as $(X^{t, \omega, \hat{\bu}}_s)_{s \in [t+h, T]}$. Then,
	\begin{equation}
	J(t+h, X^{t, \omega, \bu_h} \otimes_{t+h} \Theta^{t + h, \bu_h}; \bu_h) = V(t+h, X^{t, \omega, \bu_h} \otimes_{t+h} \Theta^{t + h, \bu_h}).
	\end{equation}
	For $t+h \leq r \leq T$,
	\begin{align}
	& c^{r, \bu_h}(t+h, X^{t, \omega, \bu_h} \otimes_{t+h} \Theta^{t + h, \bu_h}, t+h, X^{t, \omega, \bu_h}_{t+h}) \cr
	= & \hE \Big[ C(t+h, X^{t, \omega, \bu_h}_{t+h}, r, X^{t, \omega, \bu_h}_\rdot, \hat \bu( r, X^{t, \omega, \bu_h}_\rdot)) \Big| \cF_{t+h} \Big] \\
	= & c^r(t+h, X^{t, \omega, \bu_h} \otimes_{t+h} \Theta^{t + h, \bu_h}, t+h, X^{t, \omega, \bu_h}_{t+h}). \nonumber
	\end{align}
	
	When $t \leq r \leq t+h$,
	\begin{equation}
	c^{r, \bu_h}(t+h, X^{t, \omega, \bu_h} \otimes_{t+h} \Theta^{t + h, \bu_h}, t, \omega_t) = C(t, \omega_t, r, X^{t, \omega, \bu_h}_\rdot, \bu( r, X^{t, \omega, \bu_h}_\rdot)).
	\end{equation}
	
	When $t+h \leq r \leq T$,
	\begin{align}
	& c^{r, \bu_h}(t+h, X^{t, \omega, \bu_h} \otimes_{t+h} \Theta^{t + h, \bu_h}, t, \omega_t) \cr
	& = \hE \Big[ C(t, \omega_t, r, X^{t, \omega, \bu_h}_\rdot, \hat \bu( r, X^{t, \omega, \bu_h}_\rdot)) \Big| \cF_{t+h} \Big] \cr
	& = c^r(t+h, X^{t, \omega, \bu_h} \otimes_{t+h} \Theta^{t + h, \bu_h}, t, \omega_t). 
	\end{align}
	
	Similarly,
	\begin{align}
	& f^{\bu_h}(t+h, X^{t, \omega, \bu_h} \otimes_{t+h} \Theta^{t + h, \bu_h}, t+h, X^{t, \omega, \bu_h}_{t+h}) \\
	& = \hE \Big[ F(t+h, X^{t, \omega, \bu_h}_{t+h}, X^{t, \omega, \bu_h}_{T \wedge \cdot}) \Big| \cF_{t+h} \Big] \cr
	& =  f(t+h, X^{t, \omega, \bu_h} \otimes_{t+h} \Theta^{t + h, \bu_h}, t+h, X^{t, \omega, \bu_h}_{t+h}), \nonumber
	\end{align}
	\begin{align}
	& f^{\bu_h}(t+h, X^{t, \omega, \bu_h} \otimes_{t+h} \Theta^{t + h, \bu_h}, t, \omega_t) =  f(t+h, X^{t, \omega, \bu_h} \otimes_{t+h} \Theta^{t + h, \bu_h}, t, \omega_t),
	\end{align}
	and
	\begin{align}
	g^{\bu_h}(t+h, X^{t, \omega, \bu_h} \otimes_{t+h} \Theta^{t + h, \bu_h}) & = \hE \big[X^{t, \omega, \bu_h}_T \big| \cF_{t+h} \big] \\
	& =  g(t+h, X^{t, \omega, \bu_h} \otimes_{t+h} \Theta^{t + h, \bu_h}). \nonumber
	\end{align}
	
	Therefore, (\ref{Eq:ReCost1}) is reduced to
	\begin{align}\label{Eq:ReCost2}
	J(t, \omega; \bu_h) & =  \hE \Big[ V(t+h, X^{t, \omega, \bu_h} \otimes_{t+h} \Theta^{t + h, \bu_h}) \Big| \cF_t \Big] \\
	&\quad - \Big\{ \int^T_{t+h} \hE \big[ c^r(t+h, X^{t, \omega, \bu_h} \otimes_{t+h} \Theta^{t + h, \bu_h}, t+h, X^{t, \omega, \bu_h}_{t+h}) \big| \cF_t \big] dr \cr
	&\qquad \;\;- \int^{t+h}_t \hE\big[  C(t, \omega_t, r, X^{t, \omega, \bu_h}_\rdot, \bu( r, X^{t, \omega, \bu_h}_\rdot)) \big| \cF_t \big] dr \cr
	&\qquad \;\;- \int^T_{t+h} \hE\big[ c^r(t+h, X^{t, \omega, \bu_h} \otimes_{t+h} \Theta^{t + h, \bu_h}, t, \omega_t) \big| \cF_t \big] dr \Big\} \cr
	&\quad - \Big\{ \hE \big[ f(t+h, X^{t, \omega, \bu_h} \otimes_{t+h} \Theta^{t + h, \bu_h}, t+h, X^{t, \omega, \bu_h}_{t+h}) \big| \cF_t \big] \cr
	&\qquad \;\; - \hE\big[  f(t+h, X^{t, \omega, \bu_h} \otimes_{t+h} \Theta^{t + h, \bu_h}, t, \omega_t) \big| \cF_t \big] \Big\}  \cr
	&\quad -\Big\{ \hE \big[ G(t+h, X^{t, \omega, \bu_h}_{t+h}, g(t+h, X^{t, \omega, \bu_h} \otimes_{t+h} \Theta^{t + h, \bu_h})) \big| \cF_t \big] \cr
	&\qquad \;\; - G\big(t, \omega_t, \hE\big[ g(t+h, X^{t, \omega, \bu_h} \otimes_{t+h} \Theta^{t + h, \bu_h}) \big| \cF_t\big] \big) \Big\}. \nonumber
	\end{align}
	
	Meanwhile, from the PHJB equation \eqref{Eq:V} for $V$, we apply the functional It\^o formula and Lemma \ref{Lem:TrueMartingale}. Note the right-continuity in time and continuity in $\omega$ assumption from (2)-(3) of Definition \ref{Def:U}:
	\begin{align}\label{Eq:ItoCost1}
	& \hE \Big[ V(t+h, X^{t, \omega, \bu_h} \otimes_{t+h} \Theta^{t + h, \bu_h}) \Big| \cF_t \Big] - V(t, \omega) \cr
	& + \hE \Big[ \int^{t+h}_t C(s, X^{t, \omega, \bu_h}_s, s, X^{t, \omega, \bu_h}_{s \wedge \cdot}, \bu( s, X^{t, \omega, \bu_h}_{s \wedge \cdot}) ) ds \Big| \cF_t \Big] \cr
	& - \hE \Big[ \int^{t+h}_t \int^T_s (\bA^{\bu_h} c^r)(s, X^{t, \omega, \bu_h} \otimes_s \Theta^{s, \bu_h}, s, X^{t, \omega, \bu_h}_s) dr ds \Big| \cF_t \Big] \cr 
	& + \hE \Big[ \int^{t+h}_t \int^T_s (\bA^{\bu_h} c^{t, \omega_t, r})(s, X^{t, \omega, \bu_h} \otimes_s \Theta^{s, \bu_h}) dr ds \Big| \cF_t \Big] \\
	& - \Big\{ \hE \Big[ f(t+h, X^{t, \omega, \bu_h} \otimes_{t+h} \Theta^{t + h, \bu_h}, t+h, X^{t, \omega, \bu_h}_{t+h}) \Big| \cF_t \Big] - f(t, \omega, t, \omega_t) \Big\} \cr
	& + \Big\{ \hE \Big[ f(t+h, X^{t, \omega, \bu_h} \otimes_{t+h} \Theta^{t + h, \bu_h}, t, \omega_t) \Big| \cF_t \Big] - f(t, \omega, t, \omega_t) \Big\} \cr 
	& - \Big\{ \hE \Big[ G(t+h, X^{t, \omega, \bu_h}_{t+h}, g(t+h, X^{t, \omega, \bu_h} \otimes_{t+h} \Theta^{t + h, \bu_h})) \Big| \cF_t \Big] - G(t, \omega_t, g(t, \omega)) \Big\} \cr
	& + \Big\{ G\Big(t, \omega_t, \hE \big[ g(t+h, X^{t, \omega, \bu_h} \otimes_{t+h} \Theta^{t + h, \bu_h}) \big| \cF_t \big] \Big) - G(t, \omega_t, g(t, \omega)) \Big\} \leq o(h). \nonumber
	\end{align}
	
	We further simplify the $C, c^r$, and $c^{t, \omega_t, r}$ terms in (\ref{Eq:ItoCost1}). By Fubini's theorem, we obtain the following:
	\begin{align}
	&\hE \Big[\int^{t+h}_t \int^T_s (\bA^{\bu_h} c^r)(s, X^{t, \omega, \bu_h} \otimes_s \Theta^{s, \bu_h}, s, X^{t, \omega, \bu_h}_s) dr ds \Big| \cF_t \Big] \cr
	&= \hE \Big[ \int^T_t \int^{r \wedge (t+h)}_t  (\bA^{\bu_h} c^r)(s, X^{t, \omega, \bu_h} \otimes_s \Theta^{s, \bu_h}, s, X^{t, \omega, \bu_h}_s) ds dr \Big| \cF_t \Big] \cr
	&= \int^{t+h}_t \hE \Big[ \int^r_t (\bA^{\bu_h} c^r)(s, X^{t, \omega, \bu_h} \otimes_s \Theta^{s, \bu_h}, s, X^{t, \omega, \bu_h}_s) ds \Big| \cF_t \Big] dr \label{Eq:crFubini} \\
	&\quad +  \int^T_{t+h}\hE \Big[ \int^{t+h}_t (\bA^{\bu_h} c^r)(s, X^{t, \omega, \bu_h} \otimes_s \Theta^{s, \bu_h}, s, X^{t, \omega, \bu_h}_s) ds \Big| \cF_t \Big] dr \cr
	&=\int^{t+h}_t \Big\{ \hE \big[ c^r(r, X^{t, \omega, \bu_h} \otimes_r \Theta^{r, \bu_h}, r, X^{t, \omega, \bu_h}_r) \big| \cF_t \big] - c^r(t, \omega, t, \omega_t) \Big\}dr \cr
	&\quad + \int^T_{t+h} \Big\{ \hE \big[ c^r(t+h, X^{t, \omega, \bu_h} \otimes_{t+h} \Theta^{t+h, \bu_h}, t+h, X^{t, \omega, \bu_h}_{t+h}) \big| \cF_t \big] - c^r(t, \omega, t, \omega_t) \Big\}dr. \nonumber
	\end{align}
	Similarly,
	\begin{align}
	&\hE \Big[ \int^{t+h}_t \int^T_s (\bA^{\bu_h} c^{t, \omega_t, r})(s, X^{t, \omega, \bu_h} \otimes_s \Theta^{s, \bu_h}) dr ds \Big| \cF_t \Big] \cr
	&=\int^{t+h}_t \Big\{ \hE \big[ c^r(r, X^{t, \omega, \bu_h} \otimes_r \Theta^{r, \bu_h}, t, \omega_t) \big| \cF_t \big] - c^r(t, \omega, t, \omega_t) \Big\}dr \label{Eq:ctrFubini} \\
	&\quad + \int^T_{t+h} \Big\{ \hE \big[ c^r(t+h, X^{t, \omega, \bu_h} \otimes_{t+h} \Theta^{t+h, \bu_h}, t, \omega_t) \big| \cF_t \big] - c^r(t, \omega, t, \omega_t) \Big\}dr \nonumber
	\end{align}
	and
	\begin{align}
	& \int^{t+h}_t \hE\big[ c^r(r, X^{t, \omega, \bu_h} \otimes_r \Theta^{r, \bu_h}, r, X^{t, \omega, \bu_h}_r) \big| \cF_t \big]dr \cr
	& = \int^{t+h}_t \hE\big[ \hE\big[ C(r, X^{t, \omega, \bu_h}_r, r, X^{t, \omega, \bu_h}_\rdot, \hat \bu(r, X^{t, \omega, \bu_h}_\rdot)) \big| \cF_r \big] \big| \cF_t \big]dr \\
	& = \int^{t+h}_t \hE\big[ C(r, X^{t, \omega, \bu_h}_r, r, X^{t, \omega, \bu_h}_\rdot, \hat \bu(r, X^{t, \omega, \bu_h}_\rdot)) \big| \cF_t \big]dr. \nonumber
	\end{align}
	\begin{align}
	& \int^{t+h}_t \hE\big[ c^r(r, X^{t, \omega, \bu_h} \otimes_r \Theta^{r, \bu_h}, t, \omega_t) \big| \cF_t \big]dr \nonumber \\
	& = \int^{t+h}_t \hE\big[ C(t, \omega_t, r, X^{t, \omega, \bu_h}_\rdot, \hat \bu(r, X^{t, \omega, \bu_h}_\rdot)) \big| \cF_t \big]dr.
	\end{align}
	
	By the Lebesgue differentiation theorem held under (3)-(4) of Definition \ref{Def:U}, we obtain the following:
	\begin{align}\label{Eq:CDiffepsilon}
	& \int^{t+h}_t \hE\big[ C(r, X^{t, \omega, \bu_h}_r, r, X^{t, \omega, \bu_h}_\rdot, \hat \bu(r, X^{t, \omega, \bu_h}_\rdot)) \big| \cF_t \big]dr \cr
	& = \int^{t+h}_t \hE\big[ C(t, \omega_t, r, X^{t, \omega, \bu_h}_\rdot, \hat \bu(r, X^{t, \omega, \bu_h}_\rdot)) \big| \cF_t \big]dr + o(h).
	\end{align}
	
	Thus, (\ref{Eq:crFubini}) and (\ref{Eq:ctrFubini}) are reduced to
	\begin{align}\label{Eq:ReduceC_T}
	& - \hE \Big[ \int^{t+h}_t \int^T_s (\bA^{\bu_h} c^r)(s, X^{t, \omega, \bu_h} \otimes_s \Theta^{s, \bu_h}, s, X^{t, \omega, \bu_h}_s) dr ds \Big| \cF_t \Big] \cr 
	& + \hE \Big[ \int^{t+h}_t \int^T_s (\bA^{\bu_h} c^{t, \omega_t, r})(s, X^{t, \omega, \bu_h} \otimes_s \Theta^{s, \bu_h}) dr ds \Big| \cF_t \Big] \cr
	= & -\int^T_{t+h} \hE \big[ c^r(t+h, X^{t, \omega, \bu_h} \otimes_{t+h} \Theta^{t+h, \bu_h}, t+h, X^{t, \omega, \bu_h}_{t+h}) \big| \cF_t \big]dr \\
	& + \int^T_{t+h} \hE \big[ c^r(t+h, X^{t, \omega, \bu_h} \otimes_{t+h} \Theta^{t+h, \bu_h}, t, \omega_t) \big| \cF_t \big] dr + o(h). \nonumber
	\end{align}
	Using the same argument in (\ref{Eq:CDiffepsilon}) yields the following:
	\begin{align}\label{Eq:ReduceC_th}
	& \hE \Big[ \int^{t+h}_t C(s, X^{t, \omega, \bu_h}_s, s, X^{t, \omega, \bu_h}_{s \wedge \cdot}, \bu( s, X^{t, \omega, \bu_h}_{s \wedge \cdot}) ) ds \Big| \cF_t \Big] \\
	& = \hE\big[ \int^{t+h}_t  C(t, \omega_t, r, X^{t, \omega, \bu_h}_\rdot, \bu( r, X^{t, \omega, \bu_h}_\rdot)) dr \big| \cF_t \big] + o(h). \nonumber
	\end{align}
	
	Combining (\ref{Eq:ItoCost1}), (\ref{Eq:ReduceC_T}), and (\ref{Eq:ReduceC_th}) yields
	\begin{align}\label{Eq:ItoCost2}
	& \hE \Big[ V(t+h, X^{t, \omega, \bu_h} \otimes_{t+h} \Theta^{t + h, \bu_h}) \Big| \cF_t \Big] - V(t, \omega) \cr
	& + \hE\big[ \int^{t+h}_t  C(t, \omega_t, r, X^{t, \omega, \bu_h}_\rdot, \bu( r, X^{t, \omega, \bu_h}_\rdot)) dr \big| \cF_t \big] \cr
	&-\int^T_{t+h} \hE \big[ c^r(t+h, X^{t, \omega, \bu_h} \otimes_{t+h} \Theta^{t+h, \bu_h}, t+h, X^{t, \omega, \bu_h}_{t+h}) \big| \cF_t \big]dr \cr
	& + \int^T_{t+h} \hE \big[ c^r(t+h, X^{t, \omega, \bu_h} \otimes_{t+h} \Theta^{t+h, \bu_h}, t, \omega_t) \big| \cF_t \big] dr \\
	& - \hE \Big[ f(t+h, X^{t, \omega, \bu_h} \otimes_{t+h} \Theta^{t + h, \bu_h}, t+h, X^{t, \omega, \bu_h}_{t+h}) \Big| \cF_t \Big] \cr
	& + \hE \Big[ f(t+h, X^{t, \omega, \bu_h} \otimes_{t+h} \Theta^{t + h, \bu_h}, t, \omega_t) \Big| \cF_t \Big] \cr 
	& - \hE \Big[ G(t+h, X^{t, \omega, \bu_h}_{t+h}, g(t+h, X^{t, \omega, \bu_h} \otimes_{t+h} \Theta^{t + h, \bu_h})) \Big| \cF_t \Big] \cr
	& + G\Big(t, \omega_t, \hE \big[ g(t+h, X^{t, \omega, \bu_h} \otimes_{t+h} \Theta^{t + h, \bu_h}) \big| \cF_t \big] \Big) \leq o(h). \nonumber
	\end{align}
	
	Compared with (\ref{Eq:ReCost2}), we ensure that
	\begin{equation}
	J(t, \omega; \bu_h) - V(t, \omega) \leq o(h).
	\end{equation}
	As $V(t, \omega) = J(t, \omega; \hat \bu)$ is shown previously,
	\begin{equation}
	J(t, \omega; \bu_h) - J(t, \omega; \hat \bu) \leq o(h),
	\end{equation}
	as desired.
\end{proof}

\subsubsection{Proof of Corollary \ref{Cor:Logpsi}}
\begin{proof}
	We first prove the results for $\psi$. Consider $ \tilde \psi = - \psi$. Then, $\tilde \psi $ satisfies
	\begin{equation}\label{Eq:tildepsi}
	\tilde \psi = K * H(\psi).
	\end{equation} 
	$w_*$ is the unique root of $H(w) = 0$ on $(-\infty, w_{max}]$ with $ w_{max}  \triangleq - \frac{H_1}{2H_2}$. $H(w)$ satisfies Assumption A.1 in \cite{gatheral2019affine}. Therefore, \citet[Theorem A.5 (a)]{gatheral2019affine} with $a(t) \equiv 0$ implies that (\ref{Eq:tildepsi}) has a unique global continuous solution and that
	\begin{equation}
	w_* < r_1(t) \leq \tilde \psi(t) < 0, \quad \forall \; t > 0.
	\end{equation}
	This gives the desired result for $\psi$.

	\eqref{Eq:LogMV_V2} is a linear VIE. The existence and uniqueness results of $V_2$ are thus determined using \citet[Theorem 1.2.3]{brunner2017volterra} or \citet[Equation (1.3), p.77]{gripenberg1990volterra}. \eqref{Eq:LogMV_V0} and \eqref{Eq:LogMVg0} are linear ordinary differential equations (ODEs).
\end{proof}

\subsubsection{Proof of Corollary \ref{Cor:LogMoment}}
\begin{proof}
	$M^*$ under $\hat{\bpi}$ is given by
	\begin{equation}
	M^*_t = M_0 \exp\Big[ \int^t_0 (\varUpsilon_s + \theta \hat{\pi}_s \nu_s - \frac{1}{2} \hat{\pi}^2_s \nu_s)ds + \int^t_0 \sqrt{\nu_s} \hat{\pi}_s d \hat{W}_{1s} \Big],
	\end{equation}
	where Brownian motion $\hat{W}_1$ is a part of the weak solution under $\hat{\bpi}$.
 	
	By Doob's maximal inequality and \citet[Lemma 7.3]{abi2017affine},
	\begin{align*}
	& \hatE \Big[ \sup_{ t \in [0, T]} | M^*_t |^p \Big] \\
	& \leq C \hatE \Big[ \sup_{ t \in [0, T]} \Big| e^{- \int^t_0 \theta \hat{\pi}_s \nu_s ds} \Big|^{2p} \Big] + C \hatE \Big[ \sup_{ t \in [0, T]} \Big|  \exp \Big(- \frac{1}{2} \int^t_0  \hat{\pi}^2_s \nu_s ds + \int^t_0 \sqrt{\nu_s} \hat{\pi}_s d \hat{W}_{1s}  \Big) \Big|^{2p}\Big] \\
	& \leq C \hatE \Big[ e^{2p \int^T_0 |\theta \hat{\pi}_s| \nu_s ds} \Big] + C \hatE \Big[ \exp \Big(- \int^T_0 p \hat{\pi}^2_s  \nu_s ds + \int^T_0 2p \hat{\pi}_s \sqrt{\nu_s} d \hat{W}_{1s} \Big) \Big].
	\end{align*}
	The first term is finite by assumption \eqref{Assum:LogExpMon} with constant $2 p |\theta| \sup_{t \in [0, T]} | \hat{\pi}_t |$. The second term is also finite. In fact, by H\"older's inequality and assumption \eqref{Assum:LogExpMon} with a constant $(8p^2 - 2p) \sup_{t \in [0, T]}  \hat{\pi}^2_t$,
	\begin{align*}
	& \hatE \Big[ \exp \Big(- \int^T_0 p \hat{\pi}^2_s  \nu_s ds + \int^T_0 2p \hat{\pi}_s \sqrt{\nu_s} d \hat{W}_{1s}  \Big) \Big] \\
	& \leq \Big\{ \hatE \Big[ e^{(8p^2 - 2p) \int^T_0 \hat{\pi}^2_s  \nu_s ds} \Big] \Big\}^{1/2} \Big\{ \hatE \Big[ \exp \Big(- 8 p^2 \int^T_0 \hat{\pi}^2_s  \nu_s ds + 4 p \int^T_0 \hat{\pi}_s \sqrt{\nu_s} d \hat{W}_{1s}  \Big) \Big] \Big\}^{1/2} \\
	& < \infty.
	\end{align*}
\end{proof}

\subsubsection{Proof of Corollary \ref{Cor:F}}
\begin{proof}
	When $\lambda = 0$, the claim is directly verifiable. We suppose that $\lambda \neq 0$. By \citet[Equation (3.4)]{mainardi2014some} or \citet[Appendix A.1]{el2019characteristic},
	\begin{align}
	\frac{F^{\alpha, \lambda}(T - t)}{\lambda} \underset{T-t \rightarrow \infty}{\sim} \frac{1}{\lambda} - \frac{1}{\lambda^2 \Gamma(1 - \alpha) (T - t)^\alpha}.
	\end{align}
	Note that
	\begin{align}
	\partial_\alpha \big[\Gamma(1 - \alpha) (T - t)^\alpha\big] = \big[\ln(T - t) - \psi( 1 - \alpha)\big] \Gamma(1 - \alpha) (T - t)^\alpha > 0,
	\end{align}
	where $\psi(z) = \frac{\Gamma'(z)}{\Gamma(z)}$ is the polygamma function and $\psi( 1 - \alpha) < 0$. We get the first part of the claim.
	
	Furthermore,
	\begin{align}
	\frac{F^{\alpha, \lambda}(T - t)}{\lambda} \underset{T-t \rightarrow 0^+}{\sim} \frac{(T - t)^\alpha}{\Gamma(\alpha + 1)}.
	\end{align}
	When $T - t$ is small, $(T - t)^\alpha$ is decreasing on $\alpha$. Finally, $\frac{(T - t)^\alpha}{\Gamma(\alpha + 1)}$ decreases with $\alpha$.
\end{proof}


\end{document}